%% file: AAMAS_2026_camera_ready.tex
\documentclass[sigconf]{aamas}

\usepackage{balance} 

\usepackage{multirow}
\usepackage[export]{adjustbox}
\usepackage{graphbox,graphicx}

\usepackage{amsmath}
\usepackage{mathtools}
\usepackage{amsthm}
\usepackage{graphicx}
\usepackage{subfigure,verbatim}
\usepackage{wrapfig}

\def\EQi{\overline{EQ}^i}
\RequirePackage{algorithm}
\RequirePackage{algorithmic}

\theoremstyle{plain}
\newtheorem{theorem}{Theorem}[section]

\theoremstyle{definition}
\newtheorem{definition}{Definition}[section]

\theoremstyle{remark}

\doi{OBIA4073}

\makeatletter
\gdef\@copyrightpermission{
  \begin{minipage}{0.2\columnwidth}
   \href{https://creativecommons.org/licenses/by/4.0/}{\includegraphics[width=0.90\textwidth]{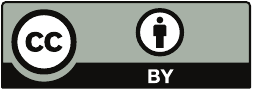}}
  \end{minipage}\hfill
  \begin{minipage}{0.8\columnwidth}
   \href{https://creativecommons.org/licenses/by/4.0/}{This work is licensed under a Creative Commons Attribution International 4.0 License.}
  \end{minipage}
  \vspace{5pt}
}
\makeatother

\setcopyright{ifaamas}
\acmConference[AAMAS '26]{Proc.\@ of the 25th International Conference
on Autonomous Agents and Multiagent Systems (AAMAS 2026)}{May 25 -- 29, 2026}
{Paphos, Cyprus}{C.~Amato, L.~Dennis, V.~Mascardi, J.~Thangarajah (eds.)}
\copyrightyear{2026}
\acmYear{2026}
\acmDOI{}
\acmPrice{}
\acmISBN{}

\title[Generalized Per-Agent Advantage Estimation for Multi-Agent Policy Optimization]{Generalized Per-Agent Advantage Estimation for Multi-Agent Policy Optimization}

\author{Seongmin Kim}
\orcid{0009-0008-3032-8943}
\affiliation{
  \institution{KAIST}
  \city{Daejeon}
  \country{Korea}}
\email{seongmin.kim@kaist.ac.kr}

\author{Giseung Park}
\orcid{0000-0002-9737-4142}
\affiliation{
  \institution{University of Toronto}
  \city{Toronto, Ontario}
  \country{Canada}}
\email{giseung.park@utoronto.ca}

\author{Woojun Kim}
\orcid{0000-0001-8886-8848}
\affiliation{
  \institution{Carnegie Mellon University}
  \city{Pittsburgh, Pennsylvania}
  \country{USA}}
\email{woojunk@andrew.cmu.edu}

\author{Jiwon Jeon}
\orcid{0009-0005-4255-5412}
\affiliation{
  \institution{KAIST}
  \city{Daejeon}
  \country{Korea}}
\email{jiwon.jeon@kaist.ac.kr}

\author{Seungyul Han}
\orcid{0000-0002-5376-1976}
\affiliation{
  \institution{UNIST}
  \city{Ulsan}
  \country{Korea}}
\email{syhan@unist.ac.kr}
\authornote{Corresponding author}

\author{Youngchul Sung}
\orcid{0000-0003-4536-6690}
\affiliation{
  \institution{KAIST}
  \city{Daejeon}
  \country{Korea}}
\email{ycsung@kaist.ac.kr}

\begin{abstract}
In this paper, we propose a novel framework for multi-agent reinforcement learning that enhances sample efficiency and coordination through accurate per-agent advantage estimation. The core of our approach is Generalized Per-Agent Advantage Estimator (GPAE), which employs a per-agent value iteration operator to compute precise per-agent advantages. This operator enables stable off-policy learning by indirectly estimating values via action probabilities, eliminating the need for direct $Q$-function estimation. To further refine estimation, we introduce a double-truncated importance sampling ratio scheme. This scheme improves credit assignment for off-policy trajectories by balancing sensitivity to the agent’s own policy changes with robustness to non-stationarity from other agents. Experiments on benchmarks demonstrate that our approach outperforms existing approaches, excelling in coordination and sample efficiency for complex scenarios.
\end{abstract}

\keywords{Multi-Agent Reinforcement Learning; Multi-Agent Credit Assignment Problem; Policy Optimization}

\newcommand{\BibTeX}{\rm B\kern-.05em{\sc i\kern-.025em b}\kern-.08em\TeX}

\begin{document}


\pagestyle{fancy}
\fancyhead{}


\maketitle 


\input{1.introduction}
\input{2.background}
\input{3.motivation}
\input{4.method}

\input{5.experiments}
\input{6.relatedwork}

\input{7.discussion}

\begin{acks}
This work was supported by Center for Applied Research in Artificial Intelligence (CARAI) Grant funded by Defense Acquisition Program Administration (DAPA) and Agency for Defense Development (ADD) of Republic of Korea (UD230017TD, 50\%), and the National Research Foundation of Korea(NRF) grant funded by the Korea government(MSIT) (No. RS-2025-00557589, Generative Model Based Efficient Reinforcement Learning Algorithms for Multi-modal Expansion in Generalized Environments, 50\%).
\end{acks}



\bibliographystyle{ACM-Reference-Format} 
\bibliography{sample}

\newpage
\appendix
\input{appendix}


\end{document}

%% file: 1.introduction.tex
\section{Introduction}

Multi-agent reinforcement learning (MARL) addresses cooperative tasks where agents learn policies to maximize shared rewards through environmental interactions \cite{marl_survey}. These tasks face challenges from coordination requirements and partial observability. While independent learning approaches \cite{indepmarl} often suffer from non-stationarity issues, the centralized training and decentralized execution (CTDE) paradigm \cite{ctde, maddpg} enables agents to utilize global information during training while maintaining local observation dependency during execution.

Within CTDE, various algorithms have emerged, including value-based methods like VDN and QMIX \cite{vdn, QMIX} and multi-agent policy gradient~(MAPG)-based methods such as COMA and MAPPO \cite{coma, mappo}. Among these, MAPPO, an extension of Proximal Policy Optimization (PPO, \citet{ppo}), has gained popularity for its strong performance. However, MAPPO's reliance on Generalized Advantage Estimator (GAE) \cite{gae} introduces limitations: (1) reduced sample efficiency due to on-policy constraints, and (2) inadequate credit assignment from estimating identical advantages for all agents. While techniques like V-traces enhance sample efficiency in single-agent settings, their application to multi-agent systems has not been theoretically or experimentally explored. The second problem is that MAPPO updates policies based on centralized critic with GAE, which estimates the same advantage for all individual agents' actions in the same timestep. This approach is inadequate for solving the multi-agent credit assignment problem, which is a significant challenge in MARL where the contribution of individual agents to the global reward must be accurately assessed.

The \textit{multi-agent credit assignment problem} significantly impacts MARL performance \cite{lica, mappg}. Value-based methods address this with value function factorization \cite{QMIX,qtran}, but lack theoretical guarantees and often fail to provide accurate credit assignment in complex tasks. In MAPG area, COMA employ counterfactual baselines but suffer from variance and scalability issues. DAE \cite{dae} offers a GAE-based approach but lacks policy-invariance \cite{ng99}, potentially leading to sub-optimal convergence.

Motivated by these limitations, we investigate an advantage estimator for ensuring precise per-agent credit assignment at $n$-step horizons, remains policy-invariant, and enables stable off-policy reuse under MAPG within CTDE.

The key contributions of our work are summarized as follows:

\begin{itemize}
\item We propose \emph{Generalized Per-Agent Advantage Estimator}, \\which provides explicit per-agent credit signals under CTDE and unifies on-policy learning and off-policy reuse within a single estimator.
\item We establish contraction of the per-agent operator and show policy invariance of GPAE theoretically.
\item We propose a \emph{double-truncated} importance sampling scheme tailored to multi-agent coupling, improving stability and maintaining credit fidelity compared to single-truncation or individual-only truncation.
\item We demonstrate the practical benefits of our method through extensive experiments, showing significant performance improvements in MARL tasks.
\end{itemize}

%% file: 2.background.tex
\section{Background}
\label{sec:background}
\subsection{Dec-POMDP and CTDE Setup}
A multi-agent system can be modeled as a Decentralized Partially Observable Markov Decision Process (Dec-POMDP) \cite{Dec-POMDP}, defined by the tuple $\langle \mathcal{S}, \mathcal{A}, P, R, \mathcal{Z}, O, n, \gamma \rangle$, where $\mathcal{S}$ is the global state space, $\mathcal{A} = \prod_{i=1}^n \mathcal{A}^i$ is the joint action space of $n$ agents, with $\mathcal{A}^i$ being the action space of agent $i$, $P(s'|s, \boldsymbol{a})$ is the state transition probability given the current state $s$ and joint action $\boldsymbol{a} = \{a^1, \dots, a^n\}$, $R(s, \boldsymbol{a})$ is the shared reward function, $\mathcal{Z}$ is the observation space, and $O(o^i|s, i)$ specifies the observation probabilities for each agent $i$, and $\gamma \in [0, 1)$ is the discount factor.

The goal of MARL is to learn a policy  $\boldsymbol{\pi} = \prod_{i=1}^n \pi^i(a^i | o^i)$  for each agent that maximizes the expected cumulative reward  $J(\boldsymbol{\pi}) = \mathbb{E}_{\boldsymbol{\pi}} \left[\sum_{t=0}^\infty \gamma^t R(s_t, \boldsymbol{a}_t)\right]$. The CTDE paradigm facilitates MARL by allowing agents to use global information including full state  $s$ during training, while restricting each agent’s policy to depend only on local observations  $o_i$  during execution. Note that our theoretical analysis assumes full observability during training, which is common in many CTDE-based MARL studies \citep{ob,more} to enable rigorous analysis. Extending theory to partially observable settings remains challenging, as solving Dec-POMDPs is NEXP-complete and requires super-exponential time in the worst case~\citep{bernstein2002complexity, marl2}. 

\subsection{Advantage Estimation and Credit Assignment}
In MAPG methods, accurate advantage estimation plays a central role in stabilizing policy updates and enabling effective credit assignment. Among existing estimators, GAE~\cite{gae} is widely adopted for its ability to balance bias and variance through an exponentially-weighted sum of temporal difference (TD) errors with bias-variance trade-off parameter $\lambda$. 

MAPPO~\cite{mappo}, a strong baseline in cooperative MARL, extends PPO to the multi-agent setting under CTDE. It applies GAE as the default advantage estimator for each agent. However, MAPPO treats all agents identically by assuming a shared team advantage:
\begin{equation}
\label{eq:mappo_advantage}
    A^i(s, \mathbf{a}) = A^\text{global}(s, \mathbf{a}) \quad \forall i,
\end{equation}
which ignores the agent-specific impact on the joint outcome and thus limits the fidelity of credit assignment. To address this limitation, COMA~\cite{coma} proposes a per-agent advantage formulation using a counterfactual baseline that marginalizes out the individual agent’s contribution. COMA computes a per-agent advantage using a counterfactual baseline as:
\begin{equation}
    A^{i,\text{COMA}}(s, \mathbf{a}) = Q(s, \mathbf{a}) - \sum_{a^i} \pi_i(a^i | o^i) Q(s, (a^i, \mathbf{a}^{-i})),
\end{equation}
where $\mathbf{a}^{-i}$ represents the actions of all agents except $i$. This formulation enables per-agent credit assignment by marginalizing over agent $i$’s actions.

More recently, Difference Advantage Estimation (DAE)~\cite{dae} was developed as a GAE-based alternative by incorporating potential-based difference rewards to address multi-step credit assignment:
\begin{equation}
     A^{i,\text{DAE}}_t = \sum_{l=0}^\infty (\gamma \lambda)^l \delta^i_{t+l},
\end{equation}
where $\delta^i_{t+l} = r_{t+l}- \beta^{l+1}\mathbb{E}_{a^i}[r_{t+l}]+\gamma V_{t+l+1}-V_{t+l}$ with bias-credit assignment control parameter $\beta$.

%% file: 3.motivation.tex
\section{Motivation}
\label{sec:motivation}

MARL poses unique challenges, among which the multi-agent credit assignment problem is a key bottleneck. Accurately assigning credit to individual agents based on their contributions to the global reward is critical for effective learning. However, existing methods in MAPG domain exhibit fundamental limitations.

Building on the background and Eq.~\eqref{eq:mappo_advantage}, we note that MAPPO’s reliance on a single advantage function for every agent can mask each agent’s distinct contributions. 
COMA advantage relies on TD$(0)$ estimation, which limits its ability to leverage $n$-step estimation and $n$-step credit-assignment.
DAE~\cite{dae} extends GAE with a potential-based difference reward to address credit assignment. While DAE allows a $n$-step credit assignment, it is not policy invariant due to the estimated reward bias term, which can lead to suboptimal results. Additionally, DAE suffers from instability in explicit reward estimation.

Beyond credit assignment, sample inefficiency further complicates learning in MAPG. Importance sampling offers a theoretical means to utilize off-policy data, but high variance under MARL often reduces its practical utility. While single-agent methods like V-trace \cite{vtrace} control variance by truncating importance sampling ratios, their effectiveness does not readily transfer to multi-agent systems because each agent’s behavior interdepends on others. 

We thus seek an advantage estimator that (i) delivers per-agent, $n$-step credit signals under CTDE, (ii) remains policy-invariant to avoid bias, and (iii) is compatible with off-policy data. Table~\ref{tab:advantage_comparison} summarizes how existing MAPG estimators fall short with respect to these criteria.

\begin{table}[!h]
\small
\caption{Comparison of MAPG advantage estimators}
    \begin{tabular}{l|c|c|c}
        \toprule
        \multirow{2}{*}{Advantage} & Credit & \multirow{2}{*}{Off-Policy} & Additional \\
        & Assignment && Bias($\lambda=1$)\\
        \midrule
        GAE & No & No & No \\
        COMA & 1-step & No & No \\
        DAE & \(n\)-step & No & Yes \\
        \textbf{GPAE (Ours)} & \textbf{\(\boldsymbol{n}\)-step} & \textbf{Yes} & \textbf{No} \\
        \bottomrule
    \end{tabular}
    \label{tab:advantage_comparison}
\end{table}

Guided by these requirements, we introduce GPAE, which provides $n$-step, per-agent credit signals while preserving policy invariance and enabling off-policy reuse. 

\subsection{Credit Assignment in the Presence of Anomalous Behavior}
\label{subsec:motivation_exp}

Before detailing the methodology, we provide some preliminary experimental results in this section to reinforce our motivation. Here, we examine how various advantage estimators respond when an agent deviates from cooperation.

For the experiment, we use the SMAX~\cite{jaxmarl} environment, a JAX-implemented version of SMAC~\cite{smac}, which is a widely used benchmark for MARL algorithms. In SMAX, agents collaborate to achieve a common goal, typically involving cooperation in battle against enemy. In the SMAX-3m task, three agents cooperate to win a battle, requiring careful coordination and credit assignment.

To demonstrate how  methods handle the multi-agent credit assignment, we modify the SMAX-3m task by introducing  5\% probability that one agent performs a \emph{stop} action. This irregularity disrupts coordination, underscoring the need for robust credit assignment and stable advantage estimation. 

We focus on whether each advantage estimator can accurately penalize the “misbehaving” agent’s suboptimal action and thus guide the overall team to learn more robust behaviors. In particular, we track the difference in per-agent advantage at the timestep where the anomalous action occurs. Formally, we measure:
\[
    \Delta A = \frac{1}{N-1}\sum_{j \neq i} \hat{A}_t^j - \hat{A}_t^i,
\]
where \( \hat{A}_t^i \) is the per-agent advantage for the misbehaving agent $i$ at timestep \( t \), and \( \hat{A}_t^j \) are the advantages of the other \( N-1 \) agents. A higher \(\Delta A\) indicates that the advantage more strongly penalizes the anomalous agent relative to agents acting optimally, implying better credit assignment. 

\begin{figure}[h]
\centering
    
    \subfigure[Advantage Gap $\Delta A$]{
    \centering\includegraphics[clip, trim=0 0 360 0, width=0.53\columnwidth]{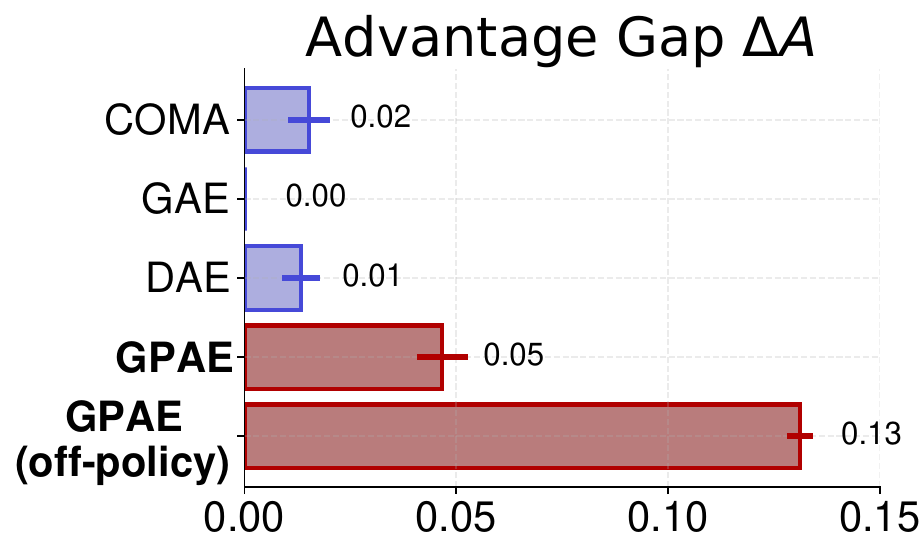}
        \label{subfig:motivation_advantage_gap}
    }
    \subfigure[Performance]{
    \centering\includegraphics[clip, trim=450 0 0 0, width=0.42\columnwidth]{figures/adv_motivation/gap_and_winrate_barh.pdf}
        \label{subfig:motivation_performance}
    }
\caption{\subref{subfig:motivation_advantage_gap}~Advantage gap \(\Delta A\) indicating how strongly each method penalizes the anomalous agent. Higher \(\Delta A\) means more effective credit assignment against the “stop” action. 
\subref{subfig:motivation_performance}~Average win rates, illustrating overall learning stability and performance.}
\Description{
(a) Advantage gap \(\Delta A\) indicating how strongly each method penalizes the anomalous agent. Higher \(\Delta A\) means more effective credit assignment against the “stop” action. 
(b) Average win rates, illustrating overall learning stability and performance.
}
\label{fig:motivation} 
\end{figure}

Figure~\ref{subfig:motivation_advantage_gap} compares the \emph{advantage gap} \(\Delta A\) of several advantage estimators. Our proposed GPAE yields the highest \(\Delta A\) (particularly in its off-policy form), demonstrating its ability to measure credit assignment effectively, while COMA and DAE achieves a nonzero gap but are less effective in credit assignment. Note that GAE using common advantage for all agents cannot extract advantage gap. 
Figure~\ref{subfig:motivation_performance} shows the average win rate, assessing the overall performance. For this figure, we used the same PPO-style policy optimization only with advantage changed.

Note that COMA performs significantly worse than other advantage estimation methods even though it extracts larger advantage gap than GAE and DAE. The main difference of COMA from other methods is that COMA uses one-step estimation for advantage, which is prone to large errors,  whereas all others use $n$-step advantage estimation. Hence, we recognize that the accuracy of estimation by $n$-step estimation as well as credit assignment is important for performance. It is seen that under the $n$-step estimation framework, the advantage gap, i.e., credit assignment indeed translates to the overall performance and its impact saturates as the advantage gap increases, following our intuition. 

These observations motivate our approach to unify on-policy and off-policy advantage estimation around $n$-step signals while preserving per-agent credit assignment. In the following sections, we introduce GPAE and the associated DT-ISR weighting, along with theoretical properties and empirical validation.

%% file: 4.method.tex
\section{Methodology}
\label{sec:gpae}

In this section, we introduce two contributions of our work. First, based on our value iteration operator, we propose a novel advantage estimator, called \textbf{G}eneralized \textbf{P}er-agent \textbf{A}dvantage \textbf{E}stimator (\textbf{GPAE}). Then we extend the method to the off-policy case with theoretical insights, and propose a \textit{double-truncated importance sampling} for the off-policy correction term in the weighted advantage.

\subsection{Generalized Per-Agent Advantage Estimation}

To begin, we consider the counterfactual value function \\$\mathbb{E}_{a^i \sim \pi^i}[Q(s, a^i, \boldsymbol{a}^{-i})]$,
where $Q$ denotes the joint action-value function estimate $Q^{\boldsymbol{\pi}}$, and $\boldsymbol{a}^{-i}$ represents the actions of all agents except agent $i$. For convenience, we define:
\[
\overline{EQ}^i := \mathbb{E}_{a^i \sim \pi^i}[Q(s, a^i, \boldsymbol{a}^{-i})].
\]
Here, $\overline{EQ}^i$ serves as an estimate of the \emph{per-agent state-value function}, which averages out only the action of agent $i$ under its policy $\pi^i$ and thus has $(s,\boldsymbol{a}^{-i})$ as its argument, in contrast to the full state-value function $V(s) = \mathbb{E}_{\boldsymbol{a} \sim \boldsymbol{\pi}}[Q(s, \boldsymbol{a})]$ having only $s$ as the argument.

To build an efficient way to update $\overline{EQ}^i$, we present a new per-agent value iteration operator $\mathcal{R}^i$, a specifically designed for MARL to facilitate effective credit assignment by focusing on per-agent value estimation. For now, it only considers on-policy learning, thus we denote the operator with $\mathcal{R}_{\text{on}}^i$.

\begin{definition}
    \label{def:Ri_on}
The value iteration operator $\mathcal{R}_{\text{on}}^i$ for $\overline{EQ}^i$ is defined as 
\begin{multline}
\label{eq:Ri_on}
\mathcal{R}_{\text{on}}^i\overline{EQ}^i := \overline{EQ}^i + 
\mathbb{E}_{a_0^i \sim \pi^i}\Biggl[\mathbb{E}_{\boldsymbol{\pi}}\biggl[\sum_{t \geq 0} (\gamma\lambda)^t \big(r_t + \gamma \overline{EQ}_{t+1}^i - \overline{EQ}_t^i\big) \\
\biggr| (s_0, a_0^i, \boldsymbol{a}_0^{-i})=(s,a^i,\boldsymbol{a}^{-i})\biggr]\Biggr],
\end{multline}
where  $\overline{EQ}^i_t := \mathbb{E}_{a^i_t\sim \pi^i}[Q(s_t,a_t^i,\boldsymbol{a}^{-i}_t)]$ and $\lambda \in (0,1]$ is a bias-variance trade-off parameter.
\end{definition}

$\mathcal{R}_{\text{on}}^i$ introduces the ability to compute per-agent state-value estimates efficiently, incorporating partial averaging over the actions of the agent $i$ while retaining dependencies on the actions of other agents, $\boldsymbol{a}^{-i}$. This structure enables $n$-step credit assignment from the value estimation step, in contrast to existing methods that consider only 1-step credit assignment and value estimation separately. Our proposed method provides a more stable and efficient learning method with estimation by connecting advantage and value in a novel way.

The operator $\mathcal{R}_{\text{on}}^i$ possesses desirable theoretical properties. We establish the contraction property of $\mathcal{R}^i$, which ensures convergence to a unique fixed point:
\begin{theorem}
\label{thm:contraction_on}
$\mathcal{R}_{\text{on}}^i$ is a $\gamma$-contraction, which means that $\overline{EQ}^i$ converges to the unique fixed point. If $~\lambda = 1$, the fixed point is \\$\mathbb{E}_{a^i \sim \pi^i}[Q^{\boldsymbol{\pi}}(s, a^i, \boldsymbol{a}^{-i})]$. 
\end{theorem}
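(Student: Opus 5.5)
Write the operator as acting on a generic bounded function $U(s,\boldsymbol{a}^{-i})$ standing for $\overline{EQ}^i$, and let $U_t := U(s_t,\boldsymbol{a}^{-i}_t)$. The plan is to rearrange the telescoping sum in Definition~\ref{def:Ri_on} so that $U$ appears only linearly with explicit discount weights, in the style of the Retrace / TD$(\lambda)$ contraction arguments. The key observation is that the initial term $\overline{EQ}^i$ cancels the $t=0$ term $-\mathbb{E}_{a_0^i\sim\pi^i}[U_0]=-U(s,\boldsymbol{a}^{-i})$, since $U_0$ does not depend on $a_0^i$ and the outer expectation only averages $a_0^i$. Expanding $\sum_t (\gamma\lambda)^t(r_t+\gamma U_{t+1}-U_t)$ and regrouping gives
\begin{equation*}
\mathcal{R}_{\text{on}}^i U(s,\boldsymbol{a}^{-i}) = \mathbb{E}\Bigl[\sum_{t\ge0}(\gamma\lambda)^t r_t + \sum_{t\ge1}\gamma^t\lambda^{t-1}(1-\lambda)\,U_t\Bigr],
\end{equation*}
where the expectation is over $a_0^i\sim\pi^i$ and the subsequent trajectory under $\boldsymbol{\pi}$. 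For the interchange of sum and expectation I use boundedness of $r$ and $U$, together with $\gamma\lambda<1$.

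\textbf{Contraction.} For two functions $U,U'$, the reward terms cancel in the difference, leaving
\begin{equation*}
|\mathcal{R}_{\text{on}}^iU-\mathcal{R}_{\text{on}}^iU'| \le \sum_{t\ge1}\gamma^t\lambda^{t-1}(1-\lambda)\,\|U-U'\|_\infty = \frac{\gamma(1-\lambda)}{1-\gamma\lambda}\|U-U'\|_\infty .
\end{equation*}
It remains to check $\frac{\gamma(1-\lambda)}{1-\gamma\lambda}\le\gamma$, which is equivalent to $1-\lambda\le 1-\gamma\lambda$, true since $\gamma\le 1$. So $\mathcal{R}_{\text{on}}^i$ is a $\gamma$-contraction in sup norm, and Banach's fixed-point theorem gives a unique fixed point with iterates converging to it.

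\textbf{Case $\lambda=1$.} The $U$-dependent part vanishes, so
\begin{equation*}
\mathcal{R}_{\text{on}}^iU(s,\boldsymbol{a}^{-i})=\mathbb{E}_{a^i\sim\pi^i}\Bigl[\mathbb{E}_{\boldsymbol{\pi}}\bigl[\textstyle\sum_t\gamma^t r_t \bigm| s,a^i,\boldsymbol{a}^{-i}\bigr]\Bigr]=\mathbb{E}_{a^i\sim\pi^i}\bigl[Q^{\boldsymbol{\pi}}(s,a^i,\boldsymbol{a}^{-i})\bigr],
\end{equation*}
which is a constant in $U$, hence the fixed point. More generally, one can verify directly that $\mathbb{E}_{a^i\sim\pi^i}[Q^{\boldsymbol{\pi}}]$ is the fixed point.

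\textbf{Main obstacle.} The delicate point is the bookkeeping of the expectations. I must confirm that $\mathbb{E}_{\boldsymbol{\pi}}$ over later steps resamples all agents' actions, including agent $i$'s, at $t\ge1$, while $U_t$ already averages $a_t^i$. This is what makes the regrouped weights legitimate. I also need the $t=0$ cancellation, which requires $U_0$ to be independent of $a_0^i$ — exactly the structure of $\overline{EQ}^i$.
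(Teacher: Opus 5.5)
Your proof is correct and follows essentially the same route as the paper's: you shift indices to write the operator as $\mathbb{E}\bigl[\sum_t(\gamma\lambda)^t\bigl(r_t+\gamma(1-\lambda)\overline{EQ}^i_{t+1}\bigr)\bigr]$, bound the difference of two outputs by $\frac{\gamma(1-\lambda)}{1-\gamma\lambda}\|\Delta\overline{EQ}^i\|_\infty\le\gamma\|\Delta\overline{EQ}^i\|_\infty$, and then, for $\lambda=1$, observe that the operator maps every input to $\mathbb{E}_{a^i\sim\pi^i}[Q^{\boldsymbol{\pi}}(s,a^i,\boldsymbol{a}^{-i})]$. Your handling of absolute values in the sup-norm bound, and your explicit note that the $t=0$ cancellation needs $\overline{EQ}^i_0$ to be independent of $a_0^i$, are in fact slightly more careful than the paper's write-up.
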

\begin{proof}[Proof Sketch]
Rewriting $R^i_{\text{on}}$ by shifting the TD-error indices shows that the
difference between two inputs $\overline{EQ}^{i,1}$ and $\overline{EQ}^{i,2}$ is scaled by a
geometric factor $(\gamma\lambda)^t$.  
Taking the sup-norm gives an overall
contraction of at most~$\gamma$, so $R^i_{\text{on}}$ is a $\gamma$-contraction.
For $\lambda = 1$, the TD terms telescope, leaving
$[Q^{\boldsymbol{\pi}}(s,\boldsymbol{a}) - \mathbb{E}_{a^i \sim \pi^i}[Q(s, a^i, \boldsymbol{a}^{-i})]$, and the fixed point
becomes $\mathbb{E}_{a^i \sim \pi^i}[Q^{\boldsymbol{\pi}}(s, a^i, \boldsymbol{a}^{-i})]$.
\end{proof} 

Theorem \ref{thm:contraction_on} states that $\mathbb{E}_{a^i \sim \pi^i}[Q^{\boldsymbol{\pi}}(s, a^i, \boldsymbol{a}^{-i})]$ can be estimated directly from the policy distribution of agent $i$ and the actions of the remaining agents, without the dual process of estimating the true joint $Q$ value and marginalizing over the policy at $Q$.

From the definition \ref{def:Ri_on} to compute the $n$-step advantage with per-agent TD error without compromising the structure of the operator $\mathcal{R}_{\text{on}}^i$, the following theorem is presented:
\begin{theorem}
\label{thm:invariance_on}
The inner conditional expectation term of $\mathcal{R}^i$, i.e., 
$\mathbb{E}_{\boldsymbol{\pi}}[\sum_{t \geq 0} (\gamma\lambda)^t (r_t + \gamma \overline{EQ}_{t+1}^i - \overline{EQ}_t^i)]$ given  $(s_0, a_0^i, \boldsymbol{a}_0^{-i})=(s,a^i,\boldsymbol{a}^{-i})$
reduces to $Q^{\boldsymbol{\pi}}(s, \boldsymbol{a}) - \mathbb{E}_{a^i \sim \pi^i}[Q(s, a^i, \boldsymbol{a}^{-i})]$, ensuring policy invariance with $\lambda = 1$. 
\end{theorem}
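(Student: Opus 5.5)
With $\lambda=1$, the plan is to compute the conditional expectation directly by telescoping, and then to read off policy invariance from the form of what remains. Fix $(s_0,a_0^i,\boldsymbol{a}_0^{-i})=(s,a^i,\boldsymbol{a}^{-i})$. For a finite horizon $T$, the partial sum collapses:
\begin{equation*}
\sum_{t=0}^{T}\gamma^t\big(r_t+\gamma\overline{EQ}^i_{t+1}-\overline{EQ}^i_t\big)=\sum_{t=0}^{T}\gamma^t r_t+\gamma^{T+1}\overline{EQ}^i_{T+1}-\overline{EQ}^i_0 .
\end{equation*}
The term $\overline{EQ}^i_0=\mathbb{E}_{a^i\sim\pi^i}[Q(s,a^i,\boldsymbol{a}^{-i})]$ is deterministic given the conditioning, so it passes outside the expectation unchanged.

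Next I take $\mathbb{E}_{\boldsymbol{\pi}}[\cdot\mid s_0,\boldsymbol{a}_0]$ and send $T\to\infty$. The first piece $\sum_{t\le T}\gamma^t r_t$ converges in expectation to $Q^{\boldsymbol{\pi}}(s,\boldsymbol{a})$ by the definition of $Q^{\boldsymbol{\pi}}$, since the conditioning fixes the full joint action $\boldsymbol{a}_0$ and later actions follow $\boldsymbol{\pi}$. The boundary term $\gamma^{T+1}\mathbb{E}[\overline{EQ}^i_{T+1}]$ vanishes because $\gamma<1$. Combining the two gives exactly $Q^{\boldsymbol{\pi}}(s,\boldsymbol{a})-\mathbb{E}_{a^i\sim\pi^i}[Q(s,a^i,\boldsymbol{a}^{-i})]$, as claimed.

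The main technical obstacle is justifying the exchange of the infinite sum and the expectation, together with the vanishing tail. I would handle it with the standing assumption that rewards are bounded and that $Q$ (hence $\overline{EQ}^i$) is a bounded estimate, which is also the setting in which Theorem~\ref{thm:contraction_on} uses the sup-norm. Then the partial sums are dominated by $\sum_t\gamma^t(\|r\|_\infty+2\|Q\|_\infty)<\infty$, and dominated convergence applies. The tail is bounded by $\gamma^{T+1}\|Q\|_\infty\to 0$.

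Policy invariance then follows from the structure of the result. The subtracted term $\mathbb{E}_{a^i\sim\pi^i}[Q(s,a^i,\boldsymbol{a}^{-i})]$ does not depend on agent $i$'s own action $a^i$; it is a baseline in $(s,\boldsymbol{a}^{-i})$ only. Hence in the policy gradient for agent $i$, $\mathbb{E}_{a^i\sim\pi^i}[\nabla\log\pi^i(a^i|o^i)\,b(s,\boldsymbol{a}^{-i})]=b(s,\boldsymbol{a}^{-i})\nabla\sum_{a^i}\pi^i(a^i|o^i)=0$. So the estimator yields the same expected gradient as using $Q^{\boldsymbol{\pi}}(s,\boldsymbol{a})$ itself, whatever estimate $Q$ is used. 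This contrasts with DAE, whose reward-shaping term depends on the estimated reward and therefore introduces bias.
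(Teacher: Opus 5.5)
Your proposal is correct and follows the paper's route: telescope the $\lambda=1$ sum to $Q^{\boldsymbol{\pi}}(s,\boldsymbol{a})-\overline{EQ}^i$, then use the COMA-style argument that a baseline not depending on $a^i$ has zero expected score-function gradient. Your finite-horizon truncation with dominated convergence, under bounded $r$ and $Q$, makes rigorous the tail-dropping that the paper does implicitly. You should also state explicitly, as the paper does, that the step $\mathbb{E}_{a^i\sim\pi^i}[\nabla\log\pi^i(a^i|o^i)\, b(s,\boldsymbol{a}^{-i})]=0$ relies on the decentralized product policy $\boldsymbol{\pi}=\prod_j \pi^j$, which makes $a^i$ conditionally independent of $\boldsymbol{a}^{-i}$.
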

\begin{proof}[Proof Sketch]
With $\lambda=1$, the sum
$\sum_t (r_t + \gamma \overline{EQ}_{t+1}^i - \overline{EQ}_t^i)$
collapses to
$Q^{\boldsymbol{\pi}}(s, \boldsymbol{a}) - \mathbb{E}_{a^i \sim \pi^i}[Q(s, a^i, \boldsymbol{a}^{-i})]$ by telescoping.
Since the baseline term involving $\overline{EQ}^i$ has zero gradient under
decentralized execution, this advantage yields the correct policy
gradient.
\end{proof}
Theorem \ref{thm:invariance_on} suggests that these $n$-step TD-errors can be utilized to achieve unbiased policy updates.
In general, The trade-off between variance and bias can be controlled by adjusting $\lambda$.

Using Theorem \ref{thm:invariance_on}, we propose the GPAE for on-policy learning:
\begin{equation}
\label{eq:gpae_on}
\hat{A}_t^{i,\text{GPAE,on}} := \sum_{l \geq t} (\gamma\lambda)^{l-t} \delta_l^{i,\text{GPAE}},
\end{equation}
where $\delta_l^{i,\text{GPAE}} = r_l + \gamma \overline{EQ}_{l+1}^i - \overline{EQ}_l^i$ is the per-agent TD error.

In the single-agent case, GPAE reduces to GAE(\( \lambda \)). This demonstrates GPAE as a generalization of GAE(\( \lambda \)) for multi-agent scenarios with explicit credit assignment.

\subsection{Extension to Off-Policy Estimation}

While $\mathcal{R}^i$ and GPAE were originally defined for on-policy learning, adapting them to off-policy scenarios requires addressing the mismatch between the behavior policy $\mu^i$ and the target policy $\pi^i$. To achieve this, we reformulate $\mathcal{R}_{\text{on}}^i$ to include importance sampling ratio (ISR) weights, allowing the estimation of per-agent value functions and advantages in an off-policy setting.
\begin{definition}
The generalized per-agent operator $\mathcal{R}^i$ is now defined as:
\begin{multline}
\label{eq:Ri_off}
\mathcal{R}^i\overline{EQ}^i := \overline{EQ}^i + 
\mathbb{E}_{a_0^i \sim \mu^i}\Biggl[\rho_0^i \mathbb{E}_{\boldsymbol{\mu}}\biggl[\sum_{t \geq 0} \gamma^t \big(\prod_{j=1}^t c_j^i\big) \cdot \\
\big(r_t + \gamma \overline{EQ}_{t+1}^i - \overline{EQ}_t^i\big)\biggr| s_0, a_0^i, \boldsymbol{a}_0^{-i}\biggr]\Biggr],
\end{multline}
where $(s_0, \boldsymbol{a}_0) = (s, \boldsymbol{a})$, $\rho_0^i = \frac{\pi^i(a_0^i|o_0^i)}{\mu^i(a_0^i|o_0^i)}$, $\boldsymbol{\mu} = \prod_{i=1}^N \mu^i$ is the behavior policy, and $c_t^i$ is the trace weight for off-policy correction. For the case of on-policy learning, $c_t^i$ is substituted for the value of $\lambda$, a bias-variance trade-off parameter.
\end{definition}
This operator introduces the necessary corrections to the per-agent value updates by weighting the contributions of sampled transitions according to their likelihood under the target policy. The theoretical properties of $\mathcal{R}^i$, including its contraction property, remain intact under the condition that the behavior policy sufficiently explores the state-action space.

The off-policy GPAE is constructed analogously by incorporating ISR weights into the per-agent TD error. Finally, our off-policy generalized per-agent advantage estimator is given by:
\begin{equation}
\label{eq:gpae_off}
\hat{A}_t^{i,\text{GPAE}} := \sum_{l \geq t} \gamma^{l-t} \bigg(\prod_{j=t+1}^l c_j^i\bigg) \delta_l^{i,\text{GPAE}},
\end{equation}
where $\delta_l^{i,\text{GPAE}} = r_l + \gamma \overline{EQ}_{l+1}^i - \overline{EQ}_l^i$ remains the per-agent TD error.

Note that the per-agent value iteration operator denoted by $\mathcal{R}^i$ and $\hat{A}_t^{i,\text{GPAE}}$ expanded to off-policy learning, are the broader framework that encompass on-policy learning case. That is to say, $\mathcal{R}^i$ and $\hat{A}_t^{i,\text{GPAE}}$ in Eq.~\ref{eq:Ri_off} and \ref{eq:gpae_off} inherently transform into $\mathcal{R}^i_{\text{on}}$ and $\hat{A}_t^{i,\text{GPAE,on}}$ in Eq.~\ref{eq:Ri_on} and \ref{eq:gpae_on} in the on-policy scenario, without the necessity of additional modifications.

To ensure the validity of the off-policy extension, we establish the following theoretical guarantees:
\begin{theorem}
\label{thm:contraction_off}
$\mathcal{R}^i$ is a $\gamma$-contraction in the general off-policy case. $\hat{A}_t^{i,\text{GPAE}}$ also guarantees policy invariance with $c_t^i = \boldsymbol{\rho}_t$. 
\end{theorem}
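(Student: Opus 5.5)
The plan is to mirror the Retrace/V-trace contraction argument, adapted to the per-agent quantity $\EQi$. Two assumptions are needed: the trace weights satisfy $0\le c_t^i\le \rho_t^i:=\pi^i(a_t^i|o_t^i)/\mu^i(a_t^i|o_t^i)$ (or the truncated version $c_t^i\le \min(\bar c,\rho_t^i)$), and $\mu$ has sufficient support. The contraction proof then has three steps.

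First, rewrite $\mathcal{R}^i$ by regrouping the TD errors. Writing $C_t:=\prod_{j=1}^t c_j^i$ with $C_0=1$, the sum $\sum_{t\ge0}\gamma^t C_t(r_t+\gamma\EQi_{t+1}-\EQi_t)$ splits into a reward part, which does not depend on the input $\EQi$, and a value part. Shifting indices in the value part gives
\[
-\EQi_0+\sum_{t\ge1}\gamma^t\,(C_{t-1}-C_t)\,\EQi_t .
\]
Once the outer factor $\rho_0^i$ and the outer expectation over $a_0^i\sim\mu^i$ are applied, the $-\EQi_0$ term cancels the leading $\EQi$ in the definition of $\mathcal{R}^i$. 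This cancellation works because $\EQi_0$ is already averaged over $a^i$ under $\pi^i$, and $\mathbb{E}_{\mu^i}[\rho_0^i\,\cdot]=\mathbb{E}_{\pi^i}[\cdot]$. It is the per-agent analogue of the Retrace step.

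Second, replace $C_{t-1}-C_t=C_{t-1}(1-c_t^i)$ by a nonnegative quantity in expectation. Because $\EQi_t$ already marginalizes $a_t^i$ under $\pi^i$, we can condition on $(s_t,\boldsymbol{a}^{-i}_t)$ and take the expectation over $a_t^i\sim\mu^i$. This turns $1-c_t^i$ into $1-\mathbb{E}_{\mu^i}[c_t^i]\ge 1-\mathbb{E}_{\mu^i}[\rho_t^i]=0$. For two inputs $\EQi{}^{,1},\EQi{}^{,2}$, the difference $\mathcal{R}^i\EQi{}^{,1}-\mathcal{R}^i\EQi{}^{,2}$ is then a nonnegative combination of differences $\Delta_t$ with weights $\gamma^t\,\mathbb{E}[C_{t-1}(1-c_t^i)]$. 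These weights telescope to $\sum_{t\ge1}\gamma^t(\mathbb{E}C_{t-1}-\mathbb{E}C_t)\le\gamma$, which gives $\|\mathcal{R}^i\EQi{}^{,1}-\mathcal{R}^i\EQi{}^{,2}\|_\infty\le\gamma\|\EQi{}^{,1}-\EQi{}^{,2}\|_\infty$. Banach's fixed point theorem then gives a unique fixed point.

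I expect this second step to be the main obstacle. The conditioning has to be arranged so that $c_t^i$ and $\EQi_t$ are decoupled. Only agent $i$'s own ratio is averaged out, while the actions $\boldsymbol{a}^{-i}_t$ are drawn from $\boldsymbol{\mu}^{-i}$. I would therefore first check that the argument only needs $\mathbb{E}_{a^i\sim\mu^i}[c_t^i]\le1$ pointwise in $(s_t,\boldsymbol{a}_t^{-i})$. If it also needs the joint ratio $\boldsymbol{\rho}_t$, I would bound $\mathbb{E}_{\boldsymbol{\mu}}[\boldsymbol{\rho}_t\mid s_t]=1$ instead.

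For policy invariance with $c_t^i=\boldsymbol{\rho}_t$, the joint ratio, the argument is different. The products $\rho_0^i\prod_{j\le t}\boldsymbol{\rho}_j$ change the $\boldsymbol{\mu}$-trajectory measure into the $\boldsymbol{\pi}$-trajectory measure for steps $1,\dots,t$. The expectation of $\hat A_t^{i,\text{GPAE}}$ therefore equals the on-policy $\lambda=1$ sum $\mathbb{E}_{\boldsymbol{\pi}}[\sum_t\gamma^t\delta_t^{i}]$. By the telescoping in Theorem~\ref{thm:invariance_on}, this equals $Q^{\boldsymbol{\pi}}(s,\boldsymbol{a})-\mathbb{E}_{a^i\sim\pi^i}[Q(s,a^i,\boldsymbol{a}^{-i})]$. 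The subtracted baseline does not depend on $a^i$, so it contributes zero to agent $i$'s policy gradient, and the estimator yields the correct gradient.
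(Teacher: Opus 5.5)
Your contraction computation matches the paper's Part~I in different notation. The $-\overline{EQ}^i_0$ term is cancelled by $\mathbb{E}_{a_0^i\sim\mu^i}[\rho_0^i]=1$, and your total weight $\sum_{t\ge1}\gamma^t(\mathbb{E}C_{t-1}-\mathbb{E}C_t)$ is exactly the paper's $1+(1-\gamma^{-1})\mathbb{E}[\rho_0^i\sum_{t\ge1}\gamma^tC_{t-1}]\le\gamma$.

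The issue is the hypothesis you use to make the coefficients $\gamma^tC_{t-1}(1-c_t^i)$ nonnegative. The paper assumes $c_t^i\in[0,1]$, so $1-c_t^i\ge0$ holds pointwise and no conditioning or decoupling is needed. This covers every trace the method uses (on-policy $\lambda$, IT, ST, DT-ISR), since all have the form $\lambda\min(1,\cdot)$. Under your Retrace-style condition $c_t^i\le\rho_t^i$ the conditioning argument is valid, but that condition excludes these traces:
\begin{itemize}
\item ST-ISR equals $\lambda\rho_t^i\boldsymbol\rho_t^{-i}>\rho_t^i$ whenever $\boldsymbol\rho_t\le1<\lambda\boldsymbol\rho_t^{-i}$.
\item DT-ISR equals $\lambda\eta\rho_t^i>\rho_t^i$ whenever $\boldsymbol\rho_t^{-i}\ge\eta$, $\eta\rho_t^i\le1$ and $\lambda\eta>1$ (e.g.\ $\eta=1.1$ or $1.15$ with $\lambda=0.95$ in the ablation).
\end{itemize}
Your fallback via $\mathbb{E}_{\boldsymbol\mu}[\boldsymbol\rho_t\mid s_t]=1$ does not help, because the coefficient multiplies $\Delta\overline{EQ}^i(s_t,\boldsymbol a_t^{-i})$. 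To factor out $\|\Delta\overline{EQ}^i\|_\infty$ you need $\mathbb{E}_{a_t^i\sim\mu^i}[c_t^i]\le1$ for each fixed $\boldsymbol a_t^{-i}$. For joint-ratio traces that conditional mean is driven by $\boldsymbol\rho_t^{-i}$, which can exceed $1$. For untruncated $c_t^i=\boldsymbol\rho_t$, contraction genuinely fails. Take one state, a single-action agent $i$, $\boldsymbol\pi^{-i}$ a point mass on $x$, $\boldsymbol\mu^{-i}$ uniform on $\{x,y\}$, and $\Delta\overline{EQ}^i(x)=-1$, $\Delta\overline{EQ}^i(y)=1$. Then $\mathcal{R}^i\overline{EQ}^{i,1}-\mathcal{R}^i\overline{EQ}^{i,2}\equiv\gamma/(1-\gamma)$. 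So state the hypothesis as $c_t^i\in[0,1]$ and drop the conditioning step.

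For policy invariance you reproduce the paper's Part~III: per-decision importance sampling, then the $\lambda=1$ telescoping of Theorem~\ref{thm:invariance_on}. This step does not go through as written, and the paper's version has the same problem. The identity $\mathbb{E}_{\boldsymbol\mu}[(\prod_{j=1}^t\boldsymbol\rho_j)f_t]=\mathbb{E}_{\boldsymbol\pi}[f_t]$ needs $f_t$ to depend on actions only up to $\boldsymbol a_t$. But $\delta^{i,\text{GPAE}}_t$ contains $\gamma\overline{EQ}^i(s_{t+1},\boldsymbol a^{-i}_{t+1})$, and under the weight $\prod_{j\le t}\boldsymbol\rho_j$ the action $\boldsymbol a^{-i}_{t+1}$ stays $\boldsymbol\mu^{-i}$-distributed. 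The matching $-\overline{EQ}^i_{t+1}$ inside $\delta^{i,\text{GPAE}}_{t+1}$, by contrast, is reweighted to $\boldsymbol\pi^{-i}$. The telescoping therefore leaves the residual
\[
\sum_{t\ge1}\gamma^t\,\mathbb{E}\Bigl[\mathbb{E}_{\boldsymbol a^{-i}\sim\boldsymbol\mu^{-i}}\overline{EQ}^i(s_t,\boldsymbol a^{-i})-\mathbb{E}_{\boldsymbol a^{-i}\sim\boldsymbol\pi^{-i}}\overline{EQ}^i(s_t,\boldsymbol a^{-i})\Bigr],
\]
with $s_t$ generated under $\boldsymbol\pi$ from $(s_0,\boldsymbol a_0)=(s,\boldsymbol a)$. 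This residual depends on $a^i$ through $s_1$, so it is not a baseline and does not vanish from agent $i$'s gradient. To close the argument you need one of the following:
\begin{itemize}
\item an extra factor $\boldsymbol\rho^{-i}_{t+1}$ on the bootstrap term;
\item bootstrapping with $\mathbb{E}_{\boldsymbol a^{-i}\sim\boldsymbol\pi^{-i}}\overline{EQ}^i(s_{t+1},\boldsymbol a^{-i})$ instead;
\item the assumption $\boldsymbol\mu^{-i}=\boldsymbol\pi^{-i}$.
\end{itemize}
Also drop $\rho_0^i$ from this computation. The advantage is conditioned on $(s,\boldsymbol a)$, and reweighting $a^i$ by $\rho_0^i$ would turn $Q^{\boldsymbol\pi}(s,\boldsymbol a)$ into $\mathbb{E}_{a^i\sim\pi^i}[Q^{\boldsymbol\pi}(s,a^i,\boldsymbol a^{-i})]$.
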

\begin{proof}[Proof Sketch]
In the off-policy operator $R^i$, each TD term is weighted by
$c^i_t\in[0,1]$.  
Bounding these weights shows that the operator still scales
differences by at most~$\gamma$, preserving the $\gamma$-contraction.
When $c_{i,t}=\rho_t$ (full IS ratios), all bias terms vanish and the
expression matches the on-policy case \cite{Precup2000}, giving the same
$Q^\pi - \mathbb{E}_{a_i}Q$ form.  
Thus the estimator remains unbiased.
\end{proof}
While Theorem~\ref{thm:contraction_off} establishes the theoretical foundations, practical implementation requires addressing variance explosion due to the use of off-policy samples. This phenomenon, arising from the mismatch between behavioral and target policies, necessitates introducing a bias into the ISR weight to stabilize learning. In single-agent settings, techniques like V-trace have been effective in achieving this balance. However, applying V-trace directly to multi-agent systems introduces new challenges, which we address in the following subsection.



\subsection{Choice of Off-Policy Correction Term \texorpdfstring{$c^i$}{Lg}}
\label{subsec:dt}

To stabilize off-policy estimation in MARL, we extend the idea of truncated importance sampling ratio (ISR). In single-agent methods, the weight is defined as $c_t = \min (1, \rho_t)$, which bounds variance by truncating the ISR. Directly applying this to multi-agent systems, however, fails to account for the interaction between agents: a shared joint ratio controls variance but obscures individual contributions, whereas using only agent-specific ratios preserves credit signals but can destabilize training. To address this, we introduce a new double-truncated ISR weight (DT-ISR) that balance variance control with per-agent credit sensitivity.

We first consider two direct approaches to obtain weights in multi-agent systems. Single truncation (ST) applies a shared truncation on a joint ISR $\boldsymbol{\rho} = \prod_{j \in \mathcal{N}} \pi^j_t/\mu^j_t$, defining $c_t^{i,\text{ST}} = \lambda \min\left(1, \boldsymbol{\rho}\right)$ for every agent $i$. ST effectively controls variance under non-stationary team dynamics because all agents receive the same bounded weight, but it makes the update insensitive to an individual agent’s policy change and thus weakens per-agent credit signals.

As opposed to ST, we can only consider the ISR of the own policy: $c_t^{i,\text{IT}} = \lambda \min\left(1, \rho_t^i\right),$ where $\rho_t^i = \pi^i_t/\mu^i_t$ and IT denotes individual truncation. This can be effective to focus on own individual policy, but it ignores shifts induced by the team. As a result, the training signal is evaluated under a mismatched occupancy. This leads to unstable learning due to biased updates since samples collected under outdated team behavior are not sufficiently discounted.

Therefore, it is essential to propose a novel truncation scheme for MARL that incorporates the strengths of both IT and ST. The scheme should be constrained by a bound of $1$ while maintaining a close alignment with both individual ISR $\rho_t^i$ and joint ISR $\boldsymbol{\rho}$.

In this regard, we propose a novel truncation scheme for the trace coefficient of multi-agent off-policy correction: a \textit{\textbf{double-truncated importance sampling ratio (DT-ISR) weight}}:
\begin{equation}
\label{eq:cDTit}
c_t^{i,\text{DT}} = \lambda \min\left(1, \rho_t^i \min(\eta, \boldsymbol{\rho}_t^{-i})\right),
\end{equation}
where $\boldsymbol{\rho}_t^{-i} = \prod_{j \neq i} \pi^j_t/\mu^j_t$, and $\eta$ is a constant that serves to mitigate the effects of other agents' policies, while making the unique ISR of agent $i$ less intrusive. This formulation ensures that the unique contribution of agent $i$ is preserved while mitigating variance introduced by the collective policies of other agents. DT-ISR provides a better approximation of the ISR weight, enabling more effective off-policy value and advantage estimation with improved stability. 

\begin{figure}[ht]
    \centering
    \includegraphics[width=0.75\linewidth]{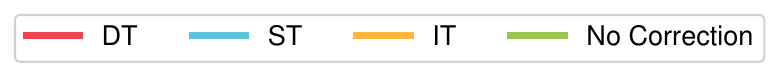}
    
    \subfigure[Distance from $\rho^i$]{
        \includegraphics[width=0.48\linewidth]{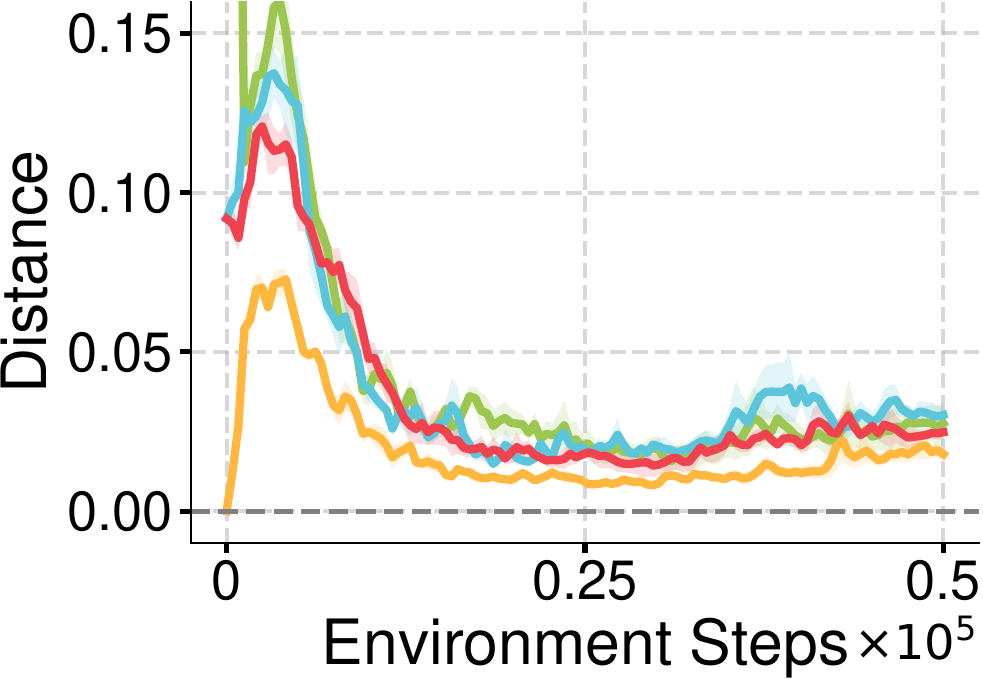}%
        \label{subfig:dt_diff1}
    }%
    \subfigure[Distance from $\boldsymbol{\rho}$]{
        \includegraphics[width=0.48\linewidth]{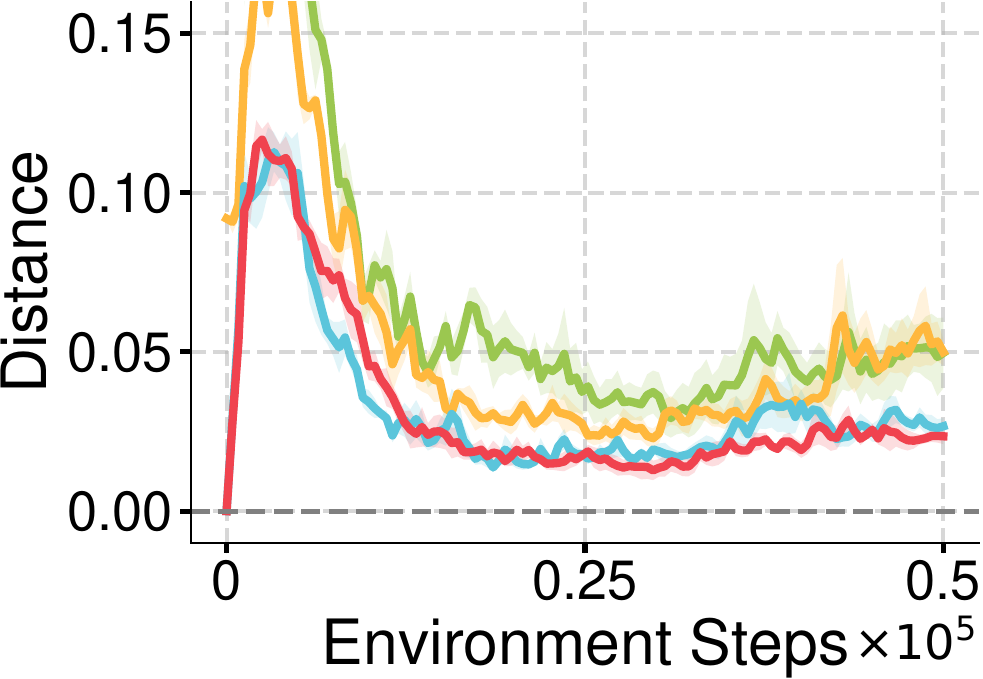}
        \label{subfig:dt_diff2}
    }
    
    \subfigure[Gap between (a) and (b)]{
        \includegraphics[width=0.48\linewidth]{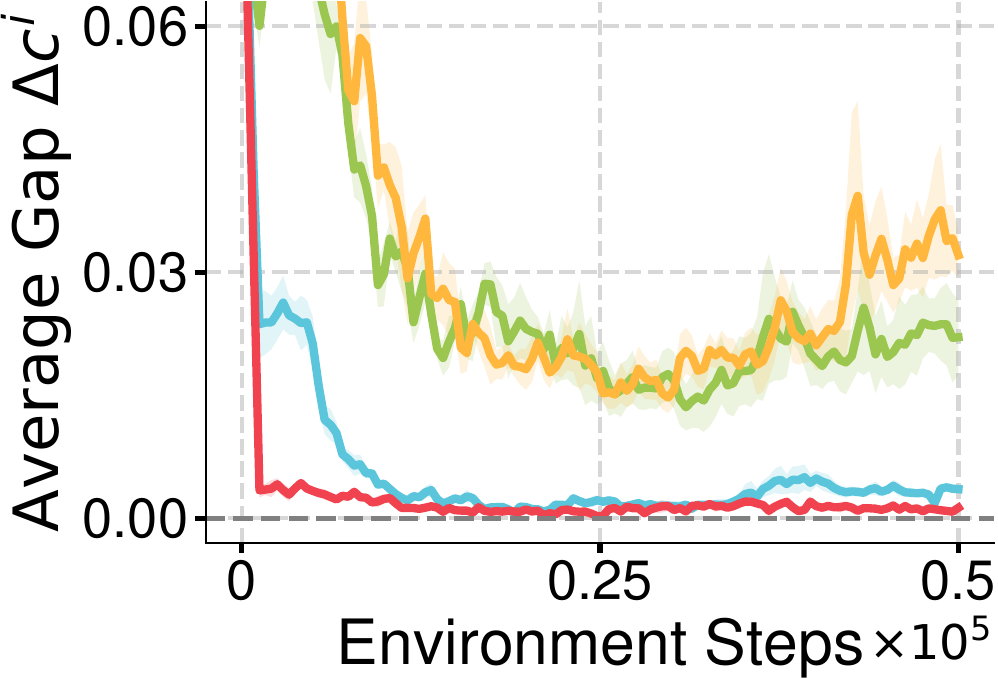}%
        \label{subfig:dt_diff_diff}
    }%
    \subfigure[Performance]{
        \includegraphics[width=0.48\linewidth]{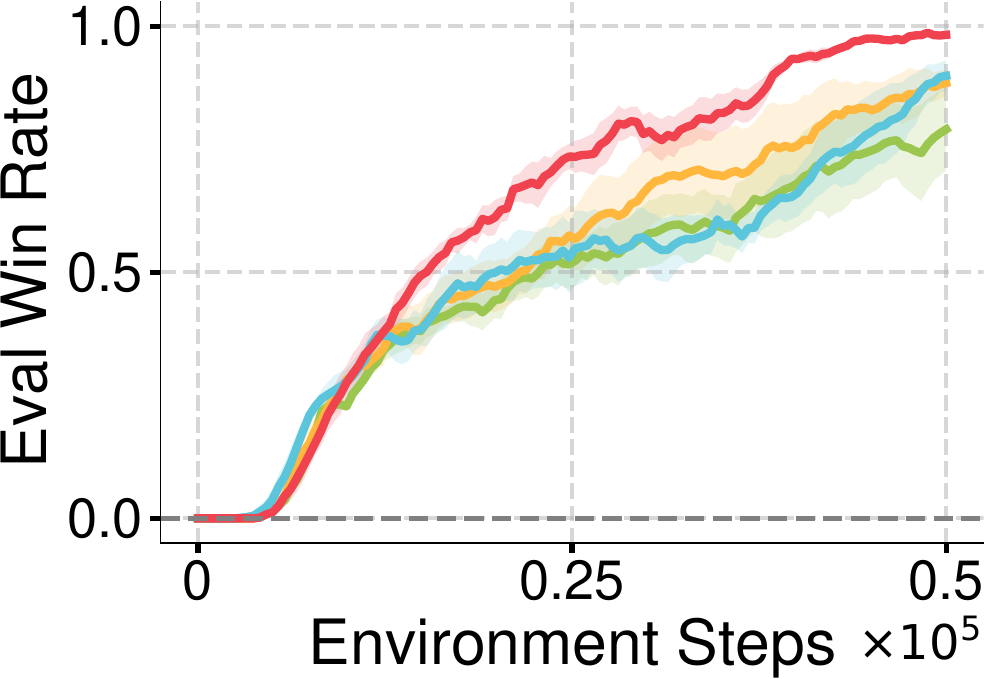}
        \label{subfig:dt_performance}
    }
    \caption{Off-policy correction comparisons in the SMAX-1s1z task.
         \subref{subfig:dt_diff1} Distance from true $\rho^i$,  \subref{subfig:dt_diff2} Distance from joint true $\boldsymbol{\rho}$,
         \subref{subfig:dt_diff_diff} Gap between (a) and (b),
         \subref{subfig:dt_performance} Final performance.
         Legend is shared across all plots.
         DT-ISR demonstrates the lowest gap $\Delta c^i$ and the highest performance.
    }
    \Description{Off-policy correction comparisons in the SMAX-1s1z task.
         \subref{subfig:dt_diff1} Distance from true $\rho^i$,  \subref{subfig:dt_diff2} Distance from joint true $\boldsymbol{\rho}$,
         \subref{subfig:dt_diff_diff} Gap between (a) and (b),
         \subref{subfig:dt_performance} Final performance.
         Legend is shared across all plots.
         DT-ISR demonstrates the lowest gap $\Delta c^i$ and the highest performance.}
    \label{fig:truncation_comparison}
\end{figure}

To motivate and evaluate our proposed method, we present key visualizations in Fig.~\ref{fig:truncation_comparison}. We choose the environment SMAX-\texttt{1s1z} with $0.5M$ training timestep, in which two agents with different dynamics cooperate to destroy enemies, thus accurate credit assignment can drive rapid learning. Here, we take \( \eta = 1.05 \) for \( c_t^{i,\text{DT}} \). The Fig.~\ref{subfig:dt_diff1}, \ref{subfig:dt_diff2} illustrate the average distance of each truncation weight from the true individual ISR \( \rho^i \) and joint ISR \( \boldsymbol{\rho} \), denoted as \( d(\rho^i, c^i) \) and \( d(\boldsymbol{\rho}, c^i) \), respectively. Fig.~\ref{subfig:dt_diff_diff} highlights the average gap between these two distances, which shows how the weight is balanced between the two true ISRs. The average gap is defined as 
\begin{equation}
\label{eq:delta_c}
    \Delta c^i := \mathbb{E}_{\tau}[|d(\boldsymbol{\rho}, c^i) - d(\rho^i, c^i)|],
\end{equation}
where $c^i$ is given off-policy correction term and $\tau$ is a trajectory sample. The smaller this gap, the better the truncation method is at approximating the true joint ISR while maintaining accurate individual ISR estimates. 
As shown in Fig.~
\ref{subfig:dt_diff_diff}, DT-ISR is effective in reducing the gap $\Delta c^i$ when compared to other methods. Nonetheless, further validation is required to confirm that the minimization of the gap correlates with stable learning. Fig.~\ref{subfig:dt_performance} compares the performance of DT against IT, ST, and without off-policy correction, demonstrating the superiority of DT in achieving higher win rates.

This analysis underscores the advantages of our DT-ISR, laying a strong foundation for further exploration of robust off-policy corrections in MARL.

\subsection{Overall Learning Structure and Algorithm}
\label{subsec:overall_algo}

We now summarize the overall training pipeline of GPAE with DT-ISR under CTDE. The procedure integrates the on- and off-policy estimators into a practical optimization loop. At each iteration, (i) trajectories are collected by the joint policy $\boldsymbol{\pi}=\prod_{i=1}^N \pi^i$, (ii) stored into a short replay buffer $R$, and (iii) gradient updates are performed by sampling mini-batches from $R$. Actors $\boldsymbol{\theta}$ are updated per agent, and a centralized critic head parameterizes $\EQi(s,a^{-i};\boldsymbol{\psi})$.

For each sampled trajectories, we compute GPAE by Eq.~\ref{eq:gpae_off}, and the value function is optimized using the following loss function:
\begin{equation}
\label{eq:value_loss1}
    L(\psi) = \sum_{\tau \in \mathcal{B}} (\overline{EQ}^i_{\psi} - \overline{EQ}_{targ}^i)^2,
\end{equation}
where $\mathcal{B}$ represents the replay buffer, and the target value $\overline{EQ}_{targ}^i$ is explicitly given   by 
\begin{equation}
\label{eq:target_value}
    \overline{EQ}_{targ}^i = \overline{EQ}^i_{\overline{\psi}} + \overline{\rho_t^i}\hat{A}_t^{i,GPAE},
\end{equation}

where $\overline{\rho_t^i} = \min(1, \rho_t^i)$ is a bounded individual ISR for stable learning and $\overline{\psi}$ denotes a parameter of target value network.
Actors are updated with PPO-style clipped objectives using $A^i_t$.

The complete procedure is summarized in Algorithm~\ref{alg:gpae}, which unifies our theoretical operator into an implementable CTDE algorithm with off-policy sample reuse.

\begin{algorithm}[h]
   \caption{GPAE}
   \label{alg:gpae}
\begin{algorithmic}
    \STATE Initialize shared policy parameter  $\boldsymbol{\theta}$, shared value parameter  $\boldsymbol{\psi}$ for $N$  agents
    \STATE Initialize trajectory buffer $R$ with batch reuse length $M$ 
    \FOR{each iteration $m$}
    \STATE Collect a set of trajectories $B_m$  from the environment by the joint policy $\boldsymbol{\pi}_{m}$ 
    \STATE Store  $B_m$  in the trajectory buffer $R$ 
    \STATE Compute  $\hat{A}_t^{i,\text{GPAE}}$  and  $\overline{EQ}_{targ}^i$  with \\Eq.~\ref{eq:gpae_off} and Eq.~\ref{eq:target_value} by using the samples stored in  $R$ 
    \FOR{each gradient step}
    \STATE Update  the actor parameter $\boldsymbol{\theta}$  with PPO actor loss
    \STATE Update  the critic parameter $\boldsymbol{\psi}$  with Eq.~\ref{eq:value_loss1}
    \ENDFOR
    \ENDFOR
\end{algorithmic}
\end{algorithm}

%% file: 5.experiments.tex
\section{Experimental Results}
\label{sec:experiment}
\subsection{Experimental Setup}
We evaluated the proposed method in two challenging MARL environments, \textbf{SMAX} and \textbf{MABrax}, implemented in JAX~\cite{jax}. These environments are representative of discrete and continuous action domains respectively, and serve as standard testbeds for evaluating GPAE’s credit assignment under partial observability. Figure~\ref{fig:envs} provides an overview of the two environments used in our evaluation.

\begin{figure}[ht]
\centering
\subfigure[\texttt{Halfcheetah-6x1} task of MABrax]{
    \includegraphics[width=0.53\linewidth]{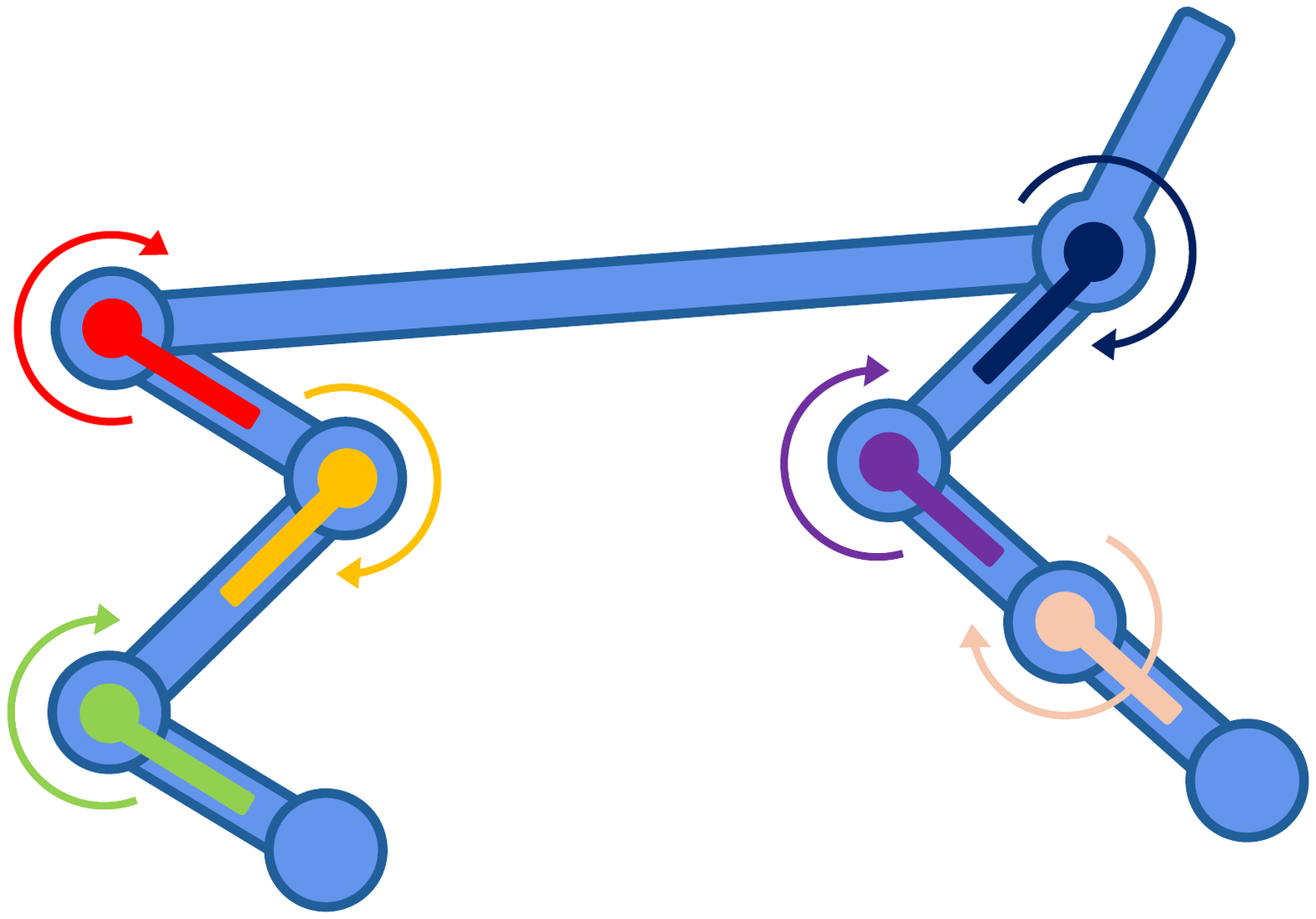}
    \label{fig:env_mabrax}
}%
\subfigure[\texttt{3s5z} task of SMAX]{
    \includegraphics[width=0.43\linewidth]{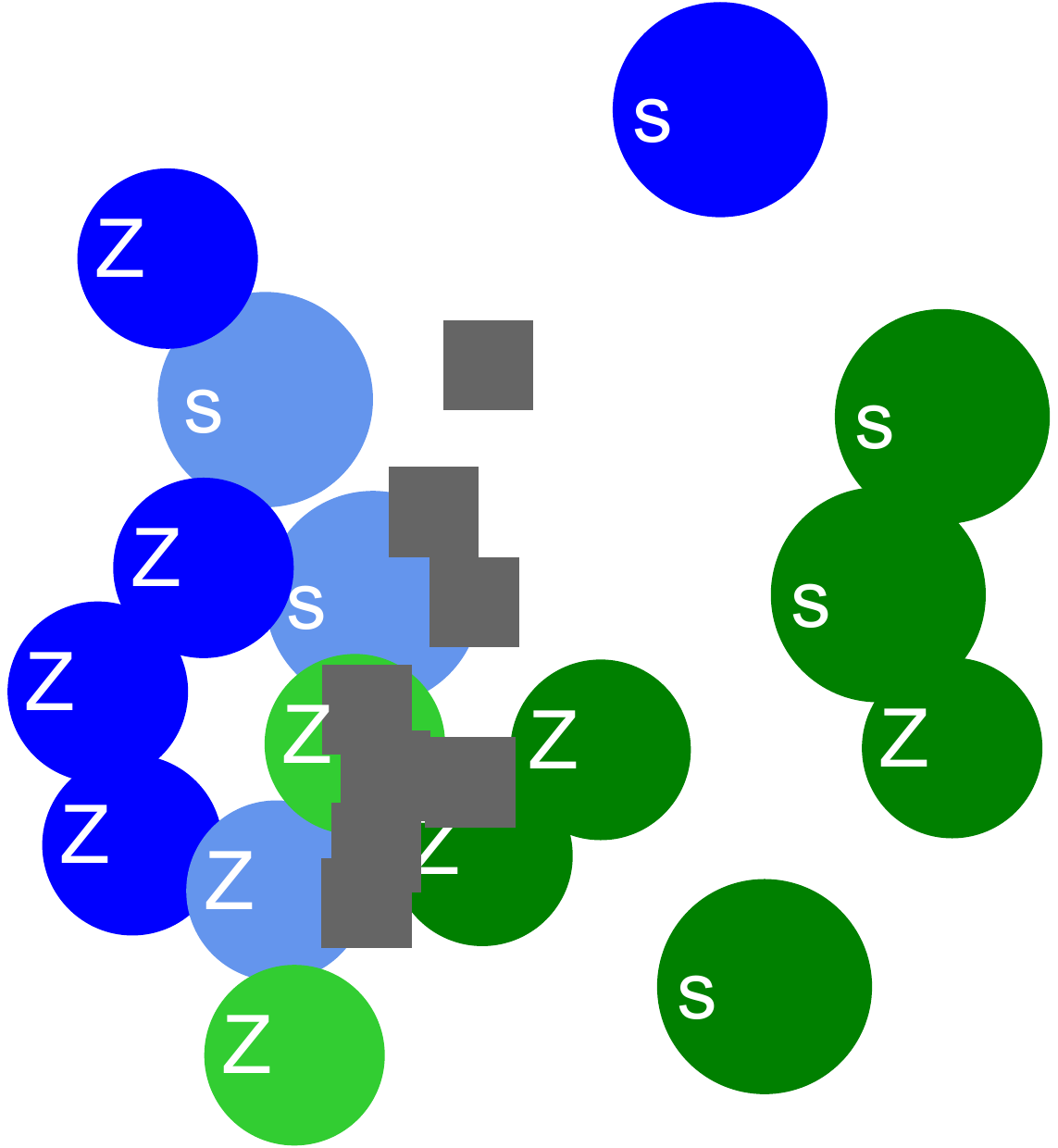}
    \label{fig:env_smax}
}
\caption{
Illustration of the two evaluation environments.}
\Description{Illustration of the two evaluation environments.}
\label{fig:envs}
\end{figure}

\textbf{MABrax} is a Multi-agent version of Brax~\cite{brax}, where each agent represents an individual joint of a robot tasked with coordinated movement, with continuous action spaces. MABrax focuses on multi-agent continuous control, with each agent corresponding to an individual joint of a robot. proper credit assignment based on centralized values is crucial to ensure synchronized actions among the joints. Limited observations further emphasize the need for precise credit assignment, as effective coordination is necessary to complete tasks successfully. Figure~\ref{fig:env_mabrax} illustrates the \texttt{halfcheetah-6x1} task, where six agents independently control the hip, thigh, and leg joints of a half-cheetah robot to generate coordinated forward locomotion.

\textbf{SMAX} is a JAX-based implementation of the StarCraft Multi-agent Challenge (SMAC, \citet{smac}). SMAX involves discrete action spaces where effective inter-agent cooperation is critical for task success. Agents engage in combat to achieve a common goal, with the win rate serving as the primary evaluation metric. Multi-agent credit assignment plays a pivotal role in enabling agents to coordinate under partial observability. Figure~\ref{fig:env_smax} includes the \texttt{3s5z} scenario, where three Stalker units and five Zealot units form a cooperative team against a mirrored adversarial team. The gray squares in the figure visualize attacks between units.

\noindent \textbf{Baselines.}
Our method is compared against baseline algorithms, focusing on its performance relative to other existing advantage estimating MAPG methods. Specifically, we consider representative baselines including MAPPO~\cite{mappo}, which utilizes GAE\cite{gae} as an advantage estimator. We also compare against DAE\cite{dae}, a variant of advantage estimation using reward estimation, and COMA~\cite{coma}, a classical counterfactual baseline for credit assignment. In addition, QMIX~\cite{QMIX} and VDN~\cite{vdn} are included as value decomposition approaches for discrete-action tasks. Note that QMIX and VDN are not applicable for continuous-action tasks. Furthermore, we provide an ablation study to assess the impact of the DT-ISR scheme and analyze the influence of hyperparameter $\eta$ on learning performance.

Including the environment, the implementation of MAPPO, VDN and QMIX is based on \citet{jaxmarl}. For GPAE, we provide our own implementation, where off-policy sample reuse is achieved through a short replay buffer. Since DAE has no existing public implementation, we implemented it from scratch. Similarly, COMA lacks a JAX-based implementation, so we implemented it closely following the original description.

\noindent \textbf{Evaluation Protocol.}
All methods were trained for 10M timesteps in each environment, and evaluated every 0.5M timesteps. For each task, we conducted 5 independent runs with different random seeds and report the mean and standard deviation across seeds. In SMAX, performance is measured by the win rate, while in MABrax it is measured by episodic return. The results reported in the main tables correspond to the final performance at the end of training. In addition, we provide aggregated learning curves that combine results across all tasks within each environment, where the shaded region indicates the standard error of the mean. 
Computationally, our algorithm increases wall-clock training time by at most 
6\% compared to GAE within same MAPPO across all SMAX and MABrax tasks. All experiments were conducted on a server equipped with NVIDIA TITAN Xp GPUs and Intel(R) Xeon(R) CPU E5-2620 v4 @ 2.10GHz.

\begin{table*}[ht]
    \renewcommand{\arraystretch}{1.1}
    \centering
    \caption{Performance comparison after learning $1 \times 10^7$ timesteps. The upper section shows the average win rate performances in SMAX tasks, with values presented as mean (standard deviation). The lower section shows the episode returns in MABrax tasks, with values presented as mean $\pm$ standard deviation. For each task, we conducted experiments with 5 random seeds. The bold numerals denote the highest performance for each task. Note that QMIX and VDN are not applicable to continuous tasks, and therefore, no results are available for MABrax tasks.}

    \begin{tabular}{c|c|ccccccc}
\toprule
Domain&Task & GPAE (off) & GPAE (on) & MAPPO & DAE & COMA & QMIX & VDN \\
\midrule
\multirow{6}{*}{SMAX} & 3s5z\_vs\_3s6z & \textbf{87.3}{\scriptsize$\pm$3.9} & 6.1{\scriptsize$\pm$2.0} & 2.6{\scriptsize$\pm$0.7} & 6.5{\scriptsize$\pm$2.0} & 0.0{\scriptsize$\pm$0.0} & 1.3{\scriptsize$\pm$0.8} & 0.5{\scriptsize$\pm$0.4} \\
&5m\_vs\_6m & \textbf{93.7}{\scriptsize$\pm$1.0} & 8.8{\scriptsize$\pm$0.9} & 3.1{\scriptsize$\pm$1.8} & 4.2{\scriptsize$\pm$2.5} & 0.3{\scriptsize$\pm$0.6} & 3.9{\scriptsize$\pm$0.6} & 3.2{\scriptsize$\pm$1.9} \\
&10m\_vs\_11m & \textbf{98.5}{\scriptsize$\pm$2.7} & 47.5{\scriptsize$\pm$2.8} & 36.4{\scriptsize$\pm$2.4} & 25.0{\scriptsize$\pm$8.2} & 1.3{\scriptsize$\pm$0.3} & 32.8{\scriptsize$\pm$0.9} & 22.6{\scriptsize$\pm$8.0} \\
&6h\_vs\_8z & 99.5{\scriptsize$\pm$0.1} & \textbf{99.8}{\scriptsize$\pm$0.2} & 99.0{\scriptsize$\pm$0.2} & 98.9{\scriptsize$\pm$0.1} & 80.4{\scriptsize$\pm$4.9} & 75.3{\scriptsize$\pm$5.3} & 96.1{\scriptsize$\pm$2.6} \\
&smacv2\_5\_units & \textbf{81.0}{\scriptsize$\pm$1.3} & 80.0{\scriptsize$\pm$1.7} & 75.2{\scriptsize$\pm$0.9} & 77.1{\scriptsize$\pm$1.2} & 54.0{\scriptsize$\pm$1.9} & 59.5{\scriptsize$\pm$3.3} & 39.4{\scriptsize$\pm$2.9} \\
&smacv2\_10\_units & \textbf{75.0}{\scriptsize$\pm$1.3} & 69.0{\scriptsize$\pm$1.7} & 63.5{\scriptsize$\pm$2.5} & 65.7{\scriptsize$\pm$0.6} & 47.4{\scriptsize$\pm$0.4} & 62.1{\scriptsize$\pm$1.6} & 29.5{\scriptsize$\pm$2.4} \\
\hline
\multirow{6}{*}{MABrax} &halfcheetah-6x1 & \textbf{3463}{\scriptsize$\pm$68} & 3138{\scriptsize$\pm$32} & 2965{\scriptsize$\pm$45} & 2983{\scriptsize$\pm$26} & 2017{\scriptsize$\pm$39} & N/A & N/A \\
&ant-8x1 & \textbf{3285}{\scriptsize$\pm$151} & 1557{\scriptsize$\pm$82} & 1247{\scriptsize$\pm$49} & 1279{\scriptsize$\pm$40} & 901{\scriptsize$\pm$96} & N/A & N/A \\
&ant-4x2 & \textbf{3574}{\scriptsize$\pm$217} & 1666{\scriptsize$\pm$32} & 1379{\scriptsize$\pm$124} & 1441{\scriptsize$\pm$50} & 950{\scriptsize$\pm$27} & N/A & N/A \\
&walker2d-6x1 & \textbf{912}{\scriptsize$\pm$18} & 697{\scriptsize$\pm$19} & 489{\scriptsize$\pm$77} & 591{\scriptsize$\pm$25} & 242{\scriptsize$\pm$64} & N/A & N/A \\
&hopper-3x1 & \textbf{1572}{\scriptsize$\pm$110} & 1356{\scriptsize$\pm$66} & 1088{\scriptsize$\pm$65} & 1050{\scriptsize$\pm$63} & 709{\scriptsize$\pm$203} & N/A & N/A \\
&humanoid-9|8 & \textbf{445}{\scriptsize$\pm$4} & 286{\scriptsize$\pm$4} & 258{\scriptsize$\pm$7} & 261{\scriptsize$\pm$6} & 214{\scriptsize$\pm$45} & N/A & N/A \\
\bottomrule
\end{tabular}

    \label{tab:performance}
\end{table*}
\subsection{Comparison with Baselines} 

Across SMAX tasks, GPAE consistently achieves superior performance over all baseline methods. A notable trend is that MAPPO-based advantage estimators already outperform value decomposition baselines such as QMIX and VDN, underscoring the effectiveness of policy-gradient approaches in cooperative MARL. Building on this, GPAE delivers an additional margin of improvement, demonstrating its ability to stabilize training with precise credit assignment and improve sample efficiency with off-policy sample reuse. Importantly, GPAE shows its benefits even when restricted to on-policy samples only, already surpassing MAPPO and DAE. This highlights that the improvement is not solely due to sample reuse but also stems from the design of our estimator. 

When extended with off-policy sample reuse, GPAE achieves further gains in both performance and sample efficiency. In Figures~\ref{fig:main_results}, which aggregate performance across all SMAX and MABrax tasks respectively, GPAE exhibits clearly steeper learning curves at early timesteps, indicating that it learns effective coordination policies with fewer samples. This advantage is particularly evident in the most challenging SMAX scenarios such as \texttt{3s5z\_vs\_3s6z} and \texttt{5m\_vs\_6m}, where credit assignment is nontrivial. While MAPPO and DAE degrade under these conditions, GPAE sustains more reliable coordination. By contrast, COMA shows clear weaknesses, with its one-step counterfactual structure failing to capture long-range dependencies, leading to consistently poor performances. QMIX and VDN demonstrate relatively steady but inferior results compared to MAPPO-based estimators, further highlighting that value decomposition is less suitable for these SMAX tasks.

\begin{figure}[ht]
    \centering
    \includegraphics[width=0.8\linewidth]{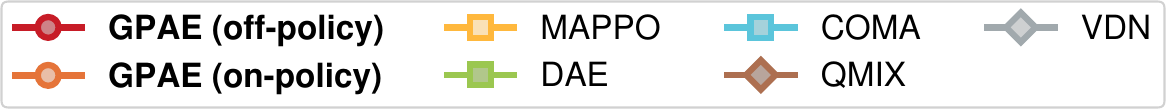}

    \subfigure[SMAX]{
        \includegraphics[width=0.49\linewidth]{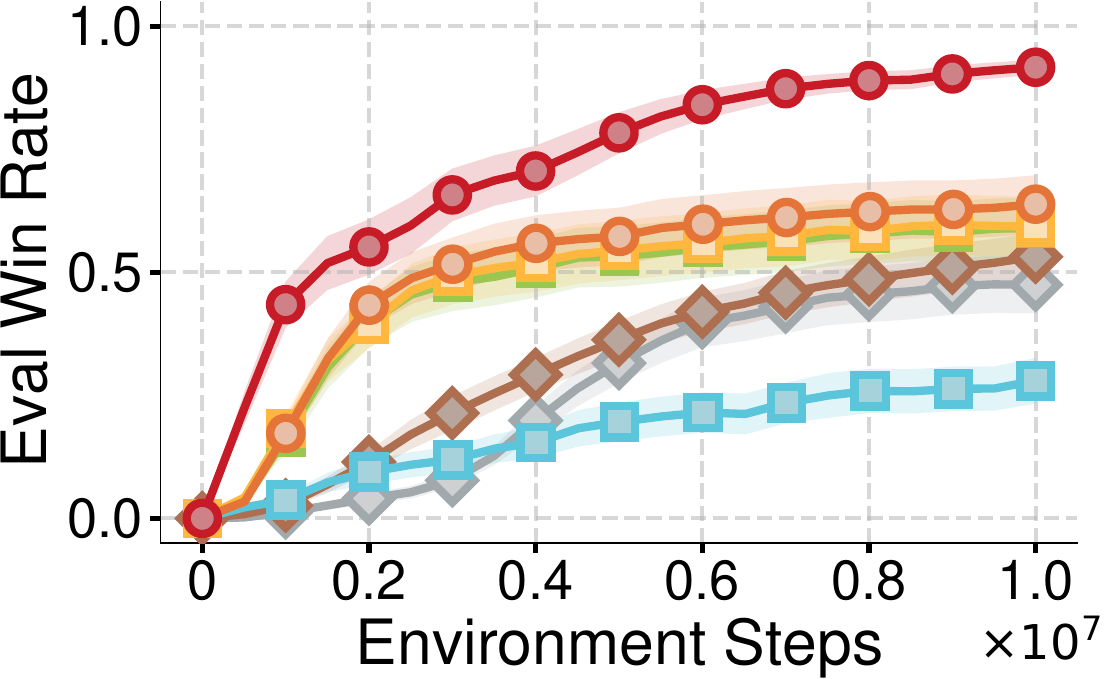}%
    }%
    \subfigure[MABrax]{
        \includegraphics[width=0.49\linewidth]{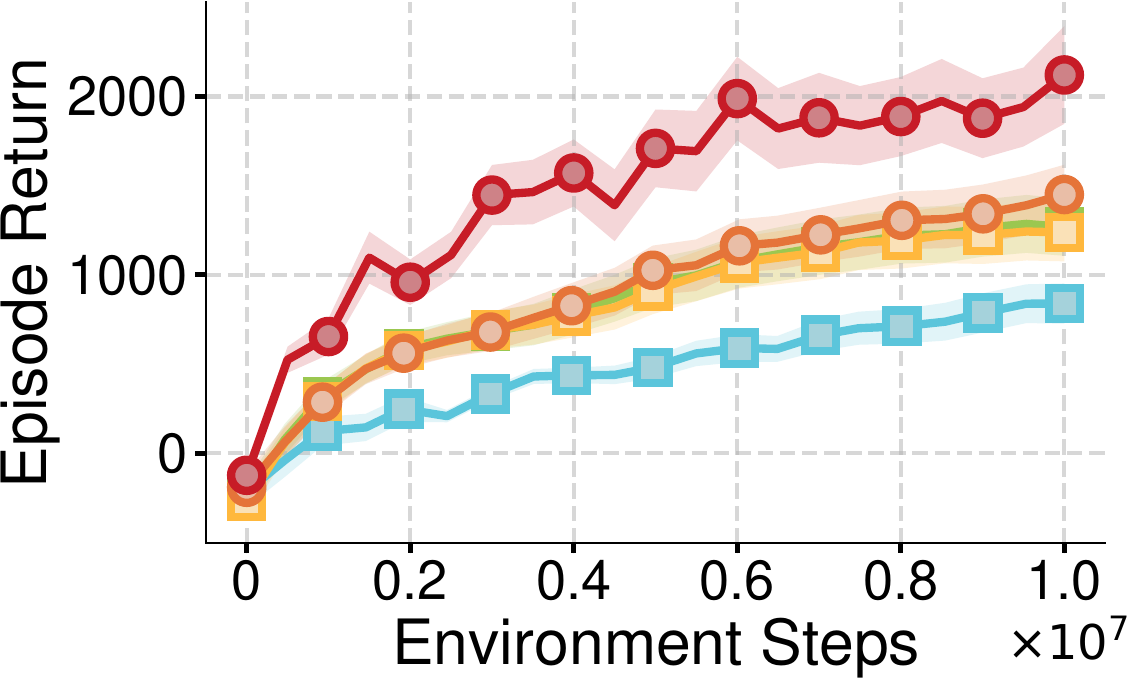}%
    }
    \caption{(a) and (b) represent the learning curve of the aggregated performance for all tasks on SMAX and MABrax, respectively.}
    \Description{(a) and (b) represent the learning curve of the aggregated performance for all tasks on SMAX and MABrax, respectively.}
    \label{fig:main_results}
\end{figure}

Turning to the continuous-control benchmarks in MABrax, the gap between GPAE and other baselines remains significant. Value decomposition methods such as QMIX and VDN are not applicable in continuous domains, leaving MAPG estimators and COMA as the primary points of comparison. Here, GPAE again surpasses MAPPO and DAE, which, while competitive, struggle to maintain stability across high-dimensional action spaces. COMA proves highly unstable in this setting, similar to results from discrete tasks, further confirming the limitations of traditional counterfactual baselines based on TD(0) estimation.

Taken together, these findings highlight that GPAE consistently outperforms the baselines across diverse domains. Even without off-policy augmentation, GPAE shows improvements over the existing methods, while with sample reuse it achieves unmatched sample efficiency and performance. In both discrete and continuous domains, GPAE establishes itself as a unified framework that not only improves asymptotic performance but also accelerates learning, thereby validating its effectiveness as a solution for multi-agent credit assignment.

\subsection{Ablation Study}

\noindent \textbf{DT-ISR Weight.}
We isolate the effect of the truncation scheme on \emph{GPAE} by fixing the advantage estimator and varying only the off-policy weighting: no correction, single truncation (ST), individual truncation (IT), and our double truncation (DT-ISR). All variants employ the same off-policy sample reuse protocol described in Section~\ref{sec:experiment}. We focus on two SMAX scenarios, \texttt{5m\_vs\_6m} and \texttt{3s5z}/\texttt{3s6z}. As summarized in Table~2, these tasks exhibit the largest performance gaps between GPAE and other advantage estimators. The ablation is designed to verify how success is realized on the most demanding settings when off-policy data is reused. In particular, the analysis asks whether stability can be retained while preserving per-agent credit assignment.

Table~\ref{tab:double_truncation} reports win rate performance after 10M timesteps. DT-ISR consistently outperforms other truncation schemes, achieving higher final performance. These results are consistent with the findings in Section~\ref{subsec:dt}, where DT-ISR improved off-policy learning stability across a broader set of tasks. The improvement here stems from the same mechanism: balancing stability and credit sensitivity by combining the strengths of joint and individual truncation.

\begin{table}[h]
\centering
    \centering
    \renewcommand{\arraystretch}{1.2}
    \caption{Ablation study on DT-ISR.}
    \begin{tabular}{l|c|c}
        \toprule
        method & \texttt{5m\_vs\_6m} & \texttt{3s5z\_vs\_3s6z}  \\
        \midrule
        \textbf{DT-ISR} & \textbf{93.7} & \textbf{87.3} \\
        ST-ISR & 44.4 & 80.8 \\
        IT-ISR & 58.6 & 83.7 \\
        No Correction & 34.5 & 74.9 \\
        \bottomrule
    \end{tabular}
    \label{tab:double_truncation}
\end{table}
\begin{table}[h]
    \centering
    \renewcommand{\arraystretch}{1.2}
    \caption{Impact of $\eta$ on DT-ISR weights.}
    \begin{tabular}{l|c|c|c|c}
        \toprule
        \multirow{2}{*}{$\eta$} & \multicolumn{2}{c|}{\texttt{5m\_vs\_6m}} & \multicolumn{2}{c}{\texttt{3s5z\_vs\_3s6z}} \\
        \cline{2-5}
        & win rate & $\Delta c^i$ & win rate & $\Delta c^i$ \\
        \midrule
        1.0 & \textbf{95.2} & -0.00154 & 84.5 & -0.00105 \\
        1.056 & 93.7 & -0.00409 & \textbf{87.3} & -0.0067 \\
        1.1 & 93.1 & -0.00553 & 83.2 & -0.00918 \\
        1.15 & 39.4 & -0.00561 & 80.3 & -0.00938 \\
        \bottomrule
    \end{tabular}
    \label{tab:eta_impact}
\end{table}

\noindent \textbf{Impact of $\eta$ and Trace Dynamics.} We analyze the impact of varying the double-truncation parameter $\eta$ on key metrics, including learning performance in SMAX, measured by win rate. We also compute \( \Delta c^i \) as defined in Eq.~\ref{eq:delta_c}, which measures the difference between the distance of truncation weights from the true joint ISR and individual ISR. The results are presented in Table~\ref{tab:eta_impact}, showing that DT-ISR is robust to changes in \( \eta \), with performance remaining stable across a range of values. Specifically, performance is consistently good for \( \eta \) values between 1.0 and 1.05.

The ablation suggests that the performance gain arises from applying off-policy weights that integrate per-agent $n$-step advantages, control variance, and preserve credit assignment. DT-ISR supplies this coupling with a single hyperparameter \(\eta\).

%% file: 6.relatedwork.tex
\section{Related Works}
\label{sec:relatedwork}

\noindent \textbf{Multi-Agent Policy Gradient.} 
\balance
Multi-agent policy gradient is an effective approach under CTDE because it can exploit the global state to estimate the centralized critic. 
MADDPG \citep{maddpg} uses  individual critic for each agent with global information. 
PS-TRPO \citep{pstrpo} proposes parameter sharing for actor-critic to reduce the network size.
COMA \citep{coma} estimates individual advantage with a counterfactual baseline, which enables explicit credit assignment.
DOP \citep{dop} and FACMAC \citep{facmac} propose off-policy policy gradient methods based on value decomposition.
IPPO \citep{ippo} and MAPPO \citep{mappo} show that PPO performs strongly in the SMAC environment.
HAPPO \citep{happo} proposed a sequential update structure of MAPPO, with individual actor networks.
\citet{ob} analyzed the variance of MAPG, and proposed a baseline that minimizes the variance \citep{ob}.

\noindent \textbf{Multi-Agent Credit Assignment Problem.} Credit assignment methods are commonly categorized into two approaches: \textit{implicit} and \textit{explicit} credit assignment. 
Implicit approaches such as VDN \cite{vdn}, QMIX~\citep{QMIX}, QTRAN~\citep{qtran}, and several subsequent studies \cite{qplex, ader} rely on value decomposition to implicitly infer individual contributions from the global Q-function.
Explicit methods aim to assign credit directly, often through reward shaping or advantage decomposition. LIIR~\citep{liir} learns intrinsic rewards based on individual state-action pairs. COMA~\citep{coma} introduces a counterfactual baseline for computing per-agent advantages, while DAE~\citep{dae} extends this idea by incorporating potential-based difference rewards into GAE~\citep{gae}. Closely related, \citet{castellini2022difference} propose a principled formulation of policy gradients that leverage marginal contribution signals.
Beyond these primary categories, methods such as MAAC~\citep{maac} and intention sharing~\citep{intention} employ attention mechanisms to enhance inter-agent coordination through structured representation learning.

\noindent \textbf{Off-Policy Value Estimation.} $n$-step  estimation can provide more accurate value estimation for policy gradient methods. TD($\lambda$) \citep{tdlambda} and GAE \citep{gae} has been used widely for on-policy methods.
In single-agent RL, $n$-step off-policy value estimation has been developed for improved sample efficiency. Retrace($\lambda$) \citep{retrace} and V-trace \citep{vtrace} propose  low-variance value estimation by using truncated importance sampling trace. DISC \citep{disc} proposed using GAE-V (i.e., GAE with V-trace) for off-policy advantage estimation. GePPO \citep{geppo} and \citet{offpolicyppo} analyzes the reuse of off-policy samples with PPO. For the multi-agent value-based method, SMIX($\lambda$) \citep{smix} extends QMIX by estimating the joint action value function based on  $n$-step return and expected SARSA \citep{sutton1}, where credit assignment is still implicit by value factorization.

%% file: 7.discussion.tex
\section{Conclusion}
\label{sec:discussion}
One of the most important challenges in MARL is addressing the multi-agent credit assignment problem. While various approaches have attempted to tackle this issue, explicit study and analysis of credit assignment within policy gradient-based methods have been relatively underexplored. Our work directly addresses this gap by introducing a novel framework that accurately estimates per-agent advantages, enabling more effective policy updates. By leveraging our new value operator and integrating advanced techniques like double-truncated importance sampling weight, we provide a principled approach to credit assignment that enhances both on-policy and off-policy learning. With its solid theoretical foundation, GPAE paves the way for the development of more efficient, robust, and scalable multi-agent systems.

%% file: appendix.tex
\section{Proofs}
\label{appendix:proof}

\subsection{Proof of Theorem \ref{thm:contraction_on}}
\label{proof:contraction_on}
We show that the operator $\mathcal{R}_{\text{on}}^i$ is a  $\gamma$-contraction under $L_\infty$-norm. With the function $\overline{EQ}^i = \mathbb{E}_{a^\sim \pi^i}[Q(s,a^i,\boldsymbol{a}^{-i})]$, we define the notation $\overline{EQ}^i_t := \mathbb{E}_{a^i_t\sim \pi^i}[Q(s_t,a_t^i,\boldsymbol{a}^{-i}_t)]$. Then, we have
\begin{align}
\mathcal{R}_{\text{on}}^i\overline{EQ}^i &= \overline{EQ}^i + \mathbb{E}_{a_0^i\sim\pi^i}\Bigg[\mathbb{E}_{\boldsymbol{\pi}}\Bigg[\sum_{t\geq 0} (\gamma\lambda)^t \left(r_t + \gamma  \overline{EQ}^i_{t+1} - \overline{EQ}^i_t\right)\nonumber 
\\
& \qquad \qquad\qquad\qquad\qquad\qquad\qquad\qquad \biggr| s_0, a_0^i,\boldsymbol{a}_0^{-i}\Bigg]\Bigg] \label{eq:contraction_on_1}
\\
&= \mathbb{E}_{a_0^i\sim\pi^i}\Bigg[\mathbb{E}_{\boldsymbol{\pi}}\Bigg[\sum_{t\geq 0} (\gamma\lambda)^t \left(r_t + \gamma(1-\lambda)  \overline{EQ}^i_{t+1} \right)\nonumber 
\\
& \qquad \qquad\qquad\qquad\qquad\qquad\qquad\qquad \biggr| s_0, a_0^i,\boldsymbol{a}_0^{-i}\Bigg]\Bigg], \label{eq:contraction_on_2}
\end{align}
where $(s_0,\boldsymbol{a}_0)=(s,\boldsymbol{a})$. Eq. ~\eqref{eq:contraction_on_2} is obtained by rearranging the time index for $- \overline{EQ}^i_t$ term inside the double expectation of the RHS of \eqref{eq:contraction_on_1}. Now, consider two functions $\overline{EQ}^{i,1}$ and 
$\overline{EQ}^{i,2}$ both with the argument $(s,\boldsymbol{a}^{-i})$ and the difference between the two operators' outputs $\mathcal{R}_{\text{on}}^i\overline{EQ}^{i,1}$ and $\mathcal{R}_{\text{on}}^i\overline{EQ}^{i,2}$: With $(s_0,\boldsymbol{a}_0)=(s,\boldsymbol{a})$,
\begin{align}
    &\mathcal{R}_{\text{on}}^i\overline{EQ}^{i,1} - \mathcal{R}_{\text{on}}^i\overline{EQ}^{i,2} \nonumber\\
    &= \mathbb{E}_{a_0^i\sim\pi^i}\Bigg[\mathbb{E}_{\boldsymbol{\pi}}\Bigg[\sum_{t\geq 0} (\gamma\lambda)^t \Big(\gamma(1-\lambda)  (\overline{EQ}^{i,1}(s_{t+1},\boldsymbol{a}_{t+1}^{-i}) - \nonumber\\
    & \qquad\qquad\qquad\qquad\qquad\qquad\overline{EQ}^{i,2}(s_{t+1},\boldsymbol{a}_{t+1}^{-i})) \Big)\biggr| s_0, a_0^i,\boldsymbol{a}_0^{-i}\Bigg]\Bigg]
    \\
    &= \mathbb{E}_{a_0^i\sim\pi^i}\left[\mathbb{E}_{\boldsymbol{\pi}}\left[\sum_{t\geq 1} (\gamma\lambda)^t (\lambda^{-1} -1)  \Delta\overline{EQ}^i(s_{t},\boldsymbol{a}_{t}^{-i})\biggr| s_0, a_0^i,\boldsymbol{a}_0^{-i}\right]\right],
\end{align}
where $\Delta\overline{EQ}^i := \overline{EQ}^{i,1} - \overline{EQ}^{i,2}$. Now consider the $L_\infty$-norm of $\mathcal{R}^i\overline{EQ}^{i,1} - \mathcal{R}_{\text{on}}^i\overline{EQ}^{i,2}$:
\begin{align}
&\biggr\|\mathcal{R}_{\text{on}}^i\overline{EQ}^{i,1} - \mathcal{R}_{\text{on}}^i\overline{EQ}^{i,2}\biggr\|_\infty \nonumber\\
&=\biggr\| \mathbb{E}_{a_0^i\sim\pi^i}\left[\mathbb{E}_{\boldsymbol{\pi}}\left[\sum_{t\geq 1} (\gamma\lambda)^{t} (\lambda^{-1} -1)\Delta \overline{EQ}^i(s_t,\boldsymbol{a}^{-i}_t)\right]\right]\biggr\|_\infty \label{eq:thm41_3m1on} \\
&\le\biggr\| \mathbb{E}_{a_0^i\sim\pi^i}\left[\mathbb{E}_{\boldsymbol{\pi}}\left[\sum_{t\geq 1} (\gamma\lambda)^{t} (\lambda^{-1} -1) \sup_{s,\boldsymbol{a}^{-i}}\Delta \overline{EQ}^i(s,\boldsymbol{a}^{-i})\right]\right]\biggr\|_\infty \label{eq:thm41_3m1onxxx} \\
&\le \biggr| \mathbb{E}_{a_0^i\sim\mu^i}\left[\mathbb{E}_{\boldsymbol{\pi}}\left[\sum_{t\geq 1} (\gamma\lambda)^{t} (\lambda^{-1} -1)\right]\right]\biggr| \cdot  \biggr\|\Delta \overline{EQ}^i\biggr\|_\infty \label{eq:thm41_3on} \\
&=\sum_{t\geq 1}(\gamma\lambda)^{t} (\lambda^{-1} -1)\biggr\|\Delta \overline{EQ}^i\biggr\|_\infty \\
&= \gamma\lambda\frac{(\lambda^{-1} -1)}{1-\gamma\lambda} \biggr\|\Delta \overline{EQ}^i\biggr\|_\infty \\
&= \gamma\frac{(1-\lambda)}{1-\gamma\lambda} \biggr\|\Delta \overline{EQ}^i\biggr\|_\infty 
\end{align}
\begin{flalign}
\hspace{0pt} \le \gamma \biggr\|\Delta \overline{EQ}^i\biggr\|_\infty &&
\end{flalign}
since $\lambda \in (0,1]$, thus $\mathcal{R}_{\text{on}}^i$ is a $\gamma$-contraction.

With $\lambda=1$, the unique fixed point is given by $\mathbb{E}_{a^i\sim\pi^i}[Q^{\boldsymbol{\pi}}(s,a^i,$

\noindent $\boldsymbol{a}^{-i})]$. We can show this as follows. The inner conditional expectation of the second term in the RHS of \eqref{eq:contraction_on_1} is given by
\begin{align}
&=\mathbb{E}_{\mathbf{\tau}\sim\boldsymbol{\pi}}\left[\sum_{t\geq 0} \gamma^t  \left(r(s_t,\boldsymbol{a}_t) + \gamma  \overline{EQ}^i_{t+1} - \overline{EQ}^i_t\right)\biggr| s_0=s, \boldsymbol{a}_0=\boldsymbol{a}\right]\nonumber\\
&= \mathbb{E}_{\mathbf{\tau}\sim\boldsymbol{\pi}} \biggr[ ~~~\gamma^0 (r(s_0,a_0) +\gamma \overline{EQ}^i_1 - \overline{EQ}^i_0 ) \nonumber\\
    &    \hspace{3em}~~~+ \gamma^1 (r(s_1,a_1) +\gamma \overline{EQ}^i_2 - \overline{EQ}^i_1 )                \nonumber\\
    &    \hspace{3em}~~~+ \gamma^2 (r(s_2,a_2) +\gamma \overline{EQ}^i_3 - \overline{EQ}^i_2 )  \nonumber\\
    &   \hspace{3em}~~~+ ~~~~~~~~~~~~~~~~~~~\cdots  \hspace{13em}\biggr\rvert s_0=s, \boldsymbol{a}_0=\boldsymbol{a}\biggr]\nonumber
\\ &\stackrel{(a)}{=} \mathbb{E}_{\mathbf{\tau}\sim\boldsymbol{\pi}}\biggr[\sum_{t=0}^\infty \gamma^t r(s_t,\boldsymbol{a}_t)\biggr\rvert s_0=s, \boldsymbol{a}_0=\boldsymbol{a}\biggr] - \mathbb{E}_{\mathbf{\tau}\sim\boldsymbol{\pi}}\biggr[\overline{EQ}^i_0 \biggr\rvert s_0=s,\boldsymbol{a}_0=\boldsymbol{a}\biggr]\nonumber\\
&= Q^{\boldsymbol{\pi}}(s, \boldsymbol{a}) -\overline{EQ}^i_0|_{(s_0,\boldsymbol{a}^{-i}_0)=(s,\boldsymbol{a}^{-i})}\nonumber\\
&=Q^{\boldsymbol{\pi}}(s, \boldsymbol{a}) -\overline{EQ}^i,\label{eq:proofthm42_20}
\end{align}
where $\mathbf{\tau}=(s_1,\boldsymbol{a}_1,s_2,\boldsymbol{a}_2,\cdots)$ denotes the state-action trajectory.
Then,  the second term in the RHS of \eqref{eq:Ri_on} is expressed as
\begin{equation} \label{eq:proofthm42_21}
\mathbb{E}_{a^i\sim \pi^i}[Q^{\boldsymbol{\pi}}(s, \boldsymbol{a}))] -\mathbb{E}_{a^i\sim\pi^i}[Q(s,a^i,\boldsymbol{a}^{-i})].
\end{equation}
By adding the first term $\mathbb{E}_{a^i\sim\pi^i}[Q(s, a^i,\boldsymbol{a}^{-i})]$ in the RHS of \eqref{eq:Ri_on} to  \eqref{eq:proofthm42_21}, we have
\[
\mathcal{R}_{\text{on}}^i \mathbb{E}_{a^i\sim\pi^i}[Q(s,a^i,\boldsymbol{a}^{-i})] = \mathbb{E}_{a^i\sim \pi^i}[Q^{\boldsymbol{\pi}}(s, \boldsymbol{a})] 
\]
for any $Q$. In particular, we choose $Q=Q^{\boldsymbol{\pi}}$:
\begin{align*}
    \mathcal{R}_{\text{on}}^i \mathbb{E}_{a^i\sim\pi^i}[Q^{\boldsymbol{\pi}}(s,a^i,\boldsymbol{a}^{-i})] &= \mathbb{E}_{a^i\sim \pi^i}[Q^{\boldsymbol{\pi}}(s, \boldsymbol{a})]\\
    &=\mathbb{E}_{a^i\sim \pi^i}[Q^{\boldsymbol{\pi}}(s, a^i, \boldsymbol{a}^{-i})].
\end{align*}

Hence, $\mathbb{E}_{a^i\sim\pi^i}[Q^{\boldsymbol{\pi}}(s,a^i,\boldsymbol{a}^{-i})]$ is the fixed point of $\mathcal{R}_{\text{on}}^i$.
\qed

\subsection{Proof of Theorem \ref{thm:invariance_on}}
\label{proof:invariance_on}

\proof 
It is already shown that in the case with $\lambda=1$, the inner conditional expectation reduces to $Q^{\boldsymbol{\pi}}(s_0, \boldsymbol{a}_0) -\mathbb{E}_{a^i\sim\pi^i}[Q(s_0,a_0^i,\boldsymbol{a}_0^{-i})]$ in eq. \eqref{eq:proofthm42_20}. 

The invariance of policy gradient of \eqref{eq:proofthm42_20} can be shown in a similar way to the proof in \citep{coma}. That is, the policy gradient with respect to policy parameter $\theta$ with the advantage  \eqref{eq:proofthm42_20} becomes the true policy gradient $g = \mathbb{E}_{\boldsymbol{\pi}} \left[ \sum_i (\nabla_{\theta} \log \pi^i(a^i|o^i)) Q^{\boldsymbol{\pi}}(s,\boldsymbol{a}) \right]$ if the offset term $g_o = \mathbb{E}_{\boldsymbol{\pi}} \left[ \sum_i (\nabla_{\theta} \log \pi^i(a^i|o^i))\overline{EQ}^i) \right]=0$. This condition is valid under the decentralized execution assumption of $\boldsymbol{\pi}=\prod_{i=1}^N \pi^i(a^i|o^i)$ as shown below \citep{coma}:
\begin{align*}
    g_o &=  \mathbb{E}_{\boldsymbol{\pi}} \left[ \sum_i \nabla_{\theta} \log \pi^i(a^i|o^i) \overline{EQ}^i \right]\\ 
    &= \sum_i \sum_s d^{\boldsymbol{\pi}}(s) \sum_{\boldsymbol{a}}  \boldsymbol{\pi}(\boldsymbol{a}|\boldsymbol{o})   \nabla_{\theta} \log \pi^i(a^i|o^i)\overline{EQ}^i \\
    &= \sum_i \sum_s d^{\boldsymbol{\pi}}(s) \sum_{\boldsymbol{a}^{-i}} \sum_{a^{i}} \boldsymbol{\pi}(\boldsymbol{a}^{-i}|\boldsymbol{o}^{-i}) \pi^i(a^i|o^i)  \nabla_{\theta} \log \pi^i(a^i|o^i)\overline{EQ}^i \\
    &= \sum_i  \sum_s d^{\boldsymbol{\pi}}(s) \sum_{\boldsymbol{a}^{-i}} \boldsymbol{\pi}(\boldsymbol{a}^{-i}|\boldsymbol{o}^{-i})  \sum_{a^{i}} \pi^i(a^i|o^i)\nabla_{\theta} \log \pi^i(a^i|o^i)\overline{EQ}^i \\
    &= \sum_i\sum_s d^{\boldsymbol{\pi}}(s)  \sum_{\boldsymbol{a}^{-i}} \boldsymbol{\pi}(\boldsymbol{a}^{-i}|\boldsymbol{o}^{-i})  \sum_{a^{i}} \nabla_{\theta} \pi^i(a^i|o^i)\overline{EQ}^i \\
    &= \sum_i\sum_s d^{\boldsymbol{\pi}}(s)  \sum_{\boldsymbol{a}^{-i}} \boldsymbol{\pi}(\boldsymbol{a}^{-i}|\boldsymbol{o}^{-i})\overline{EQ}^i\nabla_{\theta}\sum_{a^{i}}  \pi^i(a^i|o^i) \\
    &= \sum_i\sum_s d^{\boldsymbol{\pi}}(s)  \sum_{\boldsymbol{a}^{-i}} \boldsymbol{\pi}(\boldsymbol{a}^{-i}|\boldsymbol{o}^{-i})\overline{EQ}^i\nabla_{\theta}1 \\
    &= 0. \qed
\end{align*}
\allowdisplaybreaks
\subsection{Proof of Theorem \ref{thm:contraction_off}}
\label{proof:contraction_off}
\textbf{Part I: Showing $\gamma$-contraction} 

We show that the operator $\mathcal{R}^i$ is a $\gamma$-contraction under  $L_\infty$-norm:
\begin{align}
&\mathcal{R}^i\overline{EQ}^i =\overline{EQ}^i_0 + \mathbb{E}_{a_0^i\sim\mu^i}\Bigg[\rho_0^i\mathbb{E}_{\boldsymbol{\mu}}\Bigg[\sum_{t\geq 0} \gamma^t \big(\prod_{j=1}^{t}c^{i}_j\big) \left(r_t + \gamma \overline{EQ}^i_{t+1} - \overline{EQ}^i_t\right) \nonumber
\\
&\qquad\qquad\qquad\qquad\qquad\qquad\qquad\qquad\biggr| s_0, a_0^i,\boldsymbol{a}_0^{-i}\Bigg]\Bigg] \label{eq:proof41_1}\\
&= (1 - \mathbb{E}_{a_0^i\sim\mu^i}[\rho_0^i])\overline{EQ}^i_0 + \mathbb{E}_{a_0^i\sim\mu^i}\Bigg[\rho_0^i\mathbb{E}_{\boldsymbol{\mu}}\Bigg[\sum_{t\geq 0} \gamma^t \big(\prod_{j=1}^{t}c^{i}_j\big) \\
&\qquad\qquad\qquad\qquad \cdot \left(r_t + \gamma  (1-c_{t+1}^i)\overline{EQ}^i_{t+1}\right)\biggr| s_0, a_0^i,\boldsymbol{a}_0^{-i}\Bigg]\Bigg], \label{eq:proof41_2}
\end{align}
where $(s_0,\boldsymbol{a}_0)=(s,\boldsymbol{a})$. Eq.~\eqref{eq:proof41_2} is obtained by rearranging the time index for $-\overline{EQ}^i_t$ term inside the double expectation of the RHS of \eqref{eq:proof41_1}.

Now, consider $\overline{EQ}^{i,1}$ and $\overline{EQ}^{i,2}$, in the same manner as \ref{proof:contraction_on}:
\begin{align}
\mathcal{R}^i\overline{EQ}^{i,1}-\mathcal{R}^i\overline{EQ}^{i,2}
&= (1 - \mathbb{E}_{a_0^i\sim\mu^i}[\rho_0^i])\Bigg(\overline{EQ}^{i,1}(s_{0},\boldsymbol{a}_0^{-i}) \nonumber \\
&\qquad\qquad\qquad\qquad\qquad- \overline{EQ}^{i,2}(s_0,\boldsymbol{a}_0^{-i})\Bigg) \nonumber \\
&\quad + \mathbb{E}_{a_0^i\sim\mu^i}\Bigg[\rho_0^i \, \mathbb{E}_{\boldsymbol{\mu}}\Bigg[
\sum_{t\geq 0} \gamma^{t+1} \left(\prod_{j=1}^{t}c^{i}_j\right)(1-c_{t+1}^i) \nonumber \\
&\qquad\cdot \left(\overline{EQ}^{i,1}(s_{t+1},\boldsymbol{a}_{t+1}^{-i}) - \overline{EQ}^{i,2}(s_{t+1},\boldsymbol{a}_{t+1}^{-i})\right)
\Bigg]\Bigg] \nonumber \\
&= (1 - \mathbb{E}_{a_0^i\sim\mu^i}[\rho_0^i])\Delta \overline{EQ}^i_0 \nonumber \\
& + \mathbb{E}_{a_0^i\sim\mu^i}\Bigg[\rho_0^i \, \mathbb{E}_{\boldsymbol{\mu}}\Bigg[
\sum_{t\geq 1} \gamma^{t} \left(\prod_{j=1}^{t-1}c^{i}_j\right)(1-c_{t}^i) \, \Delta \overline{EQ}^i_t
\Bigg]\Bigg]
\label{eq:proofprop1RiDiff}
\end{align}
where $\Delta \overline{EQ}^i := \overline{EQ}^{i,1} - \overline{EQ}^{i,2}$.  
Now consider the $L_\infty$-norm of $\mathcal{R}^i\overline{EQ}^i_1 - \mathcal{R}_{\text{on}}^i\overline{EQ}^i_2$:
\begin{align}
\biggr\|\mathcal{R}^i&\overline{EQ}^{i,1}(s,\boldsymbol{a}^{-i}) - \mathcal{R}^i\overline{EQ}^{i,2}(s,\boldsymbol{a}^{-i})\biggr\|_\infty \\
&=\biggr\| (1 - \mathbb{E}_{a_0^i\sim\mu^i}[\rho_0^i])\Delta \overline{EQ}^i_0 + \mathbb{E}_{a_0^i\sim\mu^i}\Bigg[\rho_0^i\mathbb{E}_{\boldsymbol{\mu}}\Bigg[\sum_{t\geq 1} \gamma^{t} \big(\prod_{j=1}^{t-1}c^{i}_j\big)\nonumber\\
&\qquad\qquad\qquad\qquad\qquad\qquad\qquad\cdot(1-c_{t}^i)\Delta \overline{EQ}^i_t\Bigg]\Bigg]\biggr\| _\infty\label{eq:thm41_3m1} \\
&\le \Bigg| (1 - \mathbb{E}_{a_0^i\sim\mu^i}[\rho_0^i])  \nonumber \\
&\qquad +\mathbb{E}_{a_0^i\sim\mu^i}\left[\rho_0^i\mathbb{E}_{\boldsymbol{\mu}}\left[\sum_{t\geq 1} \gamma^{t} \big(\prod_{j=1}^{t-1}c^{i}_j\big)(1-c_{t}^i)\right]\right]\Bigg| \cdot  \biggr\|\Delta \overline{EQ}^i\biggr\|_\infty, \label{eq:thm41_3}
\end{align}
where (\ref{eq:thm41_3}) is obtained by the definition of $L_\infty$-norm, the triangle inequality and the fact that the terms $(1 - \mathbb{E}_{a_0^i\sim\mu^i}[\rho_0^i])$ and $\gamma^{t} \big(\prod_{j=1}^{t-1}c^{i}_j\big)(1-c_{t}^i)$ in front of $\Delta \overline{EQ}^i$ in \eqref{eq:thm41_3m1} are all nonnegative. (Note that the trace coefficient $c_t^i \in [0,1]$.)

Now, consider the scaling term in front of $\biggr\|\Delta \overline{EQ}^i\biggr\|_\infty$ in \eqref{eq:thm41_3}. First, the second term in the front scaling term can be expressed as follows:
\begin{align}
&\mathbb{E}_{a_0^i\sim\mu^i}\left[\rho_0^i\mathbb{E}_{\boldsymbol{\mu}}\left[\sum_{t\geq 1} \gamma^{t} \big(\prod_{j=1}^{t-1}c^{i}_j\big)(1-c_{t}^i)\right]\right]  \nonumber\\
&= \mathbb{E}_{a_0^i\sim\mu^i}\left[\rho_0^i\mathbb{E}_{\boldsymbol{\mu}}\left[\sum_{t\geq 1} \gamma^{t} \big(\prod_{j=1}^{t-1}c^{i}_j\big)\right]\right] \nonumber\\
&\qquad\qquad\qquad\qquad\qquad\qquad- \mathbb{E}_{a_0^i\sim\mu^i}\left[\rho_0^i\mathbb{E}_{\boldsymbol{\mu}}\left[\sum_{t\geq 1} \gamma^{t} \big(\prod_{j=1}^{t-1}c^{i}_j\big)c_t^i\right]\right] \nonumber\\
&= \mathbb{E}_{a_0^i\sim\mu^i}\left[\rho_0^i\mathbb{E}_{\boldsymbol{\mu}}\left[\sum_{t\geq 1} \gamma^{t} \big(\prod_{j=1}^{t-1}c^{i}_j\big)\right]\right] \nonumber\\
&\qquad\qquad\qquad\qquad\qquad\qquad- \mathbb{E}_{a_0^i\sim\mu^i}\left[\rho_0^i\mathbb{E}_{\boldsymbol{\mu}}\left[\sum_{t\geq 1} \gamma^{t} \big(\prod_{j=1}^{t}c^{i}_j\big)\right]\right] \nonumber\\
&= \mathbb{E}_{a_0^i\sim\mu^i}\left[\rho_0^i\mathbb{E}_{\boldsymbol{\mu}}\left[\sum_{t\geq 1} \gamma^{t} \big(\prod_{j=1}^{t-1}c^{i}_j\big)\right]\right] \nonumber \\
& \qquad\qquad\qquad\qquad\qquad- \gamma^{-1}\mathbb{E}_{a_0^i\sim\mu^i}\left[\rho_0^i\mathbb{E}_{\boldsymbol{\mu}}\left[\sum_{t\geq 1} \gamma^{t+1} \big(\prod_{j=1}^{t}c^{i}_j\big)\right]\right] \nonumber\\
&= \mathbb{E}_{a_0^i\sim\mu^i}\left[\rho_0^i\mathbb{E}_{\boldsymbol{\mu}}\left[\sum_{t\geq 1} \gamma^{t} \big(\prod_{j=1}^{t-1}c^{i}_j\big)\right]\right] \nonumber\\
&\qquad\qquad\qquad- \gamma^{-1} \mathbb{E}_{a_0^i\sim\mu^i}\left[\rho_0^i\mathbb{E}_{\boldsymbol{\mu}}\left[-\gamma+\gamma+\sum_{t\geq 1} \gamma^{t+1} \big(\prod_{j=1}^{t}c^{i}_j\big)\right]\right] \nonumber\\
&= \mathbb{E}_{a_0^i\sim\mu^i}\left[\rho_0^i\mathbb{E}_{\boldsymbol{\mu}}\left[\sum_{t\geq 1} \gamma^{t} \big(\prod_{j=1}^{t-1}c^{i}_j\big)\right]\right] \nonumber\\
&\qquad\qquad\qquad- \gamma^{-1} \mathbb{E}_{a_0^i\sim\mu^i}\left[\rho_0^i\mathbb{E}_{\boldsymbol{\mu}}\left[-\gamma+\gamma+\sum_{t'\geq 2} \gamma^{t'} \big(\prod_{j=1}^{t'-1}c^{i}_j\big)\right]\right] \label{eq:thm41_6}\\
&= \mathbb{E}_{a_0^i\sim\mu^i}\left[\rho_0^i\mathbb{E}_{\boldsymbol{\mu}}\left[\sum_{t\geq 1} \gamma^{t} \big(\prod_{j=1}^{t-1}c^{i}_j\big)\right]\right] \nonumber\\
&\qquad\qquad\qquad\qquad- \gamma^{-1} \mathbb{E}_{a_0^i\sim\mu^i}\left[\rho_0^i\mathbb{E}_{\boldsymbol{\mu}}\left[-\gamma+\sum_{t\geq 1} \gamma^{t} \big(\prod_{j=1}^{t-1}c^{i}_j\big)\right]\right] \label{eq:thm41_7}\\
&= (1-\gamma^{-1})\mathbb{E}_{a_0^i\sim\mu^i}\left[\rho_0^i \mathbb{E}_{\boldsymbol{\mu}}\left[\sum_{t\geq 1} \gamma^{t} \big(\prod_{j=1}^{t-1}c^{i}_j\big)\right]\right] + \mathbb{E}_{a_0^i\sim\mu^i}\left[\rho_0^i\right]. \label{eq:thm41_8}
\end{align}

Here, \eqref{eq:thm41_6} is  obtained by defining the index $t':=t+1$. \eqref{eq:thm41_7} is valid because $\prod_{j=1}^0 c_j^i=1$ by definition. \eqref{eq:thm41_8} is obtained by factoring the common term out.  

Summing the first term $(1 - \mathbb{E}_{a_0^i\sim\mu^i}[\rho_0^i])$ and the second term expressed as \eqref{eq:thm41_8} in the front scaling term  of $\biggr\|\Delta \overline{EQ}^i\biggr\|_\infty$ in \eqref{eq:thm41_3}, we have the overall  scaling term, given by
\begin{align}
&(1 - \mathbb{E}_{a_0^i\sim\mu^i}[\rho_0^i])+
(1-\gamma^{-1})\mathbb{E}_{a_0^i\sim\mu^i}\left[\rho_0^i \mathbb{E}_{\boldsymbol{\mu}}\left[\sum_{t\geq 1} \gamma^{t} \big(\prod_{j=1}^{t-1}c^{i}_j\big)\right]\right] \nonumber\\
&\qquad\qquad\qquad\qquad\qquad\qquad\qquad\qquad\qquad + \mathbb{E}_{a_0^i\sim\mu^i}\left[\rho_0^i\right]\nonumber\\
&=1 +
(1-\gamma^{-1})\mathbb{E}_{a_0^i\sim\mu^i}\left[\rho_0^i \mathbb{E}_{\boldsymbol{\mu}}\left[\sum_{t\geq 1} \gamma^{t} \big(\prod_{j=1}^{t-1}c^{i}_j\big)\right]\right] \nonumber\\
&=1 +
(1-\gamma^{-1})\mathbb{E}_{a_0^i\sim\mu^i}\left[\rho_0^i \mathbb{E}_{\boldsymbol{\mu}}\left[\gamma \underbrace{\prod_{j=1}^0c_j^i}_{=1} \right.\right. \nonumber  \\
& \qquad\qquad\qquad\qquad\qquad \left.\left.+ \underbrace{\gamma^2 \prod_{j=1}^1 c_j^i+ \gamma^3 \prod_{j=1}^2 c_j^i+\cdots}_{\ge 0 }\right]\right] \label{eq:proofthm41_10m1}\\
&\le 1 + (1-\gamma^{-1})\gamma \mathbb{E}_{a_0^i\sim\mu^i}[\rho_0^i]\label{eq:proofthm41_10}\\
&= 1 + (1-\gamma^{-1})\gamma \label{eq:proofthm41_11} \\
&=\gamma. \nonumber
\end{align}
Here, \eqref{eq:proofthm41_10} is valid because the term inside the expectation $\mathbb{E}_{\boldsymbol{\mu}}[\cdot]$ in \eqref{eq:proofthm41_10m1} is larger than or equal to $\gamma$, and the term $(1-\gamma^{-1}) < 0$ for $0 < \gamma < 1$. Eq.~\eqref{eq:proofthm41_11} is valid because we do not apply clipping to $\rho_0^i$ and hence $\mathbb{E}_{a_0^i\sim\mu^i}[\rho_0^i]=\int \frac{\pi(a_0^i|o_0^i)}{\mu^i(a_0^i|o_0^i)} \mu^i(a_0^i|o_0^i) da_0^i=\int \pi(a_0^i|o_0^i)da_0^i=1$.  (Now, the reason why $\rho_0^i$ for the first action is not clipped is clear.)
Therefore, $\mathcal{R}^i$ is a $\gamma$-contraction. By the Banach fixed point theorem, $\mathcal{R}^i$ has a unique fixed point.

\textbf{Part II: Analysis on the general off-policy fixed point}

For the simplicity of the analysis, we assume that each agent well processes its historical data so that we can assume full-observability, i.e., $\boldsymbol{\pi}(\boldsymbol{a}|s) = \prod_{j=1}^N \pi^j (a^j|s)$, $\boldsymbol{\mu}(\boldsymbol{a}|s) = \prod_{j=1}^N \mu^j (a^j|s)$. In Agent $i$'s perspective, we regard $\tilde{y}^i_{t+1} = (s_{t+1}, \boldsymbol{a}_{t+1}^{-i})$ as an augmented state. Then we define a new transition function as follows:
\begin{equation}
    \mathcal{P}(s_{t+1}| s_{t}, \boldsymbol{a}_{t}) \prod_{j \neq i} \mu^j (a_{t+1}^j|s_{t+1}) =:  \mathcal{P}_i( \tilde{y}^i_{t+1} | \tilde{y}^i_{t}, a^i_t).
\end{equation}
In Agent $i$'s perspective, it interacts with the environment with the following augmented MDP: $<\mathcal{S}\times A^{N-1}, A, \mathcal{P}_i, r, \gamma>$, where $A$ is agent-wise action space and $A^{N} = \mathcal{A}$.

Consider the following term at timestep $t \geq 1$:
\begin{align}
    &\mathbb{E}_{\boldsymbol{\mu}} \Bigg[ \gamma^t \big(\prod_{j=1}^{t}c^{i}_j\big) \Big(r(s_t,\boldsymbol{a}_t) + \gamma  \overline{EQ}^i(s_{t+1},\boldsymbol{a}^{-i}_{t+1}) \nonumber\\
&\qquad\qquad\qquad\qquad\qquad\qquad- \overline{EQ}^i(s_t,\boldsymbol{a}^{-i}_t)\Big)\biggr| s_{0:t-1}, \boldsymbol{a}_{0:t-1} \Bigg] \nonumber \\
    &= \gamma^t \big(\prod_{j=1}^{t-1}c^{i}_j\big) \sum_{s_t} \mathcal{P}(s_t| s_{t-1}, \boldsymbol{a}_{t-1}) \sum_{\boldsymbol{a}_t} \prod_{j=1}^N \mu^j (a_t^j|s_t) c_t^i \times  \nonumber \\
    & \Big(r(s_t,\boldsymbol{a}_t) + \gamma  \sum_{s_{t+1}} \mathcal{P}(s_{t+1}| s_{t}, \boldsymbol{a}_{t}) \big( \sum_{ \boldsymbol{a}_{t+1}^{-i} } \prod_{j \neq i} \mu^j (a_{t+1}^j|s_{t+1})  \nonumber \\
    &\qquad\qquad\qquad\qquad\qquad \cdot\overline{EQ}^i(s_{t+1},\boldsymbol{a}^{-i}_{t+1}) \big) - \overline{EQ}^i(s_t,\boldsymbol{a}^{-i}_t)\Big) \nonumber \\
    &= \gamma^t \big(\prod_{j=1}^{t-1}c^{i}_j\big)  \sum_{\tilde{y}^i_{t}} \mathcal{P}_i( \tilde{y}^i_{t} |  \tilde{y}^i_{t-1}, a_{t-1}^i) \lambda C^i(\tilde{y}^i_{t}) \sum_{a_t^i} \pi_{*}^i(a_t^i | \tilde{y}^i_{t} ) \times  \nonumber \\
    & \left(r(\tilde{y}^i_{t}, a_t^i) + \gamma \sum_{\tilde{y}^i_{t+1}} \mathcal{P}_i( \tilde{y}^i_{t+1} | \tilde{y}^i_{t}, a_t^i)\overline{EQ}^i(\tilde{y}^i_{t+1})   - \overline{EQ}^i(\tilde{y}^i_{t})\right).
\end{align}

Here we insert $c_t^i = \lambda \min \big( 1, \frac{ \pi^i(a_t^i|s_t)  }{ \mu^i(a_t^i|s_t) } \min (1, \prod_{j \neq i} \frac{ \pi^j(a_t^j|s_t)  }{ \mu^j(a_t^j|s_t) }  )  \big)$
and define 
\begin{align}
    \pi_{*}^i(a_t^i | \tilde{y}^i_{t}) &= \pi_{*}^i(a_t^i | s_t, \boldsymbol{a}_{t}^{-i}) \nonumber \\
    &:= \frac{\mu^i (a_t^i|s_t) \min \big( 1, \frac{ \pi^i(a_t^i|s_t)  }{ \mu^i(a_t^i|s_t) } \min (1, \prod_{j \neq i} \frac{ \pi^j(a_t^j|s_t)  }{ \mu^j(a_t^j|s_t) }  )  \big) }{ \sum_{a'} \mu^i (a'|s_t) \min \big( 1, \frac{ \pi^i(a'|s_t)  }{ \mu^i(a'|s_t) } \min (1, \prod_{j \neq i} \frac{ \pi^j(a_t^j|s_t)  }{ \mu^j(a_t^j|s_t) }  )  \big)  }
\end{align}
for $t \geq 1$ and $\pi_{*}^i(a_0^i | \tilde{y}^i_{0}) = \mu^i(a_0^i | s_0)$. Plus, we denote the denominator as $C^i(\tilde{y}^i_{t}) = C^i(s_t, \boldsymbol{a}_{t}^{-i}) > 0$. Note that $\lambda(>0)$ does not appear in $\pi_{*}^i$.

By the definition of the Bellman equation for value function, the term
\begin{multline}
    \sum_{a_t^i} \pi_{*}^i(a_t^i | \tilde{y}^i_{t} ) \times \Bigg(r(\tilde{y}^i_{t}, a_t^i) + \gamma \sum_{\tilde{y}^i_{t+1}} \mathcal{P}_i( \tilde{y}^i_{t+1} | \tilde{y}^i_{t}, a_t^i)\overline{EQ}^i(\tilde{y}^i_{t+1}) \\
    - \overline{EQ}^i(\tilde{y}^i_{t})\Bigg)
\end{multline}
becomes zero if $\overline{EQ}^i(\tilde{y}^i_{t})$ is the value function of $\pi_{*}^i$ in the Agent $i$'s augmented MDP: $<\mathcal{S}\times A^{N-1}, A, \mathcal{P}_i, r, \gamma>$. Let the value function be $\phi^{\pi_{*}^i}$. Then 
\begin{equation}
    \mathcal{R}^i \overline{EQ}^{\pi_{*}^i} = \overline{EQ}^{\pi_{*}^i},
\end{equation}
so $\overline{EQ}^{\pi_{*}^i}$ is the unique fixed point of $\mathcal{R}^i$. 
\qed

{\em Remark:} Now we discuss the interpretation of $\phi^{\pi_{*}^i}$. Recall that 
\begin{align}
    \pi_{*}^i(a_t^i | \tilde{y}^i_{t}) &= \pi_{*}^i(a_t^i | s_t, \boldsymbol{a}_{t}^{-i}) \\
    &= \frac{ \min \big( \mu^i (a_t^i|s_t), \pi^i(a_t^i|s_t) \min (1, \boldsymbol{\rho}_t^{-i}(\boldsymbol{a}_{t}^{-i})  )  \big) }{ \sum_{a'} \min \big( \mu^i (a'|s_t), \pi^i(a'|s_t) \min (1,  \boldsymbol{\rho}_t^{-i}(\boldsymbol{a}_{t}^{-i})  )  \big) }
\end{align}
If $\boldsymbol{\rho}_t^{-i}(\boldsymbol{a}_{t}^{-i}) \geq 1$, $\pi_{*}^i(a_t^i | \tilde{y}^i_{t}) = \frac{ \min \big( \mu^i (a_t^i|s_t), \pi^i(a_t^i|s_t)  \big) }{ \sum_{a'} \min \big( \mu^i (a'|s_t), \pi^i(a'|s_t)   \big) }$, which is a conservative combination of $\pi^i$ and $\mu^i$. If $\boldsymbol{\rho}_t^{-i}(\boldsymbol{a}_{t}^{-i}) \to 0$, $\pi_{*}^i(a_t^i | \tilde{y}^i_{t}) $

\noindent$\to \pi^i(a_t^i|s_t)$. In short, $\pi_{*}^i(a_t^i | \tilde{y}^i_{t})$ is an approximation of Agent $i$'s current policy. 

Roughly speaking, the fixed point represents the average quality of the approximated target policy from Agent \(i\)'s perspective, considering all the other agents as part of the environment and following behavior policy \(\mu^j, j \neq i\). Then, \(\delta^{i,\text{GPAE}}_{t} = r_t + \gamma \overline{EQ}^i_{t+1} - \overline{EQ}^i_t\), using the fixed point, measures the contribution of \(a_t^i\) in this MDP.

If we control the behavior policy \(\boldsymbol{\mu}\) so that it does not deviate much from the current \(\boldsymbol{\pi}\), the off-policy fixed point does not deviate significantly from \(\mathbb{E}_{a^i \sim \pi^i}[Q^{\boldsymbol{\pi}}(s, a^i,  \boldsymbol{a}^{-i})]\). Then, \(\hat{A}_t^{i,\text{GPAE}}\) approximates GAE(\(\lambda\)) with an estimation of Agent \(i\)'s per-agent state value function. Note that \(\lambda\) does not affect the characteristic of the fixed point, but influences the per-agent advantage estimation. 

In the on-policy case, $\pi_{*}^i(a_t^i | \tilde{y}^i_{t}) = \pi^i(a_t^i|s_t)$, and the other agents take the current policy $\pi^j, j \neq i$. If we set  $Q = Q^{\boldsymbol{\pi}}$ in $\overline{EQ}^i$, we have
\begin{align}
    &\mathbb{E}_{\boldsymbol{\mu}  } \Bigg[ \gamma^t \big(\prod_{j=1}^{t}c^{i}_j\big) \left(r(s_t,\boldsymbol{a}_t) + \gamma  \overline{EQ}^i(s_{t+1},\boldsymbol{a}^{-i}_{t+1}) - \overline{EQ}^i(s_t,\boldsymbol{a}^{-i}_t)\right)\nonumber\\
    &\qquad\qquad\qquad\qquad\qquad\qquad\qquad\qquad\biggr| s_{0:t-1}, \boldsymbol{a}_{0:t-1} \Bigg] \nonumber \\
    &= (\gamma \lambda)^t \sum_{s_t} \mathcal{P}(s_t| s_{t-1}, \boldsymbol{a}_{t-1}) \sum_{\boldsymbol{a}_t} \prod_{j=1}^N \pi^j (a_t^j|s_t)  \times  \nonumber \\
    & \Bigg(r(s_t,\boldsymbol{a}_t) + \gamma  \sum_{s_{t+1}} \mathcal{P}(s_{t+1}| s_{t}, \boldsymbol{a}_{t}) \big( \sum_{ \boldsymbol{a}_{t+1}^{-i} } \prod_{j \neq i} \pi^j (a_{t+1}^j|s_{t+1}) \nonumber \\
    & \qquad\qquad\qquad\qquad\qquad\cdot \overline{EQ}^i(s_{t+1},\boldsymbol{a}^{-i}_{t+1}) \big) - \overline{EQ}^i(s_t,\boldsymbol{a}^{-i}_t)\Bigg) \nonumber \\
    &= (\gamma \lambda)^t \sum_{s_t} \mathcal{P}(s_t| s_{t-1}, \boldsymbol{a}_{t-1})  \sum_{\boldsymbol{a}_t} \prod_{j=1}^N \pi^j (a_t^j|s_t)  \times \nonumber \\
    &\qquad\qquad\qquad\qquad\qquad\left( Q^{\boldsymbol{\pi}}(s_t,\boldsymbol{a}_t) - \mathbb{E}_{a^i \sim \pi^i}[Q^{\boldsymbol{\pi}}(s_t, a^i,  \boldsymbol{a}_t^{-i})]  \right)  \nonumber \\
    &=  (\gamma \lambda)^t \sum_{s_t} \mathcal{P}(s_t| s_{t-1}, \boldsymbol{a}_{t-1})  (V^{\boldsymbol{\pi}}(s_t) - V^{\boldsymbol{\pi}}(s_t)) = 0.
\end{align}

Therefore, the fixed point of $\mathcal{R}^i$ is $\overline{EQ}^{\pi_{*}^i}(s_t, \boldsymbol{a}_t^{-i}) $

$= \mathbb{E}_{a^i \sim \pi^i}[Q^{\boldsymbol{\pi}}(s_t, a^i, \boldsymbol{a}_t^{-i})]$, and $\overline{EQ}^i$ is an estimation of the unique fixed point of $\mathcal{R}^i$, the Agent $i$'s per-agent state value function.

\vspace{2em}
\textbf{Part III: Showing Policy Invariance of $\hat{A}_t^{i,\text{GPAE}}$}

Now, consider the general off-policy case with the trace. In this case, we have the following:

\begin{align}
&\mathbb{E}_{\mathbf{\tau}\sim\boldsymbol{\mu}} \Bigg[\sum_{t\geq 0} \gamma^t \big(\prod_{j=1}^{t}c^{i}_j\big) \left(r(s_t,\boldsymbol{a}_t) + \gamma  \overline{EQ}^i(s_{t+1},\boldsymbol{a}^{-i}_{t+1}) - \overline{EQ}^i(s_t,\boldsymbol{a}^{-i}_t)\right)\nonumber\\
& \hspace{20em}\biggr| s_0=s, \boldsymbol{a}_0=\boldsymbol{a}\Bigg] \nonumber\\
&=\mathbb{E}_{\mathbf{\tau}\sim\boldsymbol{\mu}} \Bigg[\sum_{t\geq 0} \gamma^t \big(\prod_{j=1}^{t}c^{i}_j-\prod_{j=1}^{t}\boldsymbol{\rho}_j+\prod_{j=1}^{t}\boldsymbol{\rho}_j\big) \Big(r(s_t,\boldsymbol{a}_t) \nonumber\\
& \hspace{5em}+ \gamma  \overline{EQ}^i(s_{t+1},\boldsymbol{a}^{-i}_{t+1}) - \overline{EQ}^i(s_t,\boldsymbol{a}^{-i}_t)\Big)\biggr| s_0=s, \boldsymbol{a}_0=\boldsymbol{a}\Bigg]\label{eq:proofTh1_16} \\
&=\mathbb{E}_{\mathbf{\tau}\sim\boldsymbol{\mu}} \Bigg[\sum_{t\geq 0} \gamma^t \big(\prod_{j=1}^{t}\boldsymbol{\rho}_j\big) \Bigg(r(s_t,\boldsymbol{a}_t) + \gamma  \overline{EQ}^i(s_{t+1},\boldsymbol{a}^{-i}_{t+1}) \nonumber\\
& \hspace{10em}\left.- \overline{EQ}^i(s_t,\boldsymbol{a}^{-i}_t)\Bigg)\Bigg. \biggr| s_0=s, \boldsymbol{a}_0=\boldsymbol{a}\right] \label{eq:proofTh1_17} \\
&~~~~~+\mathbb{E}_{\mathbf{\tau}\sim\boldsymbol{\mu}} \Bigg[\sum_{t\geq 0} \gamma^t \big(\prod_{j=1}^{t}c^{i}_j-\prod_{j=1}^{t}\boldsymbol{\rho}_j\big) \Big(r(s_t,\boldsymbol{a}_t) + \gamma  \overline{EQ}^i(s_{t+1},\boldsymbol{a}^{-i}_{t+1}) \nonumber\\
& \hspace{10em}- \overline{EQ}^i(s_t,\boldsymbol{a}^{-i}_t)\Big)\biggr| s_0=s, \boldsymbol{a}_0=\boldsymbol{a}\Bigg] \label{eq:proofTh1_18} \\
&=\mathbb{E}_{\mathbf{\tau}\sim\boldsymbol{\pi}} \Bigg[\sum_{t\geq 0} \gamma^t  \Big(r(s_t,\boldsymbol{a}_t) + \gamma  \overline{EQ}^i(s_{t+1},\boldsymbol{a}^{-i}_{t+1}) - \overline{EQ}^i(s_t,\boldsymbol{a}^{-i}_t)\Big)\nonumber\\
& \hspace{15em}\biggr| s_0=s, \boldsymbol{a}_0=\boldsymbol{a}\Bigg] \label{eq:proofTh1_19} \\
&~~~~~+\mathbb{E}_{\mathbf{\tau}\sim\boldsymbol{\mu}} \Bigg[\sum_{t\geq 0} \gamma^t \big(\prod_{j=1}^{t}c^{i}_j-\prod_{j=1}^{t}\boldsymbol{\rho}_j\big) \Big(r(s_t,\boldsymbol{a}_t) + \gamma  \overline{EQ}^i(s_{t+1},\boldsymbol{a}^{-i}_{t+1}) \nonumber\\
& \hspace{10em} - \overline{EQ}^i(s_t,\boldsymbol{a}^{-i}_t)\Big)\biggr| s_0=s, \boldsymbol{a}_0=\boldsymbol{a}\Bigg] \label{eq:proofTh1_20} \\
&=  Q^{\boldsymbol{\pi}}(s_0, \boldsymbol{a}_0) -\mathbb{E}_{a^i\sim\pi^i}[Q(s_0,a_0^i,\boldsymbol{a}_0^{-i})] \label{eq:proofTh1_21}  \\
&~~~~~+  \mathbb{E}_{\mathbf{\tau}\sim\boldsymbol{\mu}} \left[\sum_{t\geq 0} \gamma^t \big(\prod_{j=1}^{t}c^{i}_j-\prod_{j=1}^{t}\boldsymbol{\rho}_j\big) r(s_t,\boldsymbol{a}_t) \biggr| s_0=s, \boldsymbol{a}_0=\boldsymbol{a}\right] ~~~~~~~~~~~(=: B_1)\label{eq:proofTh1_22}\\
&~~~~~+   \mathbb{E}_{\mathbf{\tau}\sim\boldsymbol{\mu}} \Bigg[\sum_{t\geq 1} \gamma^{t} \overline{EQ}^i(s_{t},\boldsymbol{a}_{t}^{-i})\Bigg((1-c_{t}^i)\prod_{j=1}^{t-1}c^{i}_j \nonumber \\
& \qquad\qquad\qquad\qquad\qquad - (1-\boldsymbol{\rho}_{t})\prod_{j=1}^{t-1}\boldsymbol{\rho}_j\Bigg)  \Bigg]  ~~~~~~~~~~~(=: B_2)\label{eq:proofTh1_23}\\
&~~~~~- ( 1-1)\overline{EQ}^i(s_0,\boldsymbol{a}_0^{-i}) \label{eq:proofTh1_24}\\
&=  Q^{\boldsymbol{\pi}}(s_0, \boldsymbol{a}_0) -\mathbb{E}_{a^i\sim\pi^i}[Q(s_0,a_0^i,\boldsymbol{a}_0^{-i})] + \underbrace{B_1 + B_2}_{=:B}.\label{eq:proofTh1_25}
\end{align}

Here, \eqref{eq:proofTh1_16} is obtained by subtracting and adding the same term $\prod_{j=1}^{t}\boldsymbol{\rho}_j$, where $\boldsymbol{\rho}_j$ is the unclipped original importance sampling ratio. \eqref{eq:proofTh1_17} and \eqref{eq:proofTh1_18} are obtained by separating \eqref{eq:proofTh1_16} into two terms. \eqref{eq:proofTh1_19} is valid because $\mathbb{E}_{\tau \sim \boldsymbol{\mu}}[\sum_t(\prod_{j=1}^{t}\boldsymbol{\rho}_j)(\cdot)]$ with unclipped trace is equivalent to $\mathbb{E}_{\tau \sim \boldsymbol{\pi}}[\sum_t (\cdot)]$ \cite{Precup2000}. Derivation of \eqref{eq:proofTh1_21} from \eqref{eq:proofTh1_19} is already done in the on-policy case. Two bias terms \eqref{eq:proofTh1_22} and \eqref{eq:proofTh1_23} are obtained by rearranging \eqref{eq:proofTh1_20}. \eqref{eq:proofTh1_24} is due to $\prod_{j=1}^0 c_j^i=\prod_{j=1}^0 \boldsymbol{\rho}_j=1$ by definition.  

Now consider the bias terms $B_1$ and $B_2$. Both terms reduces to zero if $c_j^i = \boldsymbol{\rho}$, thus $\hat{A}_t^{i,\text{GPAE}}$ guarantees policy invariance, which means $\hat{A}_t^{i,\text{GPAE}}$ is an unbiased advantage estimator. \qed

\vspace{2em}
\textbf{Part IV: Analysis on bias term of $\hat{A}_t^{i,\text{GPAE}}$}

The boundedness of $B_1$ can be shown as follows:
\begin{align}
B_1 &= \mathbb{E}_{\mathbf{\tau}\sim\boldsymbol{\mu}} \left[\sum_{t\geq 0} \gamma^t \big(\prod_{j=1}^{t}c^{i}_j-\prod_{j=1}^{t}\boldsymbol{\rho}_j\big) r(s_t,\boldsymbol{a}_t) \biggr| s_0=s, \boldsymbol{a}_0=\boldsymbol{a}\right] \nonumber\\
&= \mathbb{E}_{\mathbf{\tau}\sim\boldsymbol{\mu}} \left[\sum_{t\geq 0} \gamma^t \big(\prod_{j=1}^{t}c^{i}_j\big) r(s_t,\boldsymbol{a}_t) \biggr| s_0=s, \boldsymbol{a}_0=\boldsymbol{a}\right]\nonumber \\
&\qquad\qquad  - \mathbb{E}_{\mathbf{\tau}\sim\boldsymbol{\mu}} \left[\sum_{t\geq 0} \gamma^t \big(\prod_{j=1}^{t}\boldsymbol{\rho}_j\big) r(s_t,\boldsymbol{a}_t) \biggr| s_0=s, \boldsymbol{a}_0=\boldsymbol{a}\right]\nonumber \\
&= \mathbb{E}_{\mathbf{\tau}\sim\boldsymbol{\mu}} \left[\sum_{t\geq 0} \gamma^t \big(\prod_{j=1}^{t}c^{i}_j\big) r(s_t,\boldsymbol{a}_t) \biggr| s_0=s, \boldsymbol{a}_0=\boldsymbol{a}\right]\nonumber \\
&\qquad\qquad\qquad\qquad - \mathbb{E}_{\mathbf{\tau}\sim\boldsymbol{\pi}} \left[\sum_{t\geq 0} \gamma^t r(s_t,\boldsymbol{a}_t) \biggr| s_0=s, \boldsymbol{a}_0=\boldsymbol{a}\right]\label{eq:proofTh1_26}\\
&\le \mathbb{E}_{\mathbf{\tau}\sim\boldsymbol{\mu}} \left[\sum_{t\geq 0} \gamma^t \big(\prod_{j=1}^{t}c^{i}_j\big) (r(s_t,\boldsymbol{a}_t) + M_r) \biggr| s_0=s, \boldsymbol{a}_0=\boldsymbol{a}\right]\nonumber \\
&\qquad\qquad\qquad\qquad - \mathbb{E}_{\mathbf{\tau}\sim\boldsymbol{\pi}} \left[\sum_{t\geq 0} \gamma^t r(s_t,\boldsymbol{a}_t) \biggr| s_0=s, \boldsymbol{a}_0=\boldsymbol{a}\right]\label{eq:proofTh1_27}\\
&\le \mathbb{E}_{\mathbf{\tau}\sim\boldsymbol{\mu}} \left[\sum_{t\geq 0} \gamma^t  (r(s_t,\boldsymbol{a}_t) + M_r) \biggr| s_0=s, \boldsymbol{a}_0=\boldsymbol{a}\right] \nonumber \\
&\qquad\qquad\qquad\qquad- \mathbb{E}_{\mathbf{\tau}\sim\boldsymbol{\pi}} \left[\sum_{t\geq 0} \gamma^t r(s_t,\boldsymbol{a}_t) \biggr| s_0=s, \boldsymbol{a}_0=\boldsymbol{a}\right]\label{eq:proofTh1_28}\\
&= \mathbb{E}_{\mathbf{\tau}\sim\boldsymbol{\mu}} \left[\sum_{t\geq 0} \gamma^t  r(s_t,\boldsymbol{a}_t) \biggr| s_0=s, \boldsymbol{a}_0=\boldsymbol{a}\right]  
 + \frac{M_r}{1-\gamma} \nonumber \\
&\qquad\qquad\qquad\qquad- \mathbb{E}_{\mathbf{\tau}\sim\boldsymbol{\pi}} \left[\sum_{t\geq 0} \gamma^t r(s_t,\boldsymbol{a}_t) \biggr| s_0=s, \boldsymbol{a}_0=\boldsymbol{a}\right]\label{eq:proofTh1_29}\\
&= Q^{\boldsymbol{\mu}}(s,\boldsymbol{a}) - Q^{\boldsymbol{\pi}}(s,\boldsymbol{a}) + \frac{M_r}{1-\gamma} \label{eq:proofTh1_30}\\
&< \infty. \nonumber
\end{align}
Here, \eqref{eq:proofTh1_26} is valid because  $\mathbb{E}_{\tau \sim \boldsymbol{\mu}}[\sum_t(\prod_{j=1}^{t}\boldsymbol{\rho}_j)(\cdot)]=\mathbb{E}_{\tau \sim \boldsymbol{\pi}}[\sum_t (\cdot)]$ \cite{Precup2000}. \eqref{eq:proofTh1_27} is valid because the reward function is bounded, i.e., $-M_r \le r(s,a) \le M_r, \forall s,a$. 
\eqref{eq:proofTh1_28}
is valid because the clipped trace has  $0\le c_j^i \le 1$.  Note that when $r(s,a) \ge 0, \forall s,a$, we do not have the term $M_r/(1-\gamma)$ in \eqref{eq:proofTh1_29} and \eqref{eq:proofTh1_30}.

The boundedness of $B_2$ with finite $Q$-function can be shown as follows:
\begin{align}
B_2 &= \mathbb{E}_{\mathbf{\tau}\sim\boldsymbol{\mu}} \left[\sum_{t\geq 1} \gamma^{t} \overline{EQ}^i(s_{t},\boldsymbol{a}_{t}^{-i})\left((1-c_{t}^i)\prod_{j=1}^{t-1}c^{i}_j  - (1-\boldsymbol{\rho}_{t})\prod_{j=1}^{t-1}\boldsymbol{\rho}_j\right)  \right] \nonumber\\
&=\mathbb{E}_{\mathbf{\tau}\sim\boldsymbol{\mu}} \left[\sum_{t\geq 1} \gamma^{t} \overline{EQ}^i(s_{t},\boldsymbol{a}_{t}^{-i})\left((1-c_{t}^i)\prod_{j=1}^{t-1}c^{i}_j  \right)  \right] \nonumber\\
& ~~~~~-\mathbb{E}_{\mathbf{\tau}\sim\boldsymbol{\mu}} \left[\sum_{t\geq 1} \gamma^{t} \overline{EQ}^i(s_{t},\boldsymbol{a}_{t}^{-i})\left( (1-\boldsymbol{\rho}_{t})\prod_{j=1}^{t-1}\boldsymbol{\rho}_j\right)  \right] \\
&=\mathbb{E}_{\mathbf{\tau}\sim\boldsymbol{\mu}} \left[\sum_{t\geq 1} \gamma^{t} \overline{EQ}^i(s_{t},\boldsymbol{a}_{t}^{-i})\left((1-c_{t}^i)\prod_{j=1}^{t-1}c^{i}_j  \right)  \right] \label{eq:proofTh1_32}
\\
& ~~~~~-\mathbb{E}_{\mathbf{\tau}\sim\boldsymbol{\mu}} \left[\sum_{t\geq 1} \gamma^{t} \overline{EQ}^i(s_{t},\boldsymbol{a}_{t}^{-i})\left( \prod_{j=1}^{t-1}\boldsymbol{\rho}_j\right)  \right] \nonumber \\
&\qquad\qquad\qquad\qquad+\mathbb{E}_{\mathbf{\tau}\sim\boldsymbol{\mu}} \left[\sum_{t\geq 1} \gamma^{t} \overline{EQ}^i(s_{t},\boldsymbol{a}_{t}^{-i})\left(\boldsymbol{\rho}_{t}\prod_{j=1}^{t-1}\boldsymbol{\rho}_j\right)  \right] \label{eq:proofTh1_34}
\\
&=\mathbb{E}_{\mathbf{\tau}\sim\boldsymbol{\mu}} \left[\sum_{t\geq 1} \gamma^{t} \overline{EQ}^i(s_{t},\boldsymbol{a}_{t}^{-i})\left((1-c_{t}^i)\prod_{j=1}^{t-1}c^{i}_j  \right)  \right] \label{eq:proofTh1_35}
\\
& ~~~~~-\mathbb{E}_{\mathbf{\tau}\sim\boldsymbol{\mu}} \left[\sum_{t\geq 1} \gamma^{t} \overline{EQ}^i(s_{t},\boldsymbol{a}_{t}^{-i})\left( \prod_{j=1}^{t-1}\boldsymbol{\rho}_j\right)  \right] \nonumber \\
&\qquad\qquad\qquad\qquad +\mathbb{E}_{\mathbf{\tau}\sim\boldsymbol{\mu}} \left[\sum_{t\geq 1} \gamma^{t} \overline{EQ}^i(s_{t},\boldsymbol{a}_{t}^{-i})\left(\prod_{j=1}^{t}\boldsymbol{\rho}_j\right)  \right]\label{eq:proofTh1_37}
\\
&\le \mathbb{E}_{\mathbf{\tau}\sim\boldsymbol{\mu}} \left[\sum_{t\geq 1} \gamma^{t} \frac{M_r }{1-\gamma}  \right] \label{eq:proofTh1_38}\\
&+\mathbb{E}
_{\mathbf{\tau}\sim\boldsymbol{\mu}} \left[\sum_{t\geq 1} \gamma^{t} \frac{ M_r }{1-\gamma}\left( \prod_{j=1}^{t-1}\boldsymbol{\rho}_j\right)  \right] \label{eq:proofTh1_39}
\\
&+\mathbb{E}_{\mathbf{\tau}\sim\boldsymbol{\mu}} \left[\sum_{t\geq 1} \gamma^{t} \frac{ M_r }{1-\gamma}\left(\prod_{j=1}^{t}\boldsymbol{\rho}_j\right)  \right]\label{eq:proofTh1_40}
\\
&=\mathbb{E}_{\mathbf{\tau}\sim\boldsymbol{\mu}} \left[\sum_{t\geq 1} \gamma^{t} \frac{M_r }{1-\gamma}  \right] \label{eq:proofTh1_41}
\\
&+\gamma\mathbb{E}_{\mathbf{\tau}\sim\boldsymbol{\mu}} \left[\sum_{t\geq 1} \gamma^{t-1} \frac{ M_r }{1-\gamma}\left( \prod_{j=1}^{t-1}\boldsymbol{\rho}_j\right)  \right] \label{eq:proofTh1_42}
\\
&+\mathbb{E}_{\mathbf{\tau}\sim\boldsymbol{\mu}} \left[\sum_{t\geq 1} \gamma^{t} \frac{ M_r }{1-\gamma}\left(\prod_{j=1}^{t}\boldsymbol{\rho}_j\right)  \right]\label{eq:proofTh1_43}
\\
&=\mathbb{E}_{\mathbf{\tau}\sim\boldsymbol{\mu}} \left[\sum_{t\geq 1} \gamma^{t} \frac{M_r }{1-\gamma}  \right] \label{eq:proofTh1_44}
\\
&+\gamma\mathbb{E}_{\mathbf{\tau}\sim\boldsymbol{\mu}} \left[\sum_{t\geq 0} \gamma^{t} \frac{ M_r }{1-\gamma}\left( \prod_{j=1}^{t}\boldsymbol{\rho}_j\right)  \right] \label{eq:proofTh1_45}
\\
&+\mathbb{E}_{\mathbf{\tau}\sim\boldsymbol{\mu}} \left[\sum_{t\geq 0} \gamma^{t} \frac{ M_r }{1-\gamma}\left(\prod_{j=0}^{t}\boldsymbol{\rho}_j\right)  \right] - \frac{M_r}{1-\gamma}\label{eq:proofTh1_46}
\\
&= \frac{M_r\gamma}{(1-\gamma)^2} + \gamma \mathbb{E}_{\boldsymbol{\pi}} \left[ \sum_{t\ge 0} \gamma^t \frac{M_r}{1-\gamma}\right] + \mathbb{E}_{\boldsymbol{\pi}} \left[ \sum_{t\ge 0} \gamma^t \frac{M_r}{1-\gamma} \right] - \frac{M_r}{1-\gamma} \label{eq:proofTh1_47} 
\\
&= \frac{M_r\gamma}{(1-\gamma)^2} +\frac{M_r\gamma}{(1-\gamma)^2}+ \frac{M_r}{(1-\gamma)^2} - \frac{M_r}{1-\gamma} < \infty. \nonumber
\end{align}
Here, \eqref{eq:proofTh1_37} is obtained by combining $\boldsymbol{\rho}_t$ with $\prod_{j=1}^{t-1}\boldsymbol{\rho}_j$ in \eqref{eq:proofTh1_34}. \eqref{eq:proofTh1_38} is obtained from \eqref{eq:proofTh1_35} because $0\le c_j^i \le 1$ and $|\overline{EQ}^i(s,\boldsymbol{a}^{-i})| \le \frac{M_r}{1-\gamma}$. Note that since the reward function is bounded, i.e., $|r(s,a)| \le M_r, \forall s,a$, any value, i.e., a discounted sum of rewards is bounded as $-\frac{M_r}{1-\gamma}=-\sum_{t\ge 0} \gamma^t M_r \le \sum_{t\ge 0} \gamma^t r_t \le \sum_{t\ge 0} \gamma^t M_r= \frac{M_r}{1-\gamma}$. Hence, $\overline{EQ}^i(s,\boldsymbol{a}^{-i}) = \mathbb{E}_{a^i\sim\pi^i}[Q(s,a^i,\boldsymbol{a}^{-i})]$ is bounded by $\frac{M_r}{1-\gamma}$. \eqref{eq:proofTh1_42} is obtained from \eqref{eq:proofTh1_39}. \eqref{eq:proofTh1_45} and \eqref{eq:proofTh1_46} are the rearrangement of the sum indices of \eqref{eq:proofTh1_42} and \eqref{eq:proofTh1_43}, respectively. \eqref{eq:proofTh1_47} is obtained by using  $\mathbb{E}_{\tau \sim \boldsymbol{\mu}}[\sum_t(\prod_{j=1}^{t}\boldsymbol{\rho}_j)(\cdot)]=\mathbb{E}_{\tau \sim \boldsymbol{\pi}}[\sum_t (\cdot)]$ \cite{Precup2000}.






\newpage

\section{Additional Explanation for the Motivation Experiment}
\label{appendix:motivation}

This section provides a more detailed quantitative interpretation of the advantage gap metric introduced in Section~\ref{subsec:motivation_exp}. The analysis clarifies how the SMAX environment's reward structure and dynamics permit an approximate theoretical reference for \(\Delta A\) under the controlled perturbation setting used in the motivation experiment.

\vspace{0.35em}
\noindent \textbf{Environment and perturbation setup.}  
In the SMAX-\texttt{3m} task, the team reward consists of two parts: (1) a cumulative \emph{damage reward} proportional to the total reduction in enemy HP, and (2) a binary \emph{win reward} of \(1.0\) obtained when all opponents are defeated. To create a controlled perturbation, one of the three agents is forced to execute a \texttt{stop} action with probability \(0.05\) at random timesteps, interrupting its attack and consequently lowering both its damage contribution and the team’s win probability.

\vspace{0.35em}
\noindent \textbf{(1) Loss in cumulative damage reward due to abnormal action.}  
Each marine has \(45\) HP and inflicts \(9\) HP of damage per attack, with a firing rate of approximately \(0.82\) attacks per timestep. Since the total enemy HP pool is \(135 = 45\times3\), each marine contributes an expected reward of
\[
\frac{9}{135}\times0.82 \approx 0.0547
\]
per timestep under normal attacking behavior. It takes approximately \(135/9=15\) attacks to eliminate all enemies, and with three marines attacking in parallel at a rate of \(0.82\), the expected number of timesteps required is \(15/3/0.82\approx6.1\). Given that an average episode lasts about \(30\) timesteps, each agent spends roughly \(6.1/30\approx0.203\) of its episode performing active attacks. Therefore, if a \texttt{stop} action replaces an attack during this period, the expected damage reward loss is approximately \(0.203\times0.0547\approx0.011\).

For a normal action \(a\), the empirical advantage can be expressed as
\[
A(s,a)= Q(s,a) - V(s) = r_1(s,a)+\gamma r_2+\gamma^2 r_3+\cdots+\gamma^T r_{T+1}-V(s).
\]
When an abnormal action \(\tilde{a}\) is executed, the immediate reward \(r_1(s,\tilde{a})\) is smaller by about \(0.011\), while the subsequent rewards \(r_2, r_3, \ldots\) remain statistically similar. Hence, the value gap from the damage component is roughly \(0.011\).

\vspace{0.35em}
\noindent \textbf{(2) Loss in win-reward value due to abnormal action.}  
Empirically, the baseline win rate in \texttt{3m} is approximately \(60\%\). Under the perturbation with a \(5\%\) anomaly probability, the win rate drops by about \(10\%\), meaning that one in ten episodes that would have succeeded now fails due to the anomaly. Since the win reward of \(1.0\) is only given upon victory, we can model the expected value difference as
\[
\gamma^{15}\times 0.6\times(1-0.9) = 0.052,
\]
where \(\gamma = 0.99\) and we assume the abnormal action occurs around \(t=15\), the middle of a typical 30-step episode.

\vspace{0.35em}
\noindent \textbf{(3) Theoretical reference and empirical comparison.}  
Summing both components, the approximate theoretical advantage gap becomes
\[
\Delta A_{\text{theory}} \approx 0.063 = 0.052 + 0.011.
\]
In our experiments, GPAE-off produced an empirical \(\Delta A \approx 0.013\), which, while smaller than the theoretical upper bound, is substantially closer to the ideal value than those of other estimators (GAE: \(0.000\), DAE: \(0.0014\), COMA: \(0.0016\)). Moreover, GPAE yields near-zero \(\Delta A\) in trajectories without anomalies, confirming that it does not introduce artificial variance.

\vspace{0.35em}
\noindent \textbf{Discussion.}  
This analysis provides a quantitative grounding for the diagnostic experiment in Section~\ref{sec:motivation}. The controlled perturbation allows a tractable approximation of an ``ideal'' per-agent value gap derived from the environment’s reward mechanics. The close alignment of GPAE with this theoretical reference supports that the estimator accurately captures credit differences caused by anomalous actions rather than introducing spurious variance or bias.

\section{More Explanation on DT-ISR}
\label{appendix:dt}

This section provides a more detailed explanation of the proposed double-truncated importance sampling ratio (DT-ISR) weight, which is introduced to address the variance explosion problem in multi-agent off-policy learning. We first discuss the key elements that define the DT-ISR, including the individual and joint importance sampling ratios, and then explore the effects of the truncation mechanisms on agent learning.

The DT-ISR weight at time step $t$ for agent $i$ is formulated as:

\[
c_t^{i,\text{DT}} = \lambda \min\left(1, \rho_t^i \min(\eta, \boldsymbol{\rho}_t^{-i})\right)
\]

Here, \(\rho_t^i\) is the individual ISR for agent \(i\), \(\boldsymbol{\rho}_t^{-i}\) is the joint ISR excluding agent \(i\), and \(\eta\) is a constant that limits the impact of other agents’ policies.

\subsection*{Function of $c_t^{i,\text{DT}}$ as a function of $\rho^i$ and $\boldsymbol{\rho}^{-i}$}

The weight $c_t^{i,\text{DT}}$ is designed to adapt based on two components:

\begin{enumerate}
    \item \textbf{Individual ISR \(\rho_t^i\)}: This component captures the relative behavior of agent \(i\) with respect to its own policy and behavior distribution. It is defined as the ratio of the policy \(\pi_t^i\) of agent \(i\) over the behavior policy \(\mu_t^i\), i.e.,
    \[
    \rho_t^i = \frac{\pi_t^i}{\mu_t^i}
    \]
    The term \(\rho_t^i\) reflects how well agent \(i\) is following its own policy relative to the distribution of the observed behavior. When \(\rho_t^i > 1\), the agent is acting more greedily (relative to its behavior policy), while \(\rho_t^i < 1\) indicates suboptimal or exploratory behavior.
    
    \item \textbf{Joint ISR excluding agent \(i\), \(\boldsymbol{\rho}_t^{-i}\)}: This term considers the collective behavior of all agents except \(i\), capturing how other agents' actions influence the environment. The joint ISR \(\boldsymbol{\rho}_t\) is computed as the product of all agents’ individual ISRs, i.e.,
    \[
    \boldsymbol{\rho}_t = \prod_{j \in \mathcal{N}} \frac{\pi_t^j}{\mu_t^j}
    \]
    The exclusion of agent \(i\) in \(\boldsymbol{\rho}_t^{-i}\) ensures that the effect of other agents’ policies is considered without overemphasizing their contribution in the truncation mechanism.
\end{enumerate}

The \textbf{DT-ISR} weight balances these two components by truncating the joint ISR at a factor \(\eta\), which serves to prevent the joint dynamics from interfering too strongly with agent \(i\)’s contribution. The function \(\min(\eta, \boldsymbol{\rho}_t^{-i})\) ensures that the impact of other agents is moderated while maintaining the individual ISR’s importance. This enables the agent to focus more effectively on its own actions without being overwhelmed by the collective behavior of other agents.

\subsection*{Impact on Learning}

\begin{itemize}
    \item \textbf{When \(\rho_t^i\) is large (close to 1 or greater)}: As \(\rho_t^i\) increases, the term \(\lambda \min(1, \rho_t^i)\) becomes more significant, giving greater weight to agent \(i\)’s own actions. This increases the agent's control over its learning process, making the learning more sensitive to its policy rather than the behavior of other agents. However, as the agent becomes too focused on its own policy, it might fail to generalize well to changes in other agents' policies, leading to suboptimal learning in dynamic multi-agent settings.
    
    \item \textbf{When \(\boldsymbol{\rho}_t^{-i}\) is large}: If the joint ISR (excluding agent \(i\)) is large, it indicates that the actions of other agents are heavily influencing the environment. The \(\min(\eta, \boldsymbol{\rho}_t^{-i})\) term helps mitigate the effect of large ISR values from other agents by truncating them to \(\eta\), thus ensuring that the agent’s updates remain focused on its own policy rather than on the joint dynamics of all agents. This helps stabilize learning in environments where other agents' behavior may be changing rapidly, thus improving robustness and reducing variance.
\end{itemize}

The introduction of the constant \(\eta\) plays a crucial role in preventing large fluctuations in the weight due to other agents' policies. By limiting $\boldsymbol{\rho}_t^{-i})$ to a value less than or equal to \(\eta\), the DT-ISR prevents instability caused by sudden shifts in other agents' behaviors while still allowing for the preservation of agent \(i\)'s policy dynamics.

\section{Implementation Details}
\label{appendix:implementation}

This section involves three key components of our implementation: value estimation, policy learning, and off-policy sample reuse. Each component is designed to leverage the theoretical advantages of the $\mathcal{R}^i$ operator and GPAE. A brief flow structure can be found in the Fig. \ref{fig:structure}.  The code for GPAE is
available at \texttt{\href{https://github.com/kim-seongmin/GPAE}{https://github.com/kim-seongmin/GPAE}}.

\subsection{Value Estimation}

The value network takes three inputs: the state $s$, joint action of other agent $\boldsymbol{a}^{-i}$, and the action probabilities from the policy $\pi_\theta$ and observation $o^i$. Importantly, instead of directly learning the state-action value function $Q(s, \boldsymbol{a})$, the network indirectly estimates it by substituting the action $a^i$ with $\pi_\theta(\cdot|o^i)$. This design choice is underpinned by the contraction property guaranteed by the $\mathcal{R}^i$ operator.

\begin{figure}[h]
    \centering
    \includegraphics[width=\linewidth]{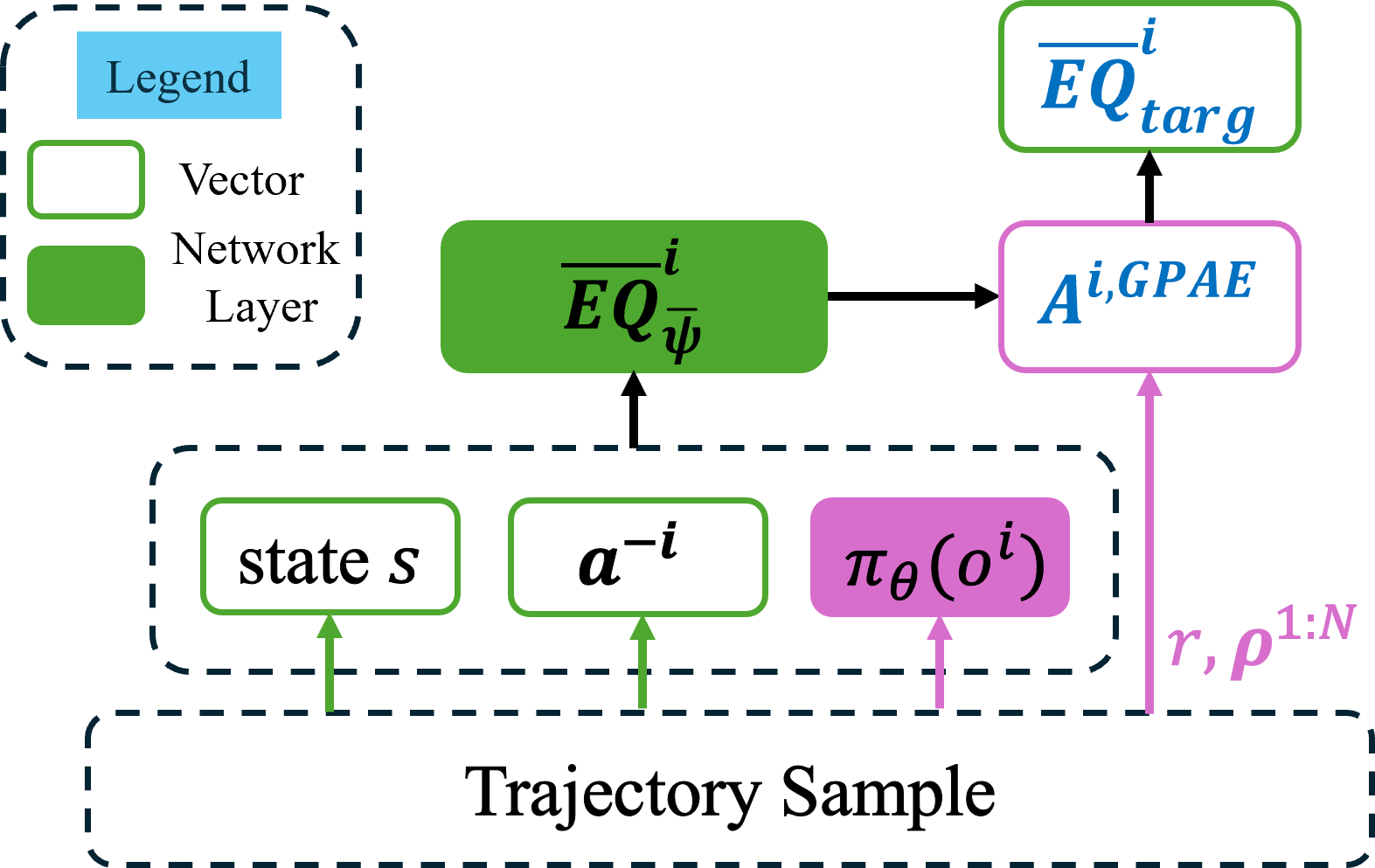}
    \caption{Brief overview of GPAE framework}
    \Description{Brief overview of GPAE framework}
    \label{fig:structure}
\end{figure}

We construct separate hidden layers for each pair of state and marginalized action dimensions. These layers are concatenated in the hidden dimension to produce the final output. The first layer $\psi_1$ represents a linear network layer with latent for state as output, $\psi_2$ is a layer for marginalized action, and $\psi_3$ exports $\overline{EQ}^i_{\overline{\psi}}$ as the final output through GRU, with concatenation of the previous two features.
To ensure a fair comparison, the hidden layer size for each state-action pair is halved, so that after concatenation, the total hidden layer size matches the original configuration.

The value function is optimized using the following loss function:
\begin{equation}
\label{eq:Qloss}
    L(\psi) = \sum_{\tau \in \mathcal{B}} (\mathbb{E}_{\pi^i}[\overline{EQ}^i_{\psi}] - \overline{EQ}_{targ}^i)^2,
\end{equation}
where $\mathcal{B}$ represents the replay buffer, and the target value $\overline{EQ}_{targ}^i$ is explicitly given   by 
\begin{equation}
\label{eq:value_estimation}
    \overline{EQ}_{targ}^i = \mathbb{E}_{\pi^i}[\overline{EQ}^i_{\overline{\psi}}] + \overline{\rho_t^i}\hat{A}_t^{i,GPAE},
\end{equation}

where $\overline{\rho_t^i} = \min(1, \rho_t^i)$ is a bounded individual ISR for stable learning.

\subsection{Off-policy Sample Reuse and Policy Learning}

Although our framework can be extended to general MAPG methods, we configured the policy learning structure identical to MAPPO for consistency and fair comparison. The policy optimization process employs a clipped surrogate objective from MAPPO \cite{mappo}, while using $\hat{A}_t^{i,\text{GPAE}}$ for the advantage function.

To maximize sample efficiency, the proposed framework incorporates off-policy sample reuse strategies. This approach aligns closely with methods employed in several single-agent off-policy PPO frameworks\cite{disc,geppo,offpolicyppo}. For off-policy sample reuse, we consider the behavioral old policy $\pi_\text{old}$ and current target policy $\pi_\text{curr}$ separately. 

We use the following off-policy policy learning loss:
\begin{multline}
\label{eq:policy_loss}
L_\theta = \mathbb{E}_{\tau \sim \mathcal{B}} \biggl[ \min \biggl( \rho_t^i \hat{A}_t^{i,\text{GPAE}}, 
\text{clip}(\rho_t^i, \rho^i_{t,\text{old}}(1 - \epsilon), \\
\rho^i_{t,\text{old}}(1 + \epsilon)) \hat{A}_t^{i,\text{GPAE}} \biggr) \biggr],
\end{multline}
where $\rho_t^i = \pi^i_\theta/\pi^i_\text{old}$ represents the ISR between the updating policy and the old policy, and $\rho^i_{t,\text{old}}=\pi^i_\text{curr}/\pi^i_\text{old}$ denotes the ratio between the current policy and the old policy.




\newpage

\section{Environment Details}
\label{appendix:env_details}
\subsection{SMAX}
SMAX is a JAX-based, hardware-accelerated reimplementation of the StarCraft Multi-Agent Challenge (SMAC) that removes the dependency on the StarCraft II engine. Unlike SMAC, which is bound to CPU-based simulation with limited parallelism, SMAX supports efficient GPU acceleration, significantly improving simulation speed and scalability.

The environment consists of multiple cooperative agents engaging in combat against an adversarial team, where the goal is to eliminate all opponents. Agents operate under partial observability, receiving local observations of nearby units within a predefined sight range. These observations include ally/enemy health, positions, actions, and unit types.

SMAX supports a discrete action space with movement and attack actions, similar to SMAC, but features more configurable dynamics and a flexible scenario design. Notably, the built-in heuristic AI for adversaries is decentralized and reactive, agents pursue visible enemies and maintain consistent targeting behavior across steps. This contrasts with SMAC's globally conditioned or static scripted agents, making SMAX more suitable for self-play and adversarial robustness studies.

The reward structure in SMAX is also rebalanced: 50\% of the total return is given for damaging enemies, and 50\% is awarded for winning the match, making performance metrics more interpretable and stable across different scenarios.

Table~\ref{tab:smax_tasks} summarizes the available scenarios, including direct ports of SMAC(v1) maps and several stochastic setups inspired by SMACv2.

\begin{table}[h]
\centering
\caption{Evaluated tasks in SMAX. These scenarios involve discrete actions and cooperative combat against adversarial teams.}
\label{tab:smax_tasks}
\begin{tabular}{lll}
\toprule
\textbf{Task Name} & \textbf{Ally Units} & \textbf{Enemy Units} \\
\midrule
2s3z & 2 Stalkers, 3 Zealots & 2 Stalkers, 3 Zealots \\
3s5z & 3 Stalkers, 5 Zealots & 3 Stalkers, 5 Zealots \\
5m\_vs\_6m & 5 Marines & 6 Marines \\
10m\_vs\_11m & 10 Marines & 11 Marines \\
3s5z\_vs\_3s6z & 3 Stalkers, 5 Zealots & 3 Stalkers, 6 Zealots \\
6h\_vs\_8z & 6 Hydralisks & 8 Zealots \\
smacv2\_5\_units & 5 Random Units & 5 Random Units \\
smacv2\_10\_units & 10 Random Units & 10 Random Units \\
\bottomrule
\end{tabular}
\end{table}

\subsection{MABrax}
MABrax is a continuous control multi-agent environment built on top of Brax \cite{brax}, extending the concept of Multi-Agent MuJoCo \cite{facmac} to fully leverage JAX's acceleration capabilities. Each agent controls an individual joint of a shared robot (e.g., Ant, HalfCheetah, Humanoid), and they must coordinate their actions to complete locomotion tasks efficiently.

The continuous action space requires fine-grained control and synchronization between agents, highlighting the challenge of accurate credit assignment. The centralized value function is crucial here to guide each agent's learning toward coherent joint behaviors. Partial observability is implemented by restricting each agent's perception to its own joint state and neighboring components, encouraging policies that rely on decentralized coordination.

Unlike single-agent Brax environments, MABrax introduces multi-agent learning dynamics where poor credit assignment can easily destabilize training. Thus, it serves as an ideal testbed for evaluating credit assignment techniques under continuous, partially observable settings.

\section{Additional Experimental Results}
\label{appendix:experiments}

\subsection{Task-wise Learning Curve}
\label{appendix:taskwise_performance}

Fig.~\ref{fig:onpolicy_smax} presents the learning curves for each task in the main results.

\begin{figure*}[b]
\centering    
    \includegraphics[width=0.8\textwidth]{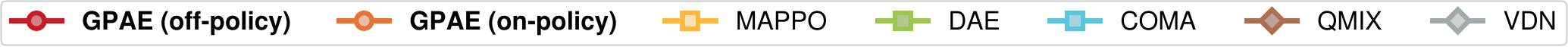}
    
    \centering\includegraphics[width=0.25\textwidth]{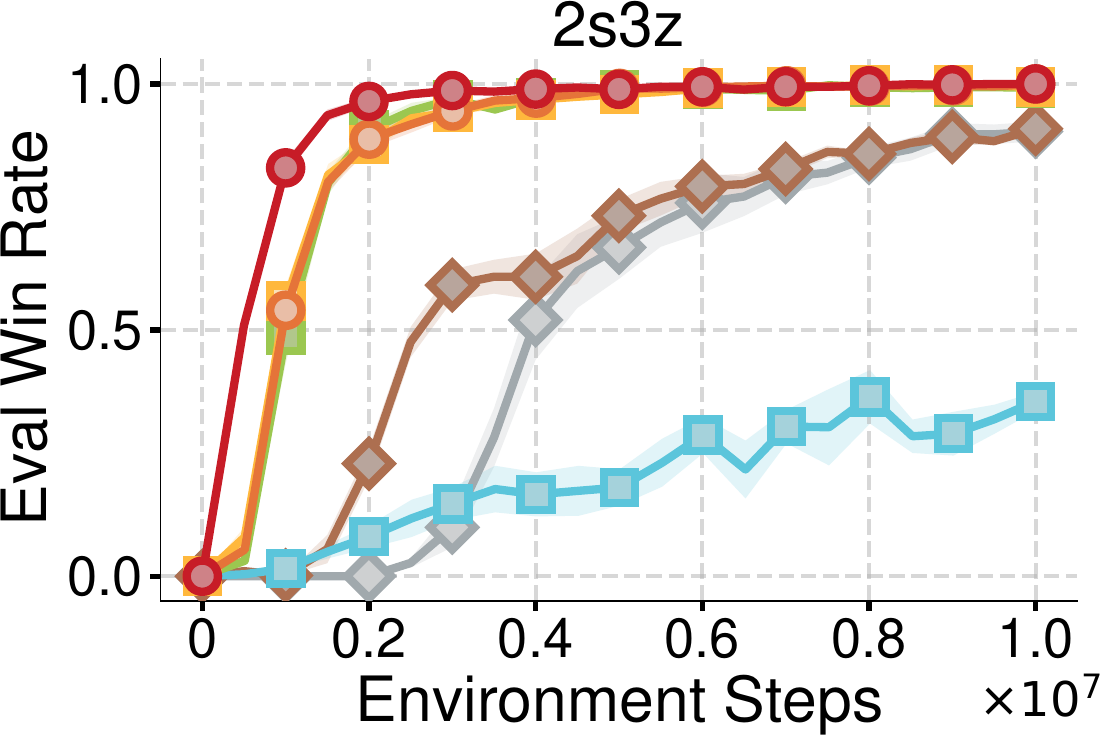}%
    \centering\includegraphics[width=0.25\textwidth]{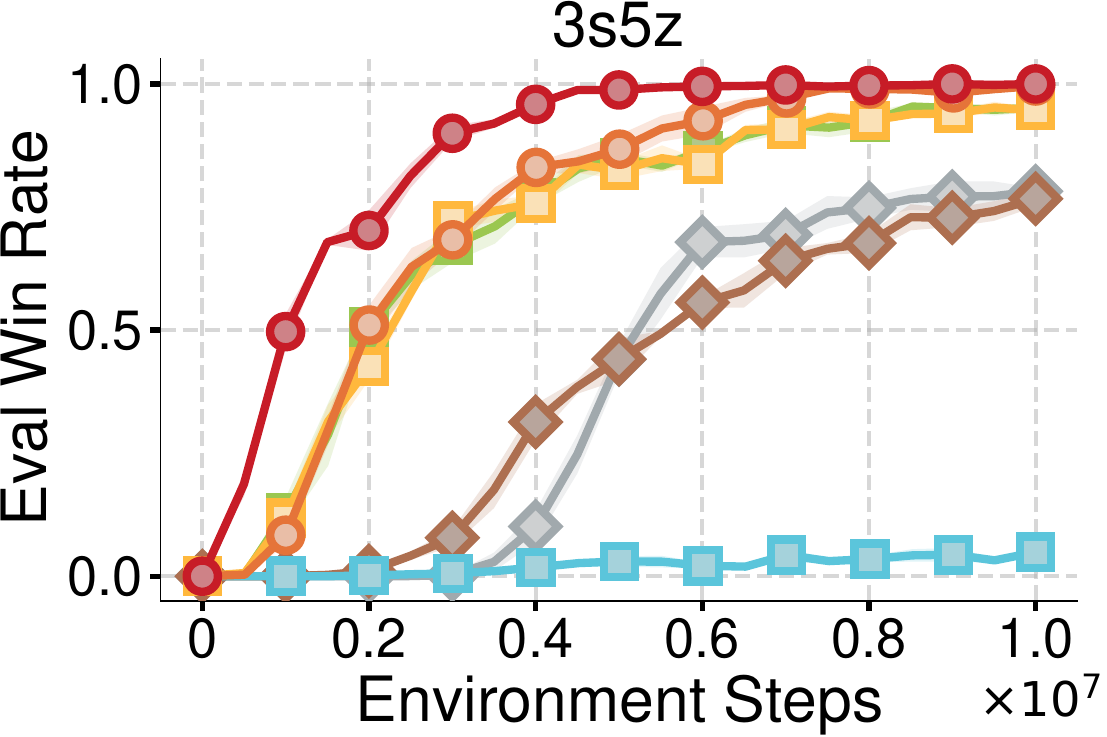}%
    \centering\includegraphics[width=0.25\textwidth]{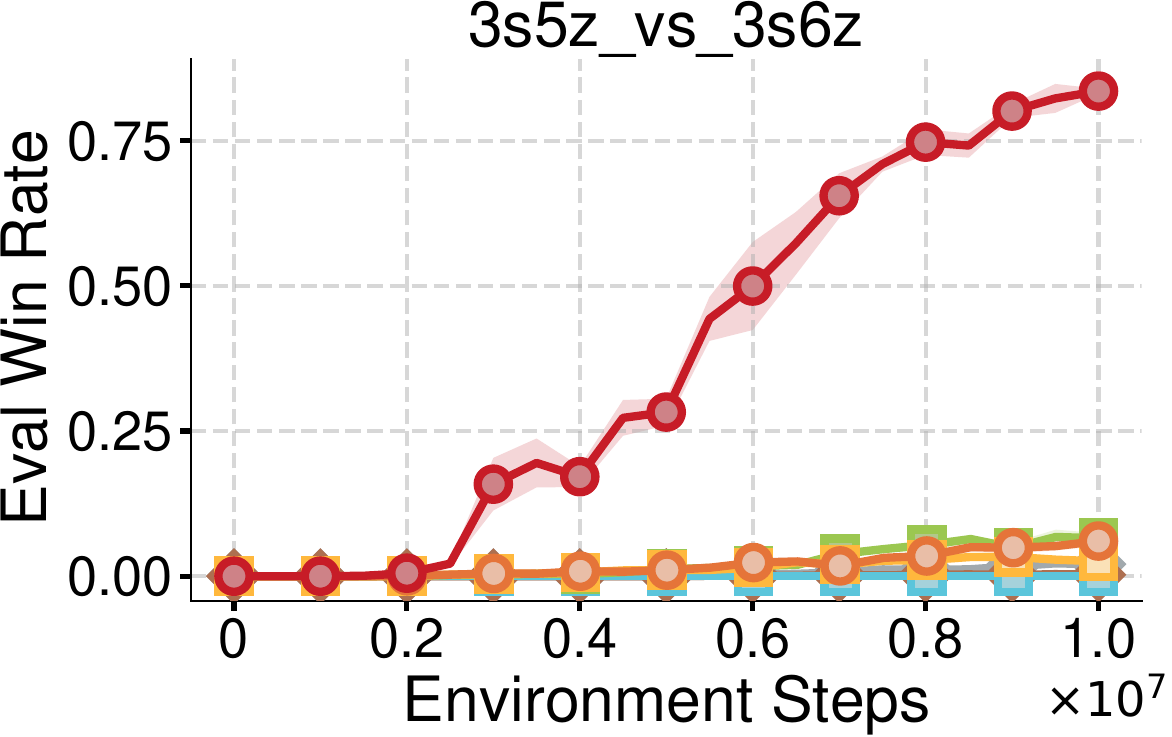}%
    \centering\includegraphics[width=0.25\textwidth]{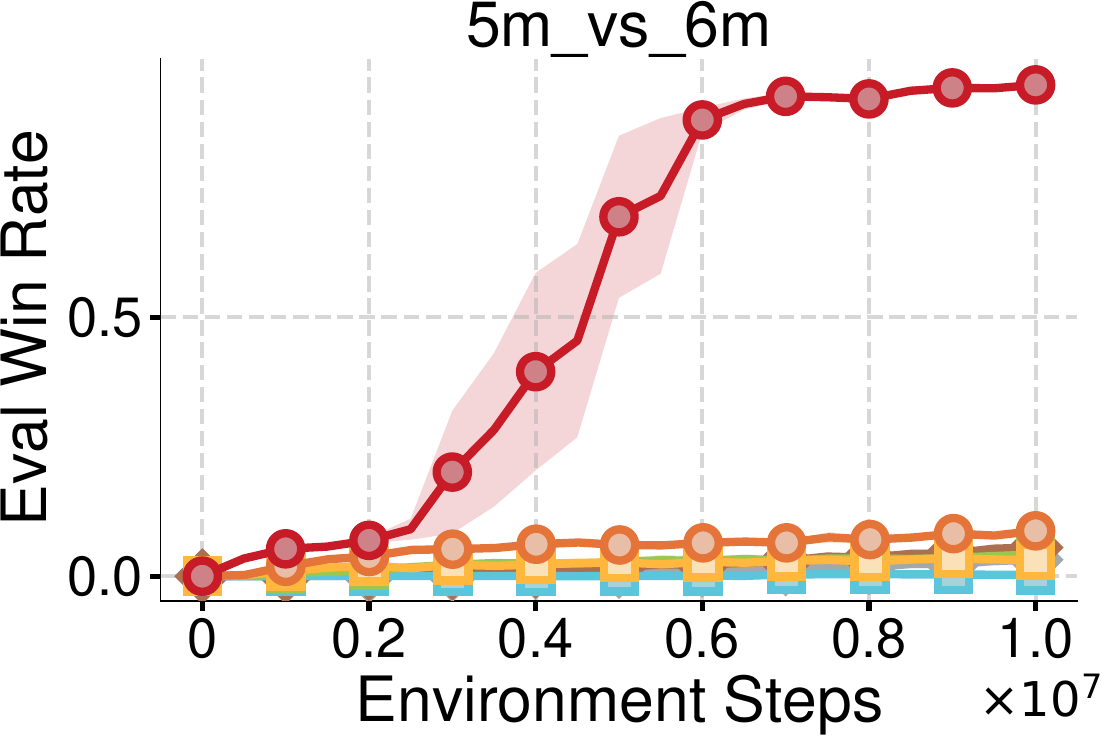}
    
    \centering\includegraphics[width=0.25\textwidth]{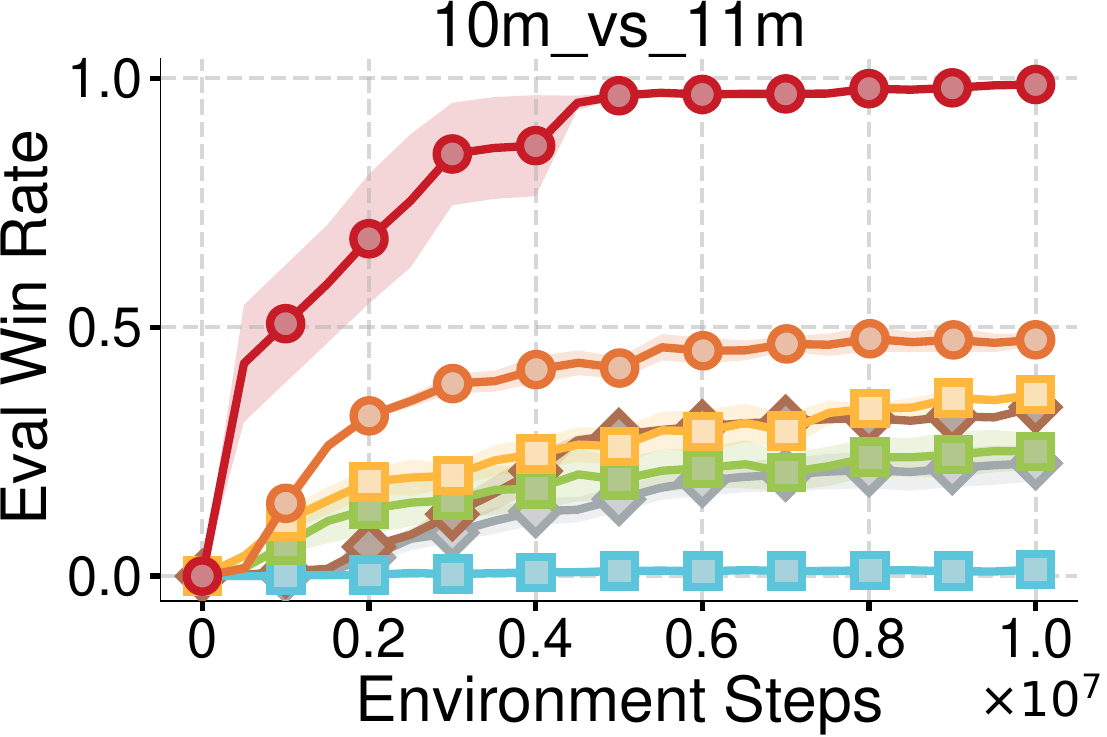}%
    \centering\includegraphics[width=0.25\textwidth]{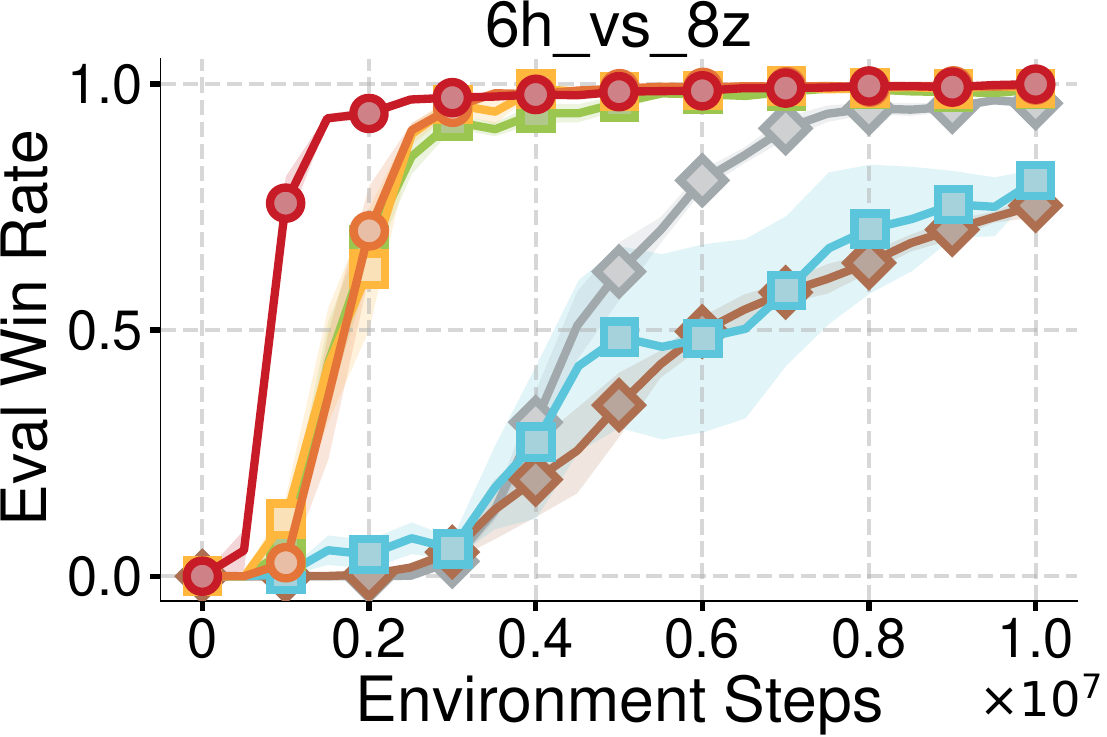}%
    \centering\includegraphics[width=0.25\textwidth]{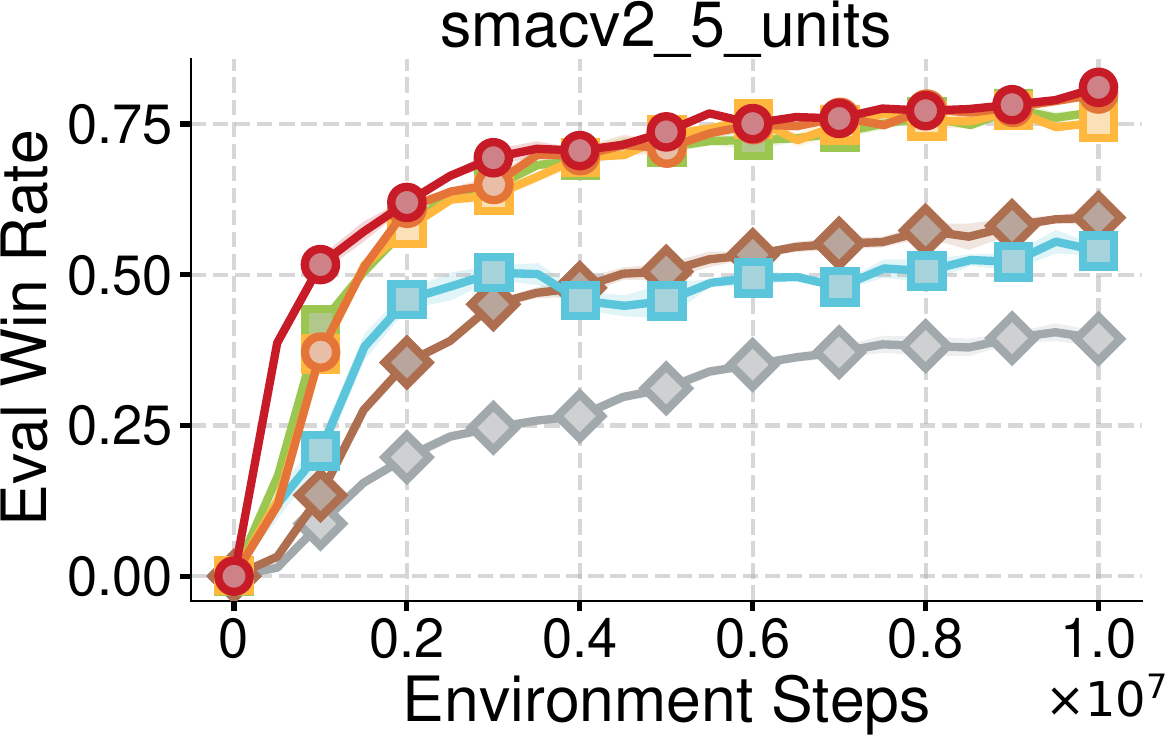}%
    \centering\includegraphics[width=0.25\textwidth]{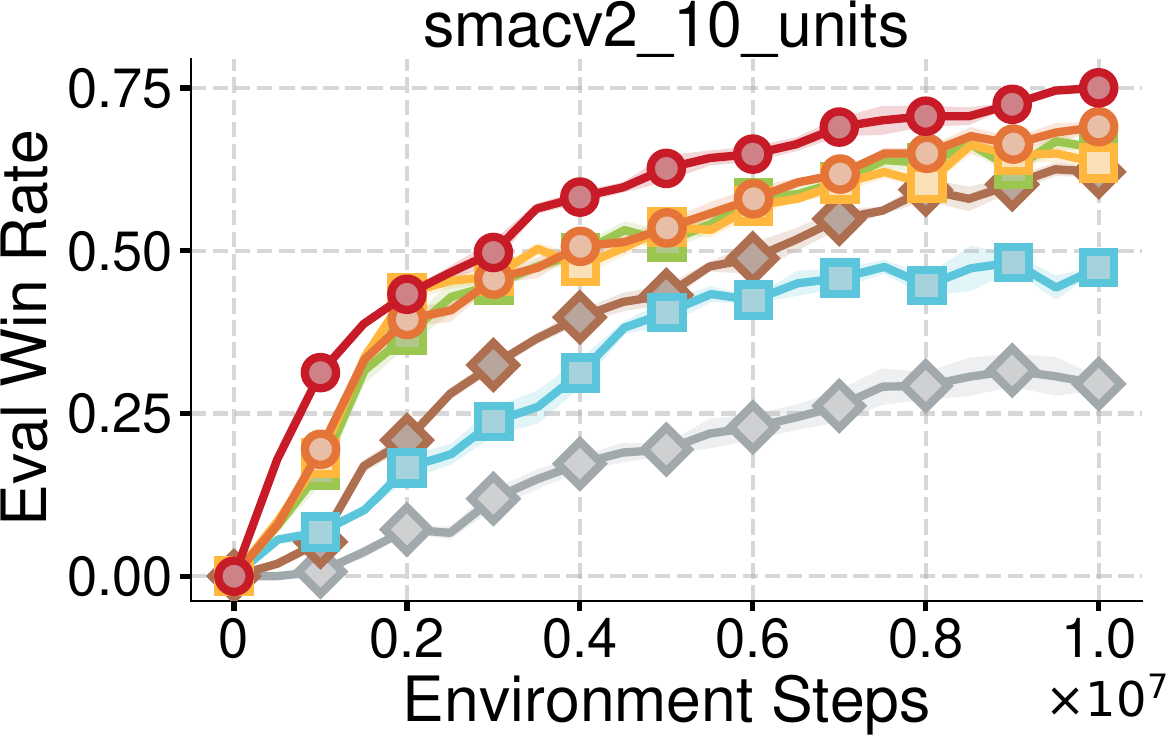}%
    
    \centering\includegraphics[width=0.25\textwidth]{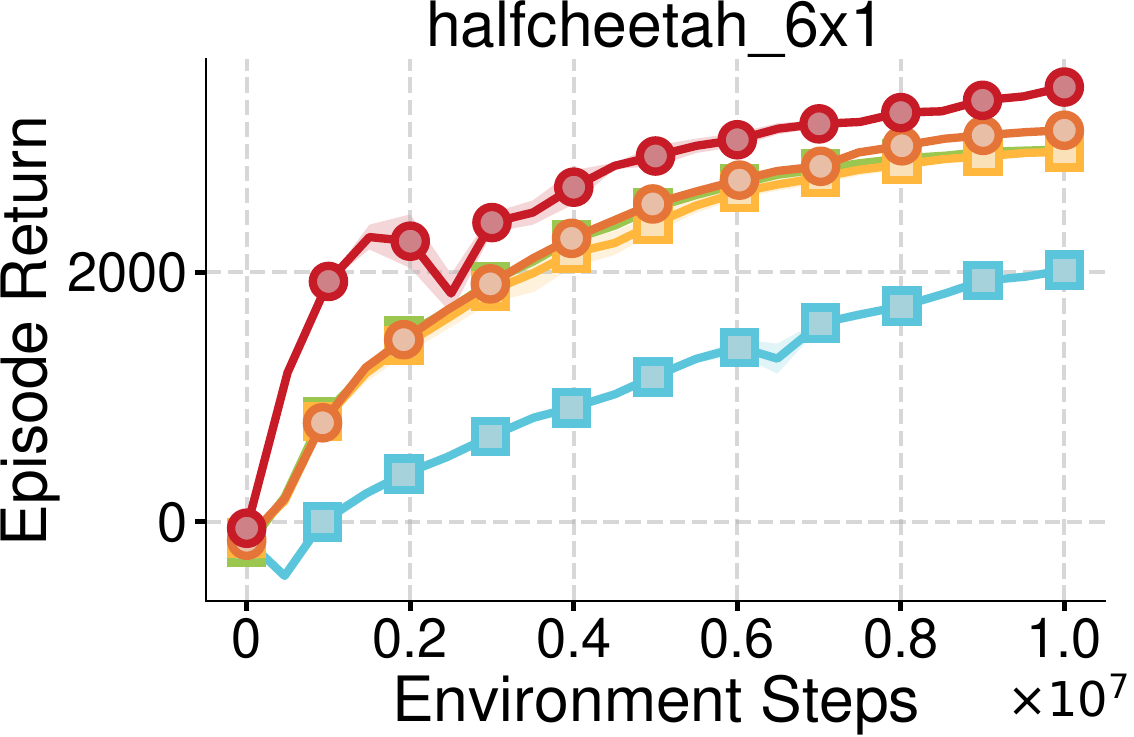}%
    \centering\includegraphics[width=0.25\textwidth]{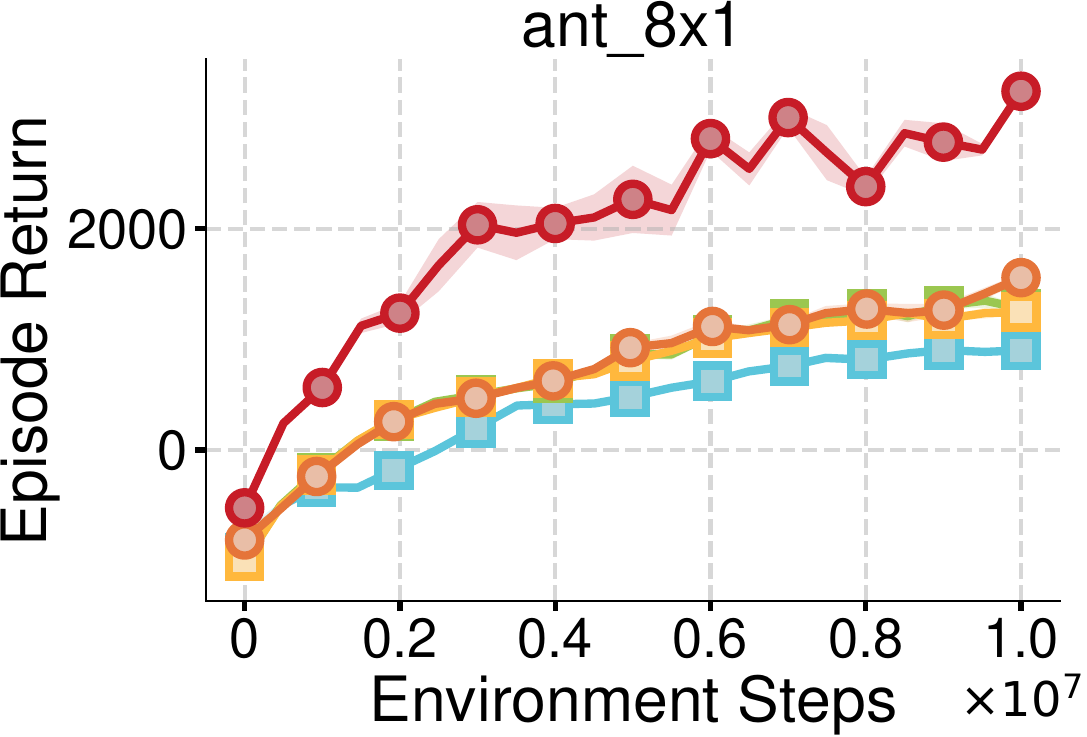}%
    \centering\includegraphics[width=0.25\textwidth]{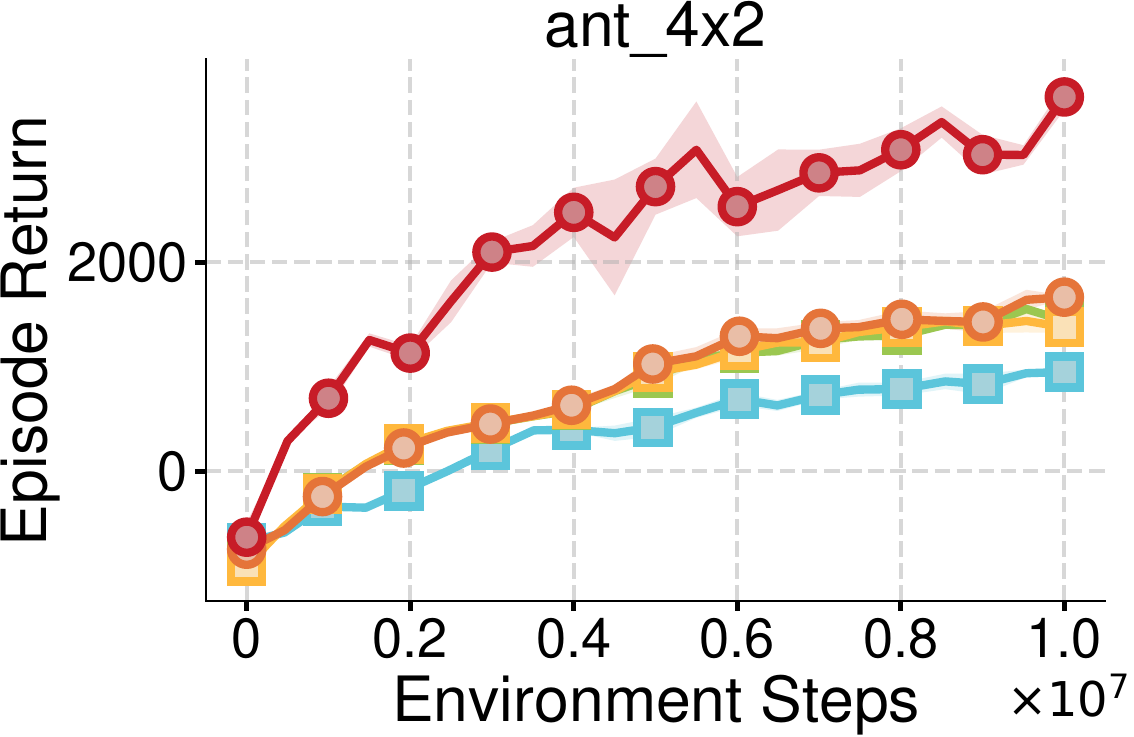}
    
    \centering\includegraphics[width=0.25\textwidth]{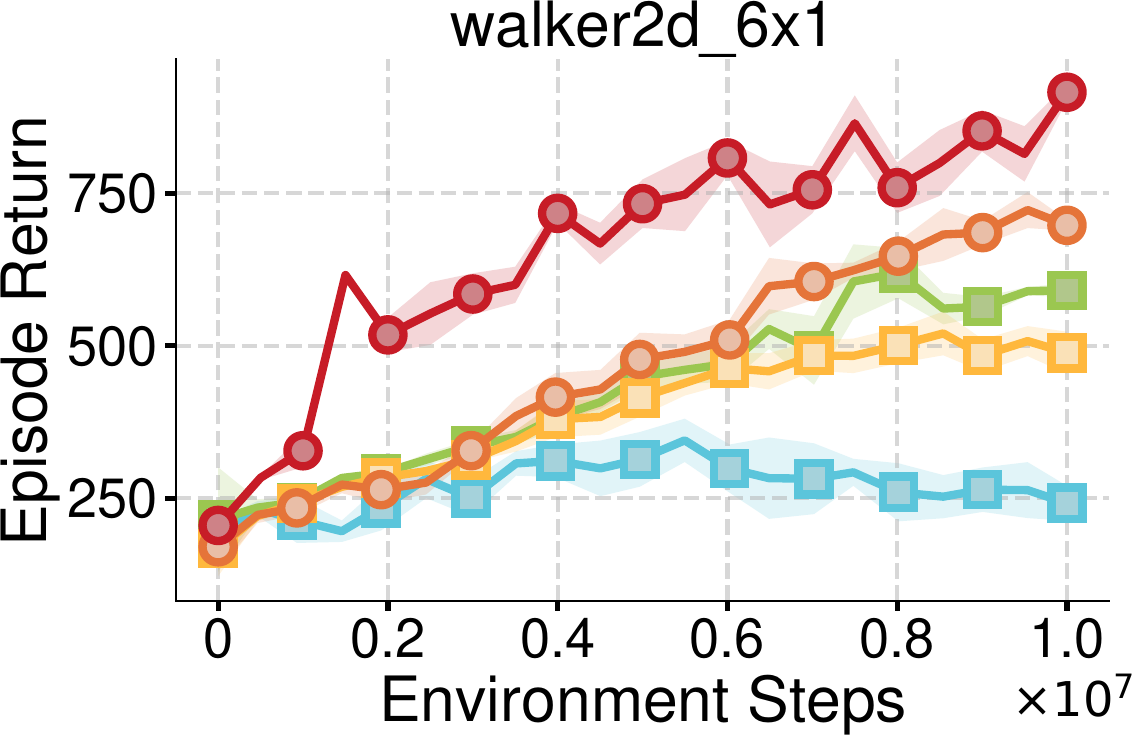}%
    \centering\includegraphics[width=0.25\textwidth]{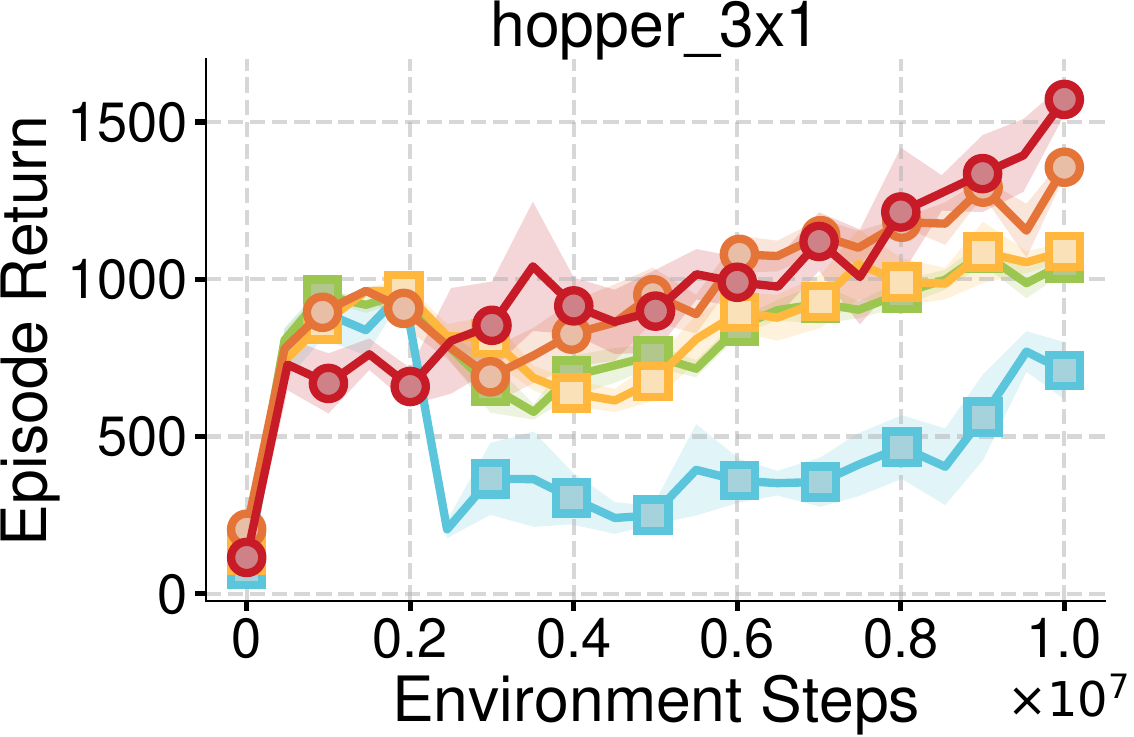}%
    \centering\includegraphics[width=0.25\textwidth]{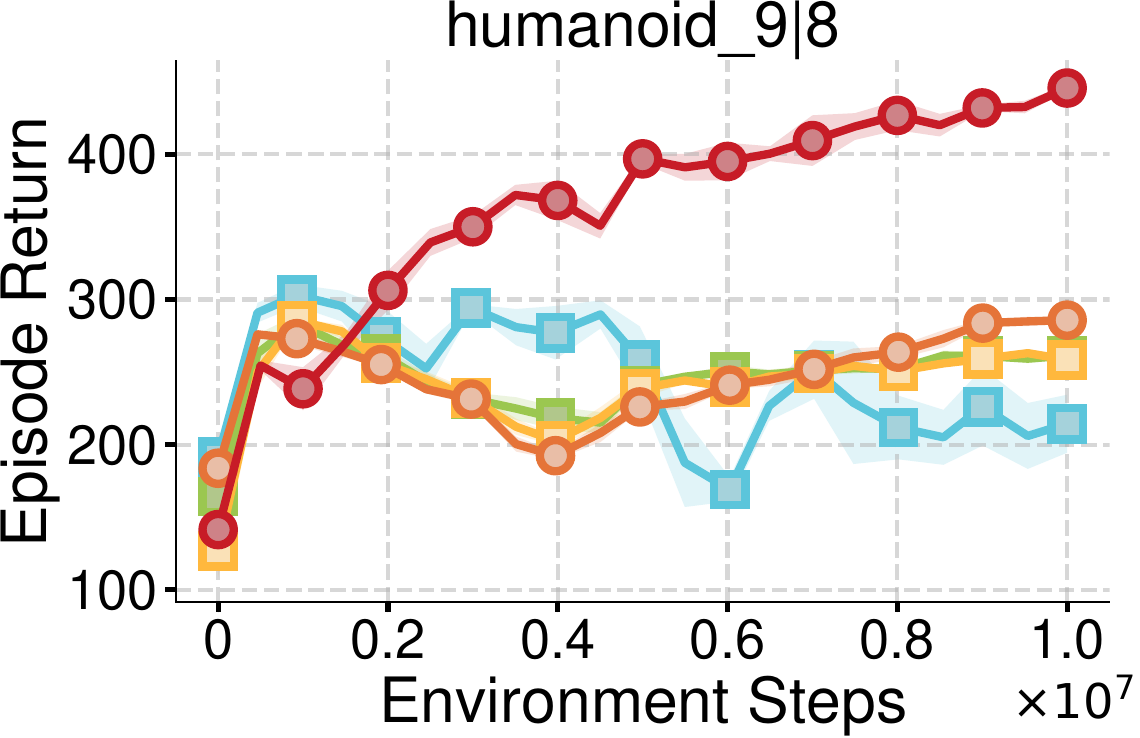}
    
\caption{Figures represent the learning curve of the average win rate performance for 6 tasks on SMAX, and the learning curve of the average episode return performance for 6 tasks on MABrax.}

\Description{Figures represent the learning curve of the average win rate performance for 6 tasks on SMAX, and the learning curve of the average episode return performance for 6 tasks on MABrax.}

\label{fig:onpolicy_smax} 
\end{figure*}

\subsection{Additional Ablations}
To complement the main ablation results on GPAE, we further examine the effect of different truncation strategies when applied to other advantage estimators, namely GAE~\cite{gae} and DAE~\cite{dae}. All variants are trained with the same off-policy sample reuse protocol used in Section~\ref{sec:experiment}, ensuring a consistent comparison.

Table~\ref{tab:appendix_ablation} summarizes the results on SMAX-\texttt{5m\_vs\_6m}. With off-policy corrections, both GAE and DAE show moderate improvements in stability compared to their on-policy counterparts. However, their behaviors under different truncation schemes reveal clear contrasts.  

The consistent advantage of DT-ISR across GAE, DAE, and GPAE demonstrates that applying bounded and credit-sensitive off-policy weights is a general solution for stable multi-agent policy optimization. In all cases, DT-ISR enables off-policy reuse without sacrificing convergence or credit assignment quality.
 
\label{appendix:ablation}
\begin{table}[h]
\centering
    \centering
    \caption{Ablation study on DT-ISR.}
    \begin{tabular}{l|c|c|c}
        \hline\hline
        method & \textbf{GPAE} & GAE & DAE \\
        \hline
        \textbf{DT-ISR} & \textbf{93.7} & 35.1 & 29.4  \\
        ST-ISR & 44.4 & 10.9 & 9.7  \\
        IT-ISR & 58.6 & 9.6 & 14.5 \\
        No Correction & 34.5 & 10.6 & 10.8 \\
        \hline\hline
    \end{tabular}
    \label{tab:appendix_ablation}
\end{table}

\newpage







\newpage
\section{Hyperparameter Settings of the Experiments}

\label{appendix:hyperparameter}

\begin{table}[h]
\centering
\caption{Hyperparameter settings of PPO and Off-Policy PPO in the SMAX and MABrax environment}\vspace{0.5em}
\label{tab:common_parameter_ppo}
\renewcommand{\arraystretch}{1.05} 
\begin{tabular}{l|rr}\hline\hline
\textbf{Parameter} & SMAX & MABrax \\ \hline
\# training timesteps & $1 \times 10^7$ & $1 \times 10^7$ \\
\# parallel environments & $128$ & $64$ \\
\# rollout steps & $128$ & $128$ \\
Adam learning rate & $5\times10^{-4}$ & $1\times10^{-3}$ \\
Anneal learning rate & True & True \\
Update epochs & $5$ & $5$ \\
Clip range & $0.2$ & $0.2$ \\
$\gamma$ & $0.99$ & $0.99$ \\
$\lambda$ & $0.95$ & $0.95$ \\
$\eta$ (GPAE) & $1.05$ & $1.05$ \\
$\beta$ (DAE) & $0.5$ & $0.5$ \\
Entropy coefficient & $0.01$ & $0.01$ \\
Activation & relu & relu \\
\# reuse batches (off-policy) & $4$ & $4$ \\
\# hidden layers & $2$ & $2$ \\
layer width & $128$ & $128$ \\
\hline\hline
\end{tabular}
\end{table}
\begin{table}[h]
\centering
\caption{Hyperparameter settings of QMIX and VDN in the SMAX environment}\vspace{0.5em}
\label{tab:common_parameter_value}
\renewcommand{\arraystretch}{1.05} 
\begin{tabular}{l|r}\hline\hline
\textbf{Parameter} & SMAX \\ \hline
\# training timesteps & $1 \times 10^7$ \\
\# parallel environments & $16$ \\
\# rollout steps & $128$ \\
Adam learning rate & $5\times10^{-5}$ \\
Anneal learning rate & True \\
Buffer size & $5\times10^3$ \\
Buffer batch size& $32$ \\
$\gamma$ & $0.99$ \\
$\epsilon$ (start-decay-finish) & $1.0$-$0.1$-$0.05$ \\
Hidden size & $128$ \\
Update epochs & $8$ \\
\textbf{QMIX specific} &\\
Mixed embedding width & $64$ \\
Mixer hypernet width & $256$ \\
Mixer initial scale & $1\times10^{-3}$ \\
\hline\hline
\end{tabular}
\end{table}